\theoremstyle{definition}
\newtheorem{definition}{Definition}[section]
\newtheorem{theorem}{Theorem}[section]
\newtheorem{lemma}{Lemma}[section]
\begin{document}

\def\crta{\vrule height1.41ex depth-1.27ex width0.34em}
\def\dj{d\kern-0.36em\crta}
\def\Crta{\vrule height1ex depth-0.86ex width0.4em}
\def\Dj{D\kern-0.73em\Crta\kern0.33em}
\dimen0=\hsize \dimen1=\hsize \advance\dimen1 by 40pt

\definecolor{darkgreen}{rgb}{0.1,0.6,0.3}
\font\1=cmss8
\font\2=cmssdc8
\font\3=cmr8
\font\4=cmss7

% Use the \preprint command to place your local institutional report
% number in the upper righthand corner of the title page in preprint mode.
% Multiple \preprint commands are allowed.
% Use the 'preprintnumbers' class option to override journal defaults
% to display numbers if necessary
%\preprint{}

\title{Arbitrarily Exhaustive Hypergraph Generation of 4-, 6-, 8-,
  16-, and 32-Dimensional Quantum Contextual Sets}

\author{Mladen Pavi\v ci\'c}

\email{mpavicic@physik.hu-berlin.de}

\affiliation{Department of Physics---Nanooptics, 
Faculty of Math. and Natural Sci.~I,
Humboldt University of Berlin, Germany and\\
Center of Excellence for Advanced Materials
and Sensing Devices (CEMS), Photonics and Quantum Optics Unit,
Ru{\dj}er Bo\v skovi\'c Institute, Zagreb, Croatia}

\date{\today}

\begin{abstract}
Quantum contextuality turns out to be a necessary resource for
universal quantum computation and important in the field of quantum
information processing. It is therefore of interest both for
theoretical considerations and for experimental implementation to
find new types and instances of contextual sets and develop methods
of their optimal generation. We present an arbitrary exhaustive
hypergraph-based generation of the most explored contextual
sets---Kochen-Specker (KS) ones---in 4, 6, 8, 16, and 32 dimensions.
We consider and analyse twelve KS classes and obtain numerous
properties of theirs, which we then compare with the results
previously obtained in the literature. We generate several thousand
times more types and instances of KS sets than previously known.
All KS sets in three of the classes and in the upper part of a fourth 
are novel. We make use of the MMP hypergraph language, algorithms, and 
programs to generate KS sets strictly following their definition from
the Kochen-Specker theorem. This approach proves to be particularly
advantageous over the parity-proof-based ones (which prevail in the
literature), since it turns out that only a very few KS sets have a
parity proof (in six KS classes \textless\ 0.01\% and in one of
them 0\%). MMP hypergraph formalism enables a translation of an
exponentially complex task of solving systems of nonlinear
equations, describing KS vector orthogonalities, into a
statistically linearly complex task of evaluating vertex states of
hypergraph edges, thus exponentially speeding up the generation of
KS sets and enabling us to generate billions of novel instances of
them. The MMP hypergraph notation also enables us to graphically
represent KS sets and to visually discern their features.
\end{abstract}

\pacs{03.67.-a, 03.67.Ac, 03.65.-w, 03.65.Aa, 03.65.Ta}
% insert suggested keywords - APS authors don't need to do this
%\keywords{}

\maketitle

\section{Introduction}
\label{sec:intro}

An assumed property of a classical system is that any of its 
measurements has values independent of other compatible 
measurements that might have been carried out on the system 
previously or counterfactually simultaneously, i.e., that the 
values are predetermined. The property is called the 
non-contextuality. This is in contrast to quantum mechanical 
systems whose measurements might be contextual, i.e., dependent 
on the context of previous or counterfactually simultaneous 
measurements. Such property of quantum systems is called 
(quantum) contextuality. 

The so-called Kochen-Specker (KS) sets provide constructive 
proofs of quantum contextuality and therefore provide
straightforward blueprints for their implementation and
experimental setups. KS sets are likely to find applications
in the field of quantum information, similarly to ones recently
found for the Bell setups in implementing entanglements
\cite{heinbriegel04,cabello-moreno-09}. The assumption is
supported by a recent result of A.~Cabello \cite{cabello-10},
according to which local contextuality can be used to reveal
quantum nonlocality.

Along that road, it has been most recently ``demonstrate[d] that
\dots\ contextuality is the source of a quantum computer's
power'' \cite{bartlett-nature-14}. In particular, Howard, Wallman,
Veitech, and Emerson \cite{magic-14} ``uncover a remarkable
connection between the power of quantum computers and \dots
contextuality'' \cite{bartlett-nature-14} and prove that 
``contextuality is a necessary resource for universal quantum
computation via magic state distillation'' \cite[p.~354]{magic-14}. 
(``The way of initializing the quantum bits [by means of] \dots
superposition \dots is called {\em magic}'' \cite{bartlett-nature-14}.) 
The scheme of Howard {\em et al.\/} \cite{magic-14} has been
extended by Delfosse, Guerin, Bian, and Raussendorf so as to include
Wigner function negativity \cite{delfosse-raussendorf-15}. 

It has also been recently shown by Raussendorf that ``the
measurement-based quantum computations which compute a nonlinear
Boolean function with a high probability are contextual''
\cite{raussendorf-13}. 

A contextual kind of quantum gates---indispensable ingredients of
quantum computational circuits---can be straightforwardly constructed
from the scheme which served Waegell and Aravind to build 4-dim
complex KS sets \cite{waeg-aravind-jpa-11}.

On the other hand, Pavi{\v c}i{\'c}, Mc{K}ay, Megill, and Fresl have
shown that KS sets can serve as a generator of a new kind of
lattices within Hilbert lattice representation of the Hilbert
space \cite[Fig.~8]{bdm-ndm-mp-fresl-jmp-10}, where a Hilbert lattice
is an algebra underlying every Hilbert space. In addition, Megill
and Pavi{\v c}i{\'c} have shown how new generalized orthoarguesian
equations---the only known equations, apart from the orthomodularity
equation itself, holding in the algebra of closed subspaces of a
Hilbert space---can be generated from KS sets \cite{mp-7oa}. 

Another quantum information contextual KS set application is
a quantum cryptography protection, as outlined by Cabello,
{D'A}mbrosio, Nagali, and Sciarrino \cite{cabello-dambrosio-11}.
It has even been shown by Nagata that the KS theorem is a precondition
for secure quantum key distribution (QKD) in the sense that in
each QKD protocol KS non-contextuality is violated
\cite{nagata-05}.

A series of KS experiments have been carried out during the last
ten years. They were implemented for 4-dim systems with photons 
\cite{simon-zeil00,michler-zeil-00,amselem-cabello-09,liu-09,d-ambrosio-cabello-13,ks-exp-03},
neutrons
\cite{h-rauch06,cabello-fillip-rauch-08,b-rauch-09}
trapped ions \cite{k-cabello-blatt-09}, and 
molecular nuclear spins in the solid states \cite{moussa-09},
for 6-dim systems via six path possibilities for
the photon transmission through a diffractive aperture
\cite{lisonek-14,canas-cabello-14}, and for 8-dim systems by means
of the linear transverse momentum of single photons transmitted by
diffractive apertures addressed in spatial light
modulators \cite{canas-cabello-8d-14}. 

The aforementioned role of contextual sets in  ``supply[ing] `magic'
for quantum computation'' \cite{magic-14} would require numerous
instances of contextual sets and here KS sets as the most numerous
contextual sets are likely to have an important role in designing
appropriate schemes for implementations and applications.
Then, in order to test different quantum gates for KS sets we
should be able to engineer sufficiently large number of vectors
for them, i.e., KS sets of different complexities. For
constructing new algebraic structures and equations for the
Hilbert space we should also have an arbitrary increasing number of
KS sets as explicitly shown in \cite{mp-7oa}.
Finally, it is of theoretical significance to know the structure,
features, and sizes of various KS sets. Taken together, it is
important to find new classes and new instances of non-redundant
non-isomorphic KS sets as well as different coordinatizations
for them. It is also of importance to design algorithms and programs
with the help of which we can generate an arbitrary number of
different KS sets. 

In this paper, we describe the discovery of large numbers (billions
of them) of critical non-redundant non-isomorphic KS sets in 4-, 6-, 
8-, 16-, and 32-dim Hilbert spaces. ``Critical'' means that they are
minimal in the sense that a removal of any $n$-tuple of mutual
orthogonalities, of $n$ vectors from an $n$-dim Hilbert space, turns
a KS set into a non-KS set. In other words, they represent a KS setup
that has no redundancy.

We describe the features of KS sets within particular KS classes
which emerge as we generate the sets. We also outline patterns
of distribution and generation and compare them with the other
methods of generation in the literature. For instance, huge
blocks of KS sets and even whole classes of KS sets turn out to
be completely invisible with the latter methods. 

The paper is organised as follows.

In Sec.~\ref{sec:for} we provide the reader with a constructive
version of the KS theorem, define KS sets as well as the critical
KS sets, define the parity proof for KS sets, and present the
formalism, algorithms, and programs we make the use of in the paper.

In Secs.~\ref{sec:2424}, \ref{sec:6074}, \ref{sec:60105},
\ref{sec:300675}, and \ref{sec:300675} we deal with KS sets in
4-dim Hilbert space. 

In Sec.~\ref{sec:2424} we review the oldest KS class, the 24-24 one,
which is actually a subclass of the 60-105 class we
introduce in Sec.~\ref{sec:60105}. In Sec.~\ref{sec:6074} we obtain
three orders of magnitude more sets from the class 60-74 then in our
previous paper \cite{mfwap-s-11} and 15 times more than reported in
other literature (the class is also known as the 60-75 and/or 600-cell
based KS class); we denoted the 60-74 class as ``tentative'' in the
title of the section because it is particular subclass of the class
300-675 we introduce in Sec.~\ref{sec:300675}. In Sec.~\ref{sec:60105}
we elaborate on the 60-105 class defined by means of Pauli operators
for two qubits in the complex Hilbert space and obtain ca.~2.5 more
types of KS criticals and $3\times 10^4$ more instances of them than
known from the literature. In Sec.~\ref{sec:300675} we
analyze the recently discovered highly complex and extremely interwoven
300-675 class and find an important subclasses which have been
completely overlooked in the literature at the higher end of the class.
In Sec.~\ref{sec:4dim-witt} we generate a completely new class of
ca.~250,000 KS criticals from the so-called Witting's master set,
recently found in the literature; none of the criticals has a parity
proof and therefore all the obtained sets from class are completely
invisible in the standard approach via parity-proof-based algorithms
and programs.

In the 6-dimensional Hilbert space, the so-called ``seven context''
star-like 21-7 KS critical set has recently been discovered and a
challenge was issued to find bigger 6-dim KS sets in response to which
we in Sec.~\ref{sec:6dim} generate $3.7\times 10^6$ 6-dim KS criticals
in the novel 236-1216 class; all but 8 of the criticals lack a parity
proof; we also show that the vector components of the seven-context-star
KS set can be simplified and that the set itself is not contained in the
latter class. 

In Sec.~\ref{sec:8dim} we generate ten times more types of and KS sets
themselves, from the Lie algebra E8 based 120-2024 master set, than
previously achieved in the literature---due to the very low number of
the parity proofs (0.1\permil); we also construct a real star-like
KS critical set and show that it is not contained in the 120-2024 class.

In Sec.~\ref{sec:16dim} we enter a sparsely charted territory of
16-dim 4-qubit KS sets and generate ca.~$2.5\times 10^6$ more sets and
ca.~70 more types of sets than known from the literature from an 80-265
master set.

In Sec.~\ref{sec:32dim} we generate ca.~$2.5\times 10^5$ more instances
and ca.~153 more types of 32-dim 5-qubit KS criticals (from a 160-661
master set) than known from the literature.

In Sec.~\ref{sec:3d} we revisit the only four known 3-dim KS criticals
and show that recently spotted 13-vector set does not prove the 
Kochen-Specker theorem. 

In Sec.~\ref{sec:discussion} we discuss and compare, both
mutually and with those in the literature, all the KS sets we generated.

In Appendix \ref{app:0} we give all hypergraph strings we refer to
in the main body of the paper.

In Appendix \ref{app:B} we provide the reader with chosen KS
critical sets from most of the types from all classes we considered.

\section{KS Sets, MMP Hypergraphs, Formalism, Algorithms,
  and Programs}
\label{sec:for}

Our aim is to present new results in the realm of contextual setups
and KS sets, methods that served us to generate them, and we introduce
formalism and representation that enable us to handle them. The input
and output data are extremely massive and numerous and they contain all
known (from the previous literature, including our own previous
papers) setups, sets, figures, hypergraphs, and diagrams as a very
special and tiny portion of the ones obtained here.
Before we dwell on details of the formalism we will make use of, we
briefly introduce contextuality vs.~non-contextuality features, quote
the KS theorem, and define a KS set.

The notion of non-contextuality of a system, whose observables we
measure after its passing through a device, boils down to a statement
that measurements of a system corresponds to predetermined values of
the observables during the interaction of the system with the device.
A stronger statement, which is usually called the KS theorem is that
non-contextual theories assume that a predetermined result of a
particular measurement of an observable of a system does not depend
on measurements simultaneously carried out on other observables of
the system, while quantum, contextual theories do not assume any
predetermined values for outcomes of measurements, {\em clicks\/},
0-1s, and might depend on simultaneous measurements.

\begin{theorem}\label{th:ks} ({\em Kochen-Specker} 
\cite{gleason,koch-speck,zimba-penrose})
In ${\cal H}^n$, $n\ge 3$, there are sets of $n$-tuples of mutually
orthogonal vectors to which it is impossible to assign 1s and 0s
in such a way that
\begin{enumerate}
\item No two orthogonal vectors are both
assigned the value 1;
\item In any group of $n$ mutually orthogonal vectors, not all of
the vectors are assigned the value 0.
\end{enumerate}
The sets of such vectors are called {\em KS sets\/} and the vectors
themselves are called {\em KS vectors\/}.
\end{theorem}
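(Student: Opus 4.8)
The plan is to recast the two coloring rules as a statement about frame functions, rule out the relevant frame functions by Gleason's theorem, and finally compress the resulting global impossibility into a finite witness. First I would observe that rules 1 and 2 together are equivalent to demanding that in every orthogonal $n$-tuple (basis) of $\mathcal{H}^n$ \emph{exactly} one vector carries the value $1$: rule 1 forces at most one $1$ per basis, since any two distinct basis vectors are mutually orthogonal, and rule 2 forces at least one. Thus a putative KS coloring is precisely a $\{0,1\}$-valued \emph{frame function} $f$ of weight one, i.e.\ a map on the rays with $\sum_{v\in B}f(v)=1$ for every orthonormal basis $B$.

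Next I would invoke Gleason's theorem \cite{gleason}, which is available precisely because $n\ge 3$. It asserts that every nonnegative frame function of weight one on $\mathcal{H}^n$ has the form $f(v)=\langle v|\rho|v\rangle$ for some density operator $\rho$. Such an $f$ is a polynomial, hence continuous, function of the unit vector $v$; since the unit sphere of $\mathcal{H}^n$ is connected, a continuous function taking only the two values $0$ and $1$ must be constant. A constant value $c$ would give $\sum_{v\in B}f(v)=nc=1$, i.e.\ $c=1/n\notin\{0,1\}$, a contradiction. Therefore no $\{0,1\}$-valued frame function exists, so no assignment satisfying rules 1 and 2 can be defined on the whole set of rays.

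It remains to turn this global obstruction into a \emph{finite} set of vectors, which is what the theorem actually claims. Here I would use a compactness argument: equip $\{0,1\}^{R}$, where $R$ is the set of all rays, with the product topology, so that it is compact by Tychonoff's theorem. Each rule-1 constraint (for an orthogonal pair) and each rule-2 constraint (for a basis) cuts out a clopen subset, and the previous step says these closed sets have empty intersection. By the finite intersection property some finite subfamily already has empty intersection; the rays occurring in that subfamily form a finite set $S$, grouped into finitely many complete orthogonal $n$-tuples, that admits no valid $0$--$1$ assignment. This $S$ is the desired KS set, and since Gleason's theorem and the connectedness argument hold verbatim for every $n\ge 3$, the construction is uniform in the dimension.

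The main obstacle is really imported rather than internal: the whole weight of the argument rests on Gleason's theorem, whose reduction of frame functions to quadratic forms is the genuinely hard analytic input, and which is exactly what fails in dimension two. A secondary subtlety is that the compactness step is non-constructive---it certifies that a finite KS set exists but produces no explicit one. To exhibit concrete finite sets (and later to verify their minimality and criticality) one instead fixes an explicit finite family of rays, such as the original configuration of \cite{koch-speck} or the economical one of \cite{zimba-penrose}, and checks directly that every $0$--$1$ assignment violates rule 1 or rule 2; automating precisely this finite orthogonality verification over (MMP) hypergraphs is the method developed in the remainder of the paper.
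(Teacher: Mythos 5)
Your proof is correct, but it takes a genuinely different route from the paper's. The paper never argues via Gleason's theorem: it treats Theorem~\ref{th:ks} as established in the cited literature and supports it constructively, by exhibiting explicit finite collections of $n$-tuples (e.g., the 18-9 hypergraph in Fig.~\ref{fig:24-24-class}(a), or the original 192-118 Kochen-Specker set) and verifying directly---either by exhaustive 0--1 valuation with the program {\tt states01}, or by the parity argument of Def.~\ref{def:parity} (each vertex of the 18-9 set shares exactly two edges, so the number of edges containing a 1 would have to be even, yet each of the 9 edges must contain exactly one 1)---that conditions 1 and 2 cannot be satisfied. Your argument instead derives a global impossibility from Gleason's theorem (a $\{0,1\}$-valued weight-one frame function would be continuous on the connected unit sphere, hence constant, hence equal to $1/n\notin\{0,1\}$) and then extracts a finite witness by Tychonoff compactness and the finite intersection property. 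This buys uniformity in the dimension (all $n\ge 3$ at once) and brevity, at the price of non-constructiveness: it produces no explicit set, no size bound, and imports the heavy analytic content of Gleason's theorem---precisely what the paper's hypergraph machinery is built to avoid, since its goal is concrete, implementable, computer-verifiable sets. One small wrinkle in your compactness step: the finite subfamily may contain rule-1 constraints on orthogonal pairs that do not sit inside any complete $n$-tuple of the witness, so the witness is not automatically ``grouped into finitely many complete orthogonal $n$-tuples'' as you assert; this is repaired either by running the compactness argument over the basis constraints $\sum_{v\in B}f(v)=1$ alone (legitimate by your own first observation, since a violation of rule 1 by two orthogonal vectors extends to a violation on any basis completing them), or by completing each offending pair to an orthonormal basis, which only adds constraints and keeps the intersection empty.
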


Any KS set defined for a quantum system provides a constructive proof
of the KS theorem and of the contextuality of quantum mechanics.
A collection of related measurements provides an experimental
verification of the theorem. 

Within quantum mechanics we can formalize KS set properties in the
following manner. To every quantum observable of a quantum system
there corresponds a linear Hermitian operator in a Hilbert space and
to every state of the system associated to the observable there
corresponds an eigenvector of the operator in the same space. The
result of a measurement of the observable is associated with the
eigenvalue of the operator. Any KS set is represented by a
collection of $n$-tuples of mutually orthogonal (eigen)vectors
from $n$-dim Hilbert spaces. 

In this paper we consider 3-, 4-, 5-, 6-, 8-, 16-, and 32-dim KS
sets. They can be implemented in a laboratory in two different ways.
By means of qubits in an $n$-dim (where $n=2^k$, where $k$ is a
natural number $\ge 2$) Hilbert space
${\cal H}^n={\cal H}^2\otimes\cdots(k)\cdots\otimes{\cal H}^2$
and by means of  spin-{$\frac{n-1}{2}$} systems. The examples of
the former way are KS sets in 4-dim ${\cal H}^4$ by means of 2
qubits from the class 60-105 in Sec.~\ref{sec:60105} and from the
24-24 class \cite{pmm-2-10} in Sec.~\ref{sec:2424}, in 8-dim
${\cal H}^8$ by means of 3 qubits, or in 16- and 32-dim spaces
via 4 and 5 qubits in Secs.~\ref{sec:16dim} and \ref{sec:32dim},
respectively. The examples of the latter way are 4-dim 60-74 class
in Sec.~\ref{sec:6074}, 6-dim star/triangle set and the 236-1216
class in Sec.~\ref{sec:6dim}, and the star/triangle set in
Sec~\ref{sec:8dim}. In our hypergraphs approach, the calculational
treatment of and the elaboration on all classes are the same, though.
Only experimental implementations differ and we will discuss them
when needed.

General formalism of $n$-dim ($n\ge 3;\ n\in\mathbb{N}$)
KS sets and their implementation via spin-{$\frac{n-1}{2}$} particles
(say via, e.g., generalized Stern-Gerlach devices with a simultaneous
usage of magnetic and electric fields by means of which it is possible
to generate an arbitrary spin state \cite{anti-shimony}), covers any
possible experimental implementation in contrast to qubit approach
which covers only $n$-dim$=2^k$-dim cases ($k\in\mathbb{N},\ k\ge 2$). 

We represent KS sets by hypergraphs in the {\em MMP hypergraph\/}
notation specified below. In a KS set, the vectors correspond to
vertices of an MMP hypergraph. Vertices representing $n$-tuples of
orthogonal eigenvectors are organized in edges of MMP hypergraphs
\cite{pmmm04a-arXiv}.

\begin{definition}\label{def:mmp}MMP hypergraphs are hypergraphs in 
which
\begin{enumerate}
\item[(i)] Every vertex belongs to at least one edge;
\item[(ii)] Every edge contains at least 3 vertices;
\item[(iii)] Edges that intersect each other in $n-2$
         vertices contain at least $n$ vertices.
\end{enumerate}
\end{definition}

A KS set with $n$ vertices and $m$ edges is denoted as $n$-$m$.  

Only minimal KS sets, called {\em critical\/} KS set are relevant 
for experimental implementations since their supersets just contain
additional orthogonalities that do not change the KS property of
the smallest critical set. 

\begin{definition}\label{def:crit} 
KS sets that do not properly contain any KS subset, meaning
that if any of its edges were removed, they would stop being KS 
sets, are called {\em critical\/} KS sets. 
\end{definition}

Some authors make use of a coarser notion of
{\em [vertex-]critical\/} KS sets: ``A KS [set] is termed critical
iff it cannot be made smaller by deleting the [vertices]''
\cite{ruuge12}. However, this definition lacks operationality
in identifying a huge number of critical sets which turn into a
non-KS set when an edge of theirs is removed while the number of
vertices remains unaltered as allowed by Def.~\ref{def:crit}. On
the other hand, deleting a vertex means a removal of at least one
edge.  

%\font\1=cmss8
%\font\2=cmssdc8
%\font\3=cmr8
%\font\4=cmss7

We encode MMP hypergraphs by means of alphanumeric and other
printable ASCII characters. Each vertex is denoted by one of the
following characters: {{\tt 1 2 \dots 9 A B \dots Z a b 
\dots z ! " \#} {\$} \% \& ' ( ) * - / : ; \textless\ =
\textgreater\ ? @ [ {$\backslash$} ] \^{} \_ {`} {\{}
{\textbar} \} \textasciitilde} \cite{pmm-2-10}. When all these
characters are exhausted we reuse them so as to prefix them by `+',
then by `++', and so on. An example is shown in the graphical
representation of a hypergraph of KS set 18-9 in the figure in
Sec.~\ref{sec:2424}, where ASCII characters printed next to
corresponding vertices from the hypergraph belong the MMP hypergraph
string {\tt 1234,4567,789A,ABCD,DEFG,GHI1,29BI,35CE,68FH.}
So encoded, MMP hypergraphs are generated by our algorithms and
programs or introduced into our programs to be processed. Each
edge is represented by a string of characters separated by
commas and all of them together form a hypergraph, i.e., a KS
set, as a single textual line/string which ends with a full stop.
When dealing with such ASCII line encoding of MMP hypergraphs
we call them MMP hypergraphs lines or strings  when needed.
The order of the strings and characters is irrelevant; gaps
in characters are allowed and its number is not limited; tens
of thousands of them are not a problem for our programs
{\tt shortd.c, mmpstrip.c, subgraph.c, vectorfind.c, states01.c}
and others
\cite{bdm-ndm-mp-1,pmmm05a,pmm-2-10,bdm-ndm-mp-fresl-jmp-10,mfwap-s-11,mp-nm-pka-mw-11}. 

To visualise the hypergraphs we represent them as figures showing
vertices as dots and edges as straight or curved lines each connecting
$n$-tuples of vertices. We often draw hypergraphs so as to start with
the biggest loop they contain. Usually we do not attach characters to
vertices in a figure because one can always arbitrary attach them and
then use program {\tt vectorfind} to ascribe vector components to each
vertex. In chosen figures in the following sections below we show
graphical representations of some of the KS sets that we found in
this study in the MMP hypergraph notation.

Our standard and compact definition of MMP hypergraphs enables
us to smoothly design algorithms for generation, handling, and
analysis of KS sets what together amounts to MMP hypergraph language.
In this work, we generate subgraphs of big chosen KS hypergraphs,
which we call {\em master sets\/}, by deleting a specified number of
edges from such master sets via our program {\tt mmpstrip}. Then we
filter them on the KS property via our program {\tt states01} which
just verifies whether they violate the conditions of the
Def.~\ref{th:ks}, i.e., whether they are KS sets. Program
{\tt states01} carries out an exhaustive search according to
a backtracking algorithm. This is a much less demanding task than a
constructive upward generation we used previously in
\cite{pmmm04a-arXiv} and although we have to deal with a huge volume
of hypergraphs, on computing clusters we can carry out generations
successfully, as we show in subsequent sections below.

A collection of all KS subsets of a particular master set $i-j$,
with $i$ vertices and $j$ edges, we call an $i-j$ {\em class\/} of
KS sets.

We can generate members of an $i-j$ class from a master set $i-j$,
on a computing grid, as follows. First, we strip edges from the
master set with {\tt mmpstrip} and then filter them with {\tt sed},
{\tt states01}, {\tt sed}, and {\tt shortd} to obtain, say, $m$
$k-l$ files, where $l=1,\dots,m$ and $k$ depends on $l$ in a
rather involved manner depending of how many vertices, if any, were
stripped together with stripped edges. Each file might contain
millions of KS sets (all with $l$ edges). Then, we use these files as
input files for the next round of distributed computing to randomly
generate a chosen number of non-isomorphic KS critical sets from
each line, i.e., each single KS set, of the obtained files by means of
{\tt states01} and {\tt shortd}. Thus we obtain arbitrary exhaustively
many $p-q$ KS criticals.

In elaborating on KS sets, we, as well as other authors in the
literature, make use of theory and algorithms from several disciplines:
quantum mechanics, lattice theory, graph theory, and geometry. Each
discipline has its own terminology which are often adopted by the
authors of papers on KS sets. In this context, the terms ``vertex,''
``atom,'' ``ray,'' ``1-dim subspace,'' and ``vector'' are synonymous,
as are the terms ``edge,'' ``block,'' ``context,'' and
``$n$-tuples (of mutually orthogonal vectors).''
Similarly, ``MMP hypergraph'' and ``MMP diagram'' mean the same
thing.

In the literature, KS sets are very often generated and tested
via the so-called {\em parity proof\/}. (The proof was generalised
by Lison\v ek, Raussendorf, and Singh \cite{lisonek-raus-14}.) 

\begin{definition}\label{def:parity} 
{\em Parity proof.} A parity proof of the KS theorem, in en
even-dimensional Hilbert space, via a KS set is a set of $k$ vertices
of the set that form $l$ edges ($l$ odd) such that each vertex shares
an even number of edges. Looking, e.g., at the 18-9 KS set in
Fig.~\ref{fig:24-24-class}, we see that, because each vertex shares
exactly two edges, there should be an even number of edges with 1s.
At the same time, each edge can contain only one 1 by definition, and
since there are an odd number of edges, there should also be an odd
number of edges with 1s, i.e., we have a contradiction. 
\end{definition}

Parity proofs face several problems, though. 

\begin{itemize}
\item KS sets with even number of edges cannot have parity proofs
per definition; 
\item Many KS sets with odd number of edges turn out 
not to have a parity proof, either; 
\item In some classes of KS sets we obtained, less than 0.1\permil\ 
have a parity proof, and in some others, none at all.

\end{itemize}

Parity proofs are just special and particular cases of our general
MMP hypergraph verification but sometimes they turn out to offer a
complementary method of generation of KS sets since-parity-proof
based programs are much faster than general MMP hypergraph based
ones, when applicable. 

\section{\label{sec:2424}Tentative 24-24 Class of 4-dim 
KS Sets; Generation of KS Sets via Stripping 
of Master Sets}

In this section we shall make use of the results about the 24-24 
class of 4-dim KS sets we obtained in 
\cite{mporl02,pmmm05a,pmmm04b,pmm-2-10,pavicic-book-05} to introduce
the main steps and strategy we shall undertake to obtain the
results in the subsequent sections. 

In 2002 Pavi\v ci\'c \cite{mporl02} realised that one can establish 
a correspondence between MMP hypergraphs and systems 
of nonlinear equations describing mutual orthogonalities of vectors 
as, for instance, in the following 3-dim example  
\begin{eqnarray}
{\mathbf x}\cdot{\mathbf y}&=x_1y_1+x_2y_2+x_3y_3\ &=0,\nonumber\\
{\mathbf x}\cdot{\mathbf z}&=x_1z_1+x_2z_2+x_3z_3\ &=0,\nonumber\\
{\mathbf y}\cdot{\mathbf z}&=y_1z_1+y_2z_2+y_3z_3\ &=0. 
\label{eq:mmp-eq}
\end{eqnarray}
The latter system is an unsolvable problem on any supercomputer, even
for the smallest KS sets while ascribing 0-1 valuations, required by
the definition of a KS set given in Theorem \ref{th:ks}, to vertices of
MMP hypergraphs is a problem of statistically polynomial complexity. 
In other words, solving for MMP hypergraphs is exponentially more
computationally efficient than solving for Hilbert space vectors
directly when searching for KS sets. Such a correspondence between
nonlinear systems and MMP hypergraphs enables us to generate KS sets
on a large scale (billions of them). This can be compared with less
than a dozen of KS sets discovered by several researchers between
1967 and the end  of the 20th century
\cite{koch-speck,peres,kern-peres,kern,bub,cabell-est-96a,aravind-600},
mostly exploring highly symmetrical geometrical structures defined 
by mutually orthogonal vectors. 

In 2004 Pavi\v ci\'c, McKay, Merlet, and Megill 
\cite{pmmm05a,pmmm04b} generated non-isomorphic MMP 
hypergraphs and filtered them by means of a program which 
was written for our algorithm of assigning 0s and 1s to their 
vertices and another algorithm for assigning vector components 
to vertices. The generation and assignments are exponentially 
complex tasks in general but applied to our KS MMP hypergraphs    
they turned out to be polynomially complex for the great majority 
of jobs. We say that they are {\em statistically polynomially 
complex\/}. Nevertheless, when we reached 24 vertices, 
the task became forbiddingly CPU-time consuming---we obtained 
over 300 KS sets with up to 23 vertices on a cluster with on 
average 100 CPUs running for several months. Among them there
were only 5 critical KS sets. So, we started to search for
another way of generating KS sets. We arrived at the idea of a 
faster generation as follows. 

Kernaghan \cite{kern} and Cabello, Estebaranz and 
Garc{\'\i}a-Alcaine \cite{cabell-est-96a} realised that their
18-9 and 20-11 KS sets were subsets of Peres' 24-24 set 
\cite{peres} but since they did not make use of graphical 
representation it took them a while to find their two sets and 
neither they nor Peres were able to find any more KS subsets 
the 24-24 set (Peres even wrote a computer program for the 
purpose \cite{peres-book}). 

After we generated the first few hundred KS sets in 
\cite{pmmm05a,pmmm04b} and started to draw their hypergraphs 
we visually recognised---see Fig.~\ref{fig:24-24-class}---that 
they were all subgraphs of the hypergraph we drew for Peres' 
24-24 set. Then Pavi\v ci\'c, Megill, and Merlet designed an 
algorithm for stripping ({\em peeling\/}) edges off the latter 
hypergraph and obtained 1232 KS subsets \cite{pmm-2-10} (including
all 6 criticals from Fig.~\ref{fig:24-24-class}) within less than 2
minutes on a PC. These 1233 KS sets form a 24-24 {\em class\/} of
KS sets and Peres' 24-24 set is their master set. (We would just
like to mention here that we generated and scanned, during 3
CPU-months, {\em all\/} non-isomorphic hypergraphs with 24 vertices
and 24 edges and that among all millions of them there is only one
KS set---Peres' 24-24 one.)

\begin{figure}[hbt]
\includegraphics[width=0.99\textwidth]{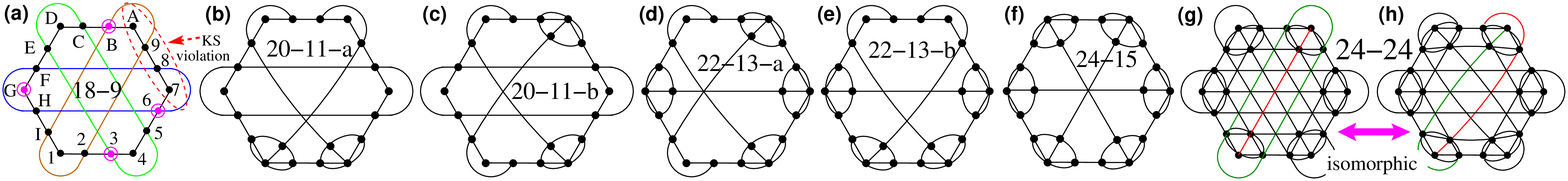}
\caption[4-dim KS Sets]{4-dim KS sets from the 24-24 KS class;
(a)-(f) are all critical KS sets from the class; (a) and (c) were
found in \cite{kern} and \cite{cabell-est-96a}, respectively; (b),
(d), and (e) were found in \cite{pmmm05a}; (f) was found in
\cite{pmm-2-10}; (g) and (h) are two isomorphic representation of
a non-critical Peres' 24-24 KS set found in \cite{peres}---it
contains all (a)-(f) as well as all the other 1226 KS sets from
the 24-24 class; (a) shows that its vertices cannot satisfy the
conditions of Theorem \ref{th:ks}; encircled vertices represent
possible 1-assignments.} 
\label{fig:24-24-class}
\end{figure}

This led us to another aspect of generating KS sets.
All vectors forming KS sets in the 24-24 class have components
from the set \{-1,0,1\} since Peres' 24-24 set has components 
from this class. However, we also found KS sets that were not 
subsets of the 24-24 set (e.g., the 22-11 one, shown in Fig.~3(a) 
of \cite{pmm-2-10}) and those subsets have the components from 
a wider set of values (see Table 2 of \cite{pmm-2-10} for the 
aforementioned 22-11 set). That indicated that there is  
another class or other classes which contain those sets 
or both kinds of sets and we started stripping master sets 
meanwhile discovered 
\cite{aravind-600,aravind10,waeg-aravind-jpa-11,waeg-aravind-megill-pavicic-11,mp-nm-pka-mw-11,waeg-aravind-megill-pavicic-11}. 

We designed algorithms and programs which exhaustively 
generate all KS sets from all stripped subsets of chosen master KS sets 
that we introduced and described in Sec.~\ref{sec:for}.
They are computationally rather demanding and require many CPU months 
of running on clusters and supercomputers but that is feasible 
with today's resources. In the rest of the paper we present various 
outcomes of such calculations with our algorithms and the features 
of the critical KS sets we obtained on our clusters.

\section{\label{sec:6074}Tentative 60-74 Class of 
4-dim KS Sets}

Waegell and Aravind have derived a 60-75 KS set from a 4-dim regular 
polytope (600-cell) with 60 pairs of vertices \cite{aravind10}. 
The vertices correspond to vectors whose 
components have values from the set 
${\cal V}=\{0,\pm(\sqrt{5}-1)/2,\pm 1,\pm(\sqrt{5}+1)/2,2\}$ and one 
can use them to write down the 60-75 set \cite[Table 2]{aravind10}.
MMP hypergraph of the 60-75 generated in \cite{mp-nm-pka-mw-11} is
given in Appendix \ref{app:0-1}.

Generation of smaller KS sets from the master sets will be carried
out by relying on the MMP hypergraph structure only and the vertices
of the obtained set can be ascribed values from $\cal{V}$ 
later on, if needed, via (a) our program {\tt vectorfind} randomly,
or (b) via our program {\tt subgraph} so as to trace down vertices
which survived stripping of edges. We need to ascribe values from
$\cal V$ to the vertices, e.g., for an experiment
(cf~\cite[Fig.~1]{waeg-aravind-megill-pavicic-11}). 

In 2011, Megill, Fresl, Waegell, Aravind, and Pavi\v ci\'c 
\cite{mfwap-s-11} presented preliminary and partial results of 
generating subsets of the 60-75 set by stripping it of its edges 
and obtaining their features. Here we present in many respects an 
almost exhaustive analysis of these subsets. 

We start by stripping just one edge at a time of the 60-75 set in
75 different ways so as to obtain seventy five 60-74 sets. To be more
explicit, we remove one edge from the 60-75 set to get the 1st 60-74,
then we put it back and remove another edge to the 2nd 60-74 and so
forth. It turns out that all 75 of the so obtained 60-74 sets are isomorphic
to each other and that they all reduce to a single MMP hypergraph
string 60-74 given in Appendix \ref{app:0-1}.

\parindent=20pt
We shall therefore consider this 60-74 KS set to be a master set for 
all smaller KS sets we obtain from it. Therefore, we shall call the 
collection not a 60-75 but a 60-74 class of 4-dim KS sets. The 
number of sets from the class we generated and analysed by running 
our programs over a century of CPU time on our clusters are given 
in Fig.~\ref{table:60-74}. The stripping technique applied to  
the sets means a removal of one edge at the time and filtering
out the KS sets with the help of several additional algorithms and
programs.

\begin{figure}[hbt]
\includegraphics[width=0.99\textwidth]{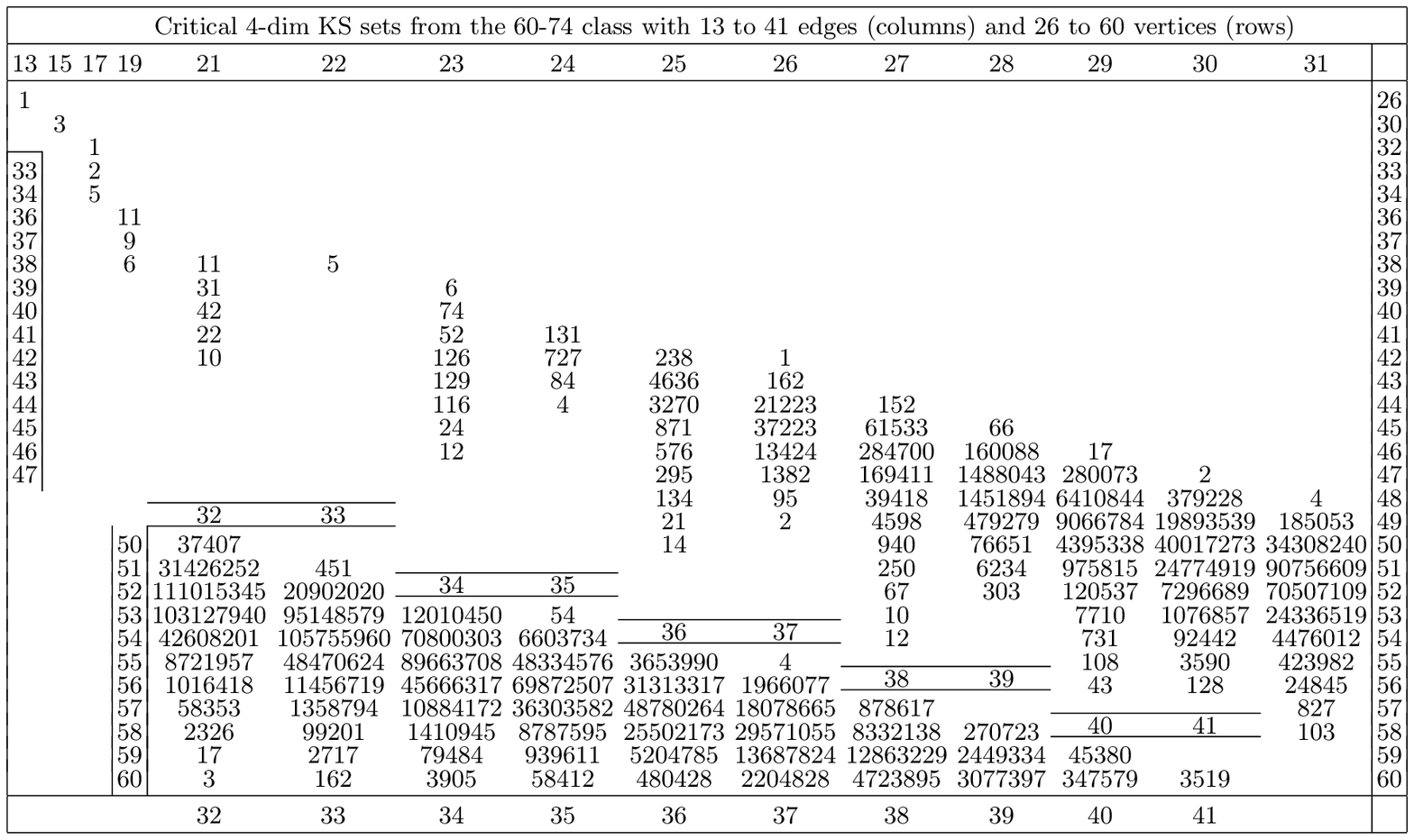}
\caption{List of 1,540,184,852 non-isomorphic KS critical sets from the 
60-74 class we obtained on our cluster. We conjecture that all possible 
types of vertex-edge sets are given here, i.e., that an exhaustive 
generation would not provide us with any new type. We also conjecture 
that an exhaustive generation might give up to about an order of 
magnitude more samples of these sets. We obtained no critical sets with
27, 28, 29, 31, or 35 vertices.}
\label{table:60-74}
\end{figure}

In \cite{mfwap-s-11} we obtained only about 8,000 KS sets and 
many were missing. Here we have $1.54\times 10^9$ sets and among them 
all types of sets that were obtained by means of much faster parity 
proofs and which were missing in \cite[Table 1]{mfwap-s-11} (denoted
there by $\otimes$). We also obtained new types of KS sets with both 
even (mostly) and odd number (23) of edges that we did not obtain in 
\cite{mfwap-s-11}, in particular:
38{\bf -22}, 39{\bf -23}, 41,43,44{\bf -24}, 
42,\dots,44,46,\dots,49{\bf -26}, 45,50,\dots,52{\bf -28}, 
47,48,54,\dots,56{\bf -30}, 50,56,\dots,60{\bf -32}, 60{\bf -34}.

Our aforementioned conjecture that the table in Fig.~\ref{table:60-74}
shows all the types of KS criticals from the 60-74 class is based
on the following statistics. The table now shows $1.54\times 10^9$
KS criticals. The last new type, 47-30, started to appear after we
reached $1.07\times 10^9$ sets; before that, 59-32 after
$5.5\times 10^8$, 55-37 after $3.37\times 10^8$, and all the other
150 types were already appearing within $2.15\times 10^8$ generated
sets. Here we stress that our method of generating sets is as
random as a program can possibly be and that therefore the ``late''
appearance of the aforementioned three types is due only to their
very low occurrence among the sets, i.e., to a minuscule
probability to appear at all. 

This can be well illustrated by looking at the KS criticals with 
parity proofs. Among all $1.5\times 10^9$ criticals only
$1.2\times 10^5$ have parity proofs and among them some are still
missing. In particular, we have $3\times 10^6$ 60-39 criticals and
$3.5\times 10^3$ 60-41 criticals and none of them has a parity proof
although there are at least two (60-39 and 60-41 whose MMP hypergraph
strings are given in Appendix \ref{app:0-1}) that do have such a
proof which we obtained by means of a parity-proof program in
\cite{waeg-aravind-megill-pavicic-11}. The strings are presented with
their maximal loops, hexadecagon and heptadecagon (first 16 and 17
edges up to ``,,,''), respectively, to facilitate graphical
representation. In Fig.~\ref{fig:ks-60-74-26-38}, 60-41 is drawn
(vertex ``2'' is indicated and and other vertices from the loop
follow anti-clockwise) and we can see that there are 22 encircled
(in red online) vertices that share four edges and, of course
(otherwise we would not have a parity proof), not a single one of
which would share three edges. The probability that a randomly
generated hypergraph has such a structure is extremely low and this
explains why we did not get them even after more than $10^{15}$ runs.

We used a procedure that strips one edge at a time of smaller and 
smaller sets and simultaneously checks them on KS property, KS 
criticality, maximal loops, number of iterations, level of 
classical non-contextuality of each set, etc. A choice of 
them is represented graphically by means of MMP hypergraphs in 
Fig.~\ref{fig:ks-60-74-26-38}.  

\begin{figure}[hbt]
\begin{center}
\includegraphics[width=0.99\textwidth]{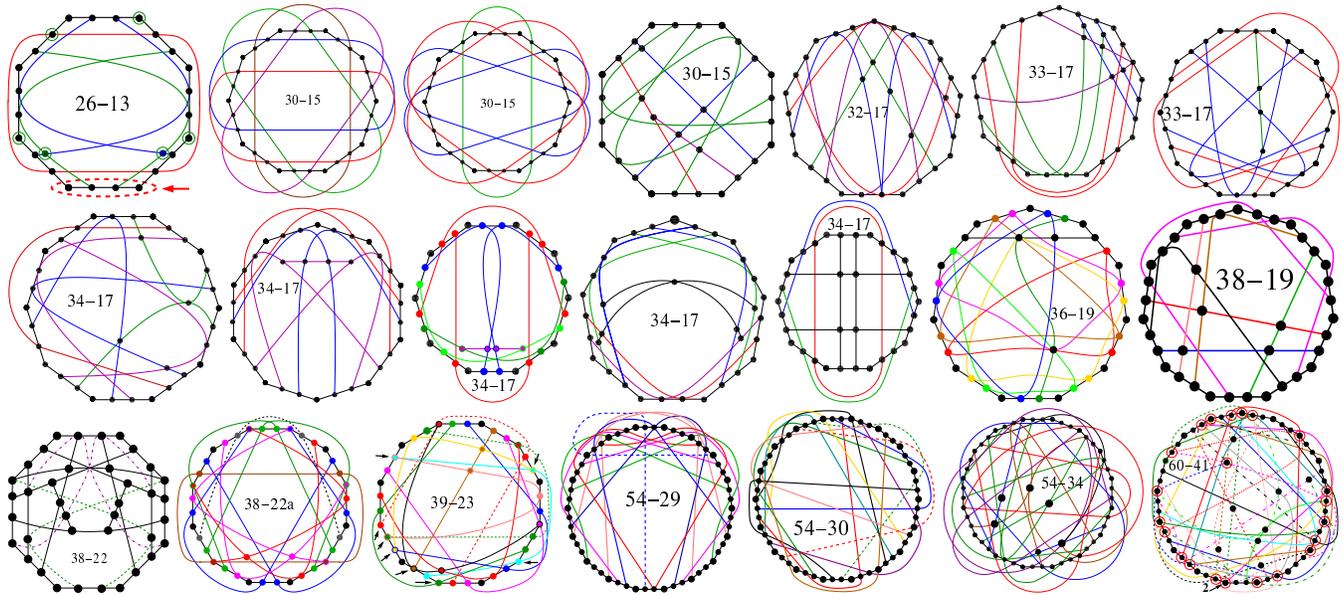}
\end{center}
\caption[MMP Hypergraphs Kochen-Specker Sets from the 60-74 Class.]{
MMP hypergraphs from the 60-74 class shown with the help of their 
maximal loops; 26-13 is the smallest set from the class---the 
arrow points at a ``graphical proof'' of contextuality (all zeros, 
while rings (green online) denote ``1''); 26-13 through 36-19 all
have parity proofs; the first two 30-15 and the last 34-17 have two
axes of symmetry; three middle 34-17, one axis; 38-22 are the 
smallest sets that have even number of edges; 39-23 is the smallest 
set with an odd number of edges which does not have a parity proof;
54-29 and 54-30 are the two smallest sets with the biggest loops 
(18-gon); 54-34 is a typical large set; 60-41 belongs to the largest 
sets of criticals; it does have a parity proof, while other  
60-41 criticals do not have it (see text).}
\label{fig:ks-60-74-26-38}
\end{figure}

The KS criticals 26-13 to 36-19 all have parity proofs and among 
the sets with up to 38 vertices and odd number of edges there is no 
one which fails the parity proof. The first sets with odd number of 
edges without parity proofs are 39-23 sets. One of them is shown in
Fig.~\ref{fig:ks-60-74-26-38} in which arrows point to vertices that 
share an odd number of edges and therefore violate the parity proof 
condition from Def.~\ref{def:parity}. Actually, none of the 39-23 
sets satisfy the parity proofs and this is the reason why this type of
sets is missing in Table 1 of \cite{waeg-aravind-megill-pavicic-11}.

None of the sets with even number of edges can have a parity proof
per definition. Two of the smallest such sets are 38-22 and 38-22a 
shown in Fig.~\ref{fig:ks-60-74-26-38}. 

Two of the smallest sets with the biggest maximal loops in the class, 
octadecagon, are 54-29 and 54-30. They show an interesting property 
of having all vertices contained in the maximal loop like the smallest
 sets 26-13 and 30-15. Set 54-29 does not have a parity proof because 
it contains vertices that share three edges. It also has a property 
that some of its vertices share only one edge which most smaller 
set do not posses.

As we can see from Fig.~\ref{fig:ks-60-74-26-38} the maximal loops
range from octagon (26-13) to octadecagon (54-29) in contrast to the
sets from the sets from the 24-24 class in Fig.~\ref{fig:24-24-class}.
On the other hand, the majority of sets from the 24-24 class have
edges which intersect each other at more than one vertex, while in the
vast 60-74 class there is not a single such set. It follows that not
only the two classes are disjoint but that is also unlikely that they
would belong to a wider class which would contain them both. 

However, there is a class which contains the 24-24
class---the 60-105 one, which we present in the next section. 

\section{\label{sec:60105}60-105 Class of 4-dim 
  KS Sets Defined by Hilbert Space Operators and
Properly Containing 24-24 Class}

When we envisage an application of KS sets in the field of 
quantum computation and communication, a qubit implementation 
comes forward as most interesting. And while the real vectors 
of the KS sets from the 24-24 class do enable a qubit 
representation, as recent experiments have shown, it is not 
clear whether the vector components of the real vectors 
defining the 60-74 class offer us a qubit representation.
Recall that the dimension of the Hilbert space of a quantum 
system and the spin of this system satisfy $\dim{\mathcal{H}}_s=2s+1$.
So, a 4-dim KS set can be realised either via an $s=3/2$ 
particle, say by means of a Stern-Gerlach device, or via
two qubits: $\dim({\mathcal{H}}^2\otimes{\mathcal{H}}^2)=2^2=4$

In order to achieve a qubit representation in the complex 4-dim 
Hilbert space, by means of complex vectors, Aravind and Waegell 
\cite{waeg-aravind-jpa-11} made use of Pauli operators (e.g., 
$\sigma_x^{(1)},\sigma_y^{(2)}$), where the superscripts refer to 
one of two qubits. In a 4-dim Hilbert space they form 
9 mutual tensor products and 6 tensor products with the unit 
vectors. Altogether, these 4-dim operators form 15 commuting 
triplets each of which has four eigenvectors (tetrads) in common. 
There are 60 different eigenvectors that form the resulting 105
tetrads as given in Tables 1 and 2 of \cite{waeg-aravind-jpa-11}.
Their components take values from the set $\{0,\pm 1, \pm i\}$.
A few lines of the former Table are given in Table 
\ref{T:pauli-op-60-105}, below. 

\begin{table}[ht]
\begin{center}
\setlength{\tabcolsep}{4.1pt}
\begin{tabular}{|c||c|c|c|c|}
\hline
Pauli product triples&\multicolumn{4}{|c|}{4 eigenvectors of each 
product from the triple}\\
\hline
$\sigma_x^{(1)}\otimes I^{(2)},\ I^{(1)}\otimes \sigma_z^{(2)},\ \sigma_z^{(1)}\otimes \sigma_z^{(2)}$
&\ \ \ $|1000\rangle$\ \ \ &$|0100\rangle$&
$|0010\rangle$&$|0001\rangle$\\
\hline
$\sigma_x^{(1)}\otimes I^{(2)},\ I^{(1)}\otimes \sigma_x^{(2)},\ \sigma_x^{(1)}\otimes \sigma_x^{(2)}$
&$|1111\rangle$ &$|1-\!11-\!1\rangle$ &
$|11-\!1-\!1\rangle$&$|1-\!1-\!11\rangle$\\
\hline
\dots, \dots, \dots 
&\dots&\dots &\dots&\dots\\
\hline
$\sigma_y^{(1)}\otimes I^{(2)},\ I^{(1)}\otimes \sigma_z^{(2)},\ \sigma_y^{(1)}\otimes \sigma_z^{(2)}$
&\ \ \ $|10i0\rangle$\ \ \ &$|010i\rangle$&
$|10i0\rangle$&$|010i\rangle$\\
\hline
\dots, \dots, \dots 
&\dots&\dots &\dots&\dots\\
\hline
$\sigma_x^{(1)}\otimes \sigma_y^{(2)},\ \sigma_y^{(1)}\otimes \sigma_x^{(2)},\ \sigma_z^{(1)}\otimes \sigma_z^{(2)}$
&$|100i\rangle$&$|01\!-\! i0\rangle$&
$|01i0\rangle$&$|100\!-\! i\rangle$\\
\hline
\dots, \dots, \dots 
&\dots&\dots &\dots&\dots\\
\hline
\end{tabular}
\end{center}
\caption[Products of Pauli operators and their eigenvectors]{A 
sample from a complete list of 15 Pauli operator products and their 
eigenvectors given in Ref.~\cite{waeg-aravind-jpa-11}.}
\label{T:pauli-op-60-105}
\end{table}

The latter table represents a 60-105 master set. Its MMP hypergraph
string is given in Appendix \ref{app:0-2}. By removing one of 105
edges from the string at a time, each time a different one, we
obtain 105 sets. They all turn out to belong to two non-isomorphic
non-critical KS sets in contrast to the 60-75 set which reduces to
the unique 60-74 one. By applying the same technique as in
Sec.~\ref{sec:6074} we generate critical KS sets listed in the table
in Fig.~\ref{table:105}. They make the 60-105 class of KS sets. 

\begin{figure}[hbt]
\includegraphics[width=0.99\textwidth]{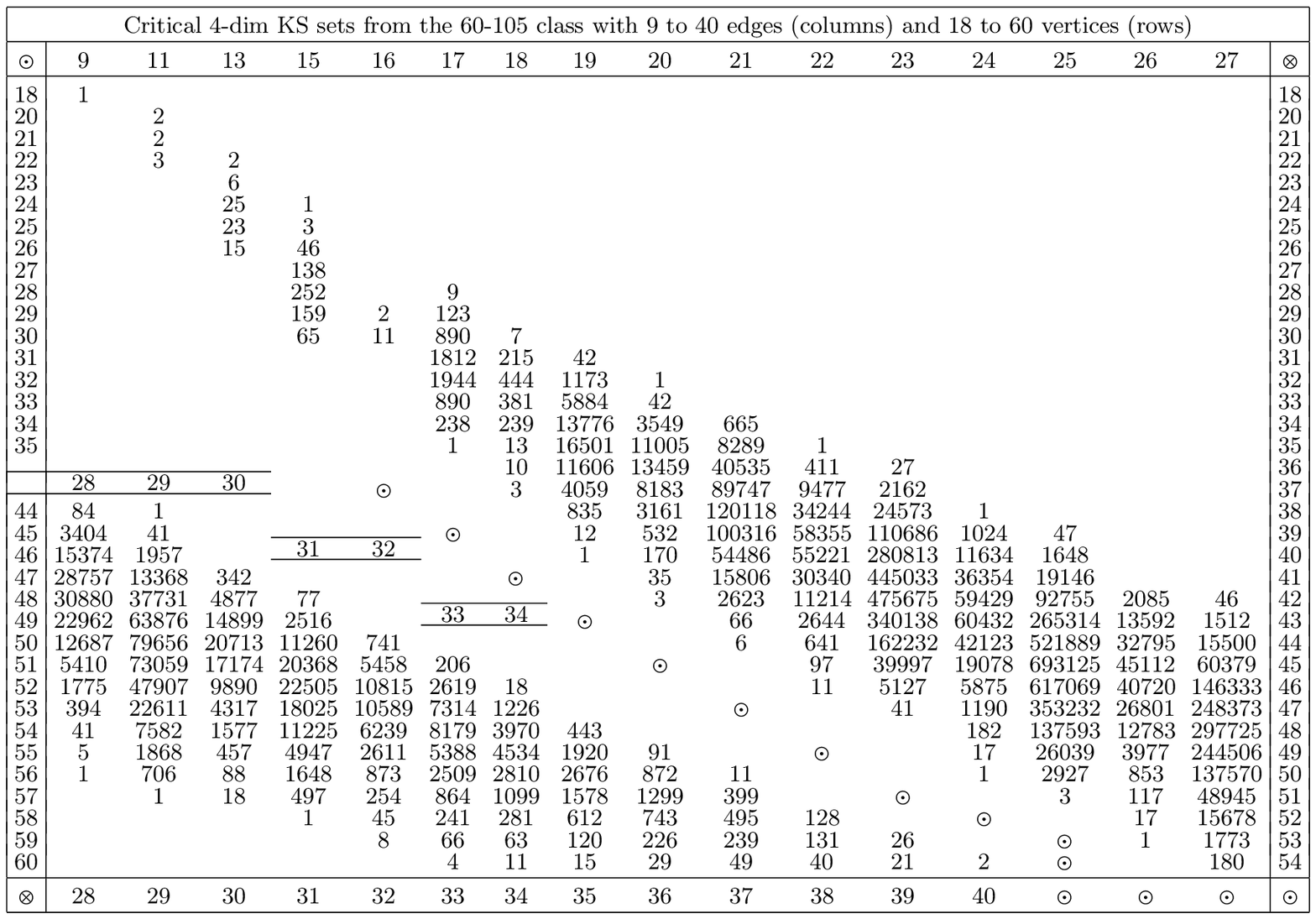}
\caption{List of 7,720,539 non-isomorphic KS critical sets from the 
60-105 class we obtained on our cluster. We conjecture that all possible
types of vertex-edge sets are given here. We obtained no critical
sets with 10, 12, or 14 vertices}
\label{table:105}
\end{figure}

Although the generated critical KS sets from the 60-105 class are 
more than two orders of magnitude less numerous than the ones from 
the 60-74 class, the statistics indicate that the majority of types 
has been generated. 

MMP hypergraph of the master set 60-105 properly contains all MMP
hypergraphs from the 24-24 class \cite{waeg-aravind-jpa-11} and also
the ones we obtained by means of our down-up generation in
\cite{pmmm05a,pmm-2-10} but which did not belong to the 24-24 class
as well as new ones, which do not belong to either of those two kinds,
shown in Fig.~\ref{fig:ks-60-105-21-24}. That is why we called
24-24 class {\em tentative\/} in the title of Sec.~\ref{sec:2424}. 

\begin{figure}[hbt]
\begin{center}
\includegraphics[width=0.99\textwidth]{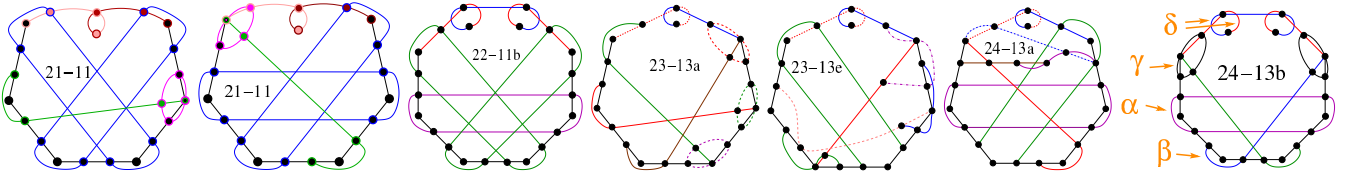}
\end{center}
\caption[MMP Hypergraphs Kochen-Specker Sets from the 60-105 Class.]{
MMP hypergraphs of KS critical sets from the 60-105 class with up to 
24 vertices that are not isomorphic to the ones shown in 
Fig.~\ref{fig:24-24-class}. They share edges of the form 
$\alpha$, $\beta$, and $\gamma$ which characterise 24-24 sets 
but not, e.g., the one of the form $\delta$, which is specific to 
the 60-105 sets. Maximal loops of the criticals shown here range
from hexagons to octagons.}
\label{fig:ks-60-105-21-24}
\end{figure}

Still, with respect to vector representation, the 24-24 class is
not uniquely determined by  the coordinatization of the 60-105
master set. The vector components of the 60-105 set are complex
(taking values from the set \{0, $\pm$1, $\pm i$\}) and Peres'
24-24 master set can take over them directly as shown in Appendix
\ref{app:0-2}. 

But, as we mentioned above, for the master set 24-24 and therefore
all of its subsets there exist real coordinatizations, e.g., the one
originally found by Peres, and that is what Waegell and Aravind meant
when they said that ``60-105 system contain[ed] (in ten different
ways) 24-24 systems of rays and bases used by Peres and others''
\cite{waeg-aravind-jpa-11}.

The fact that the 24-24 class can have both real and
complex coordinatization depend on particular structure of its
sets. In contrast, the systems 21-11 shown in
Fig.~\ref{fig:ks-60-105-21-24} do not possess real
coordinatizations, apparently due to the
$\delta$-feature of their structure---see below. 

Another example of different coordinatizations within the
60-105 class is Pavi{\v c}i{\'c}, Merlet,  Mc{K}ay, and Megill's
20-11a \cite{pmmm05a}, shown in Fig.~\ref{fig:24-24-class}. Its
60-105 coordinatizations might be complex as given in Appendix
\ref{app:0-2} as well as real. If we compared the components with
those of the 24-24 set, we would see that 20-11a might be generated
(stripped) directly from the 24-24. The 20-11a also possesses real
coordinatizations, though, one of which is given in \cite{pmmm05a}.

On the other hand, Cabello, Estebaranz, and Garc{\'\i}a-Alcaine's
18-9 \cite{cabell-est-96a} and Kernaghan's 20-11b \cite{kern} 
(both shown in Fig.~\ref{fig:24-24-class}) have real
coordinatizations with components from $\{0,\pm 1\}$ in 60-105 as
given in Appendix \ref{app:0-2}. By comparing their components we
can see that they are not generated directly from the presented
24-24 set with the given complex coordinatization (it does not have
enough real components) but from some other subsets of 60-105. 

Of course, here we should pose a question on whether there is an even 
wider class which properly contains the criticals from the 60-105 
class and this is an open question. Since the biggest such criticals 
contain only 60 vertices such a bigger class might exist (the master
set from Sec.~\ref{sec:300675} has 300 vertices). However, a wider
class which would properly contain both 60-74 and 60-105 might not 
exist, since these two classes have too disparate properties. First, 
not a single critical KS set from the 60-105 class is isomorphic to 
any of $1.5\times 10^9$ critical KS sets from the 60-74 class. 
Second, there is an important structural difference together with 
all similarities. 

The similarities are of the  $\alpha$, $\beta$, and $\gamma$ kind 
shown at 24-13b and 24-13c in Fig.~\ref{fig:ks-60-105-21-24}.
$\alpha$ is an edge whose vertices  each share a single edge from
the maximal loop; $\beta$ consists of 2 such vertices, 1 which
shares two loop edges and a third edge and 1 which shares only
that third age; $\gamma$ is the third edge from the previous
$\beta$ definition. 

A definite difference with and a dominant feature of 65-105 sets
is the $\delta$-feature (see Fig.~\ref{fig:ks-60-105-21-24}).
It refers to two neighbouring edges from the maximal loop exclusively
sharing two vertices, i.e., intersecting each other at two vertices
which do not share any third edge. The $\delta$-feature characterises
most of the criticals shown in Figs.~\ref{fig:ks-60-105-21-24},
\ref{fig:ks-60-105-29-30}, and \ref{fig:ks-60-105-big}. It might
correspond to a rank-2 projector and be related to the fact that
in a KS test one need not distinguish which of the
two vertices that share two edges was assigned a 1. The role of
projectors of a higher rank in a description of KS sets has been
explored by Waegell and Aravind in details in
\cite{waeg-aravind-jpa-11}.

The portion of sets from the 60-105 class with an odd number of edges 
which possess the parity proofs and the overall number of sets from
the class with the parity proofs is much higher than in the 60-74 
class. Of $7.5\times 10^6$ 60-105 criticals, we obtained,   
$5.72\times 10^6$ have parity proofs, i.e. 76.3\%. The latter number 
includes 6 criticals from the former 24-24 class which all have 
parity proofs. There are 132 types of KS criticals with an odd number
of edges of which 45 were previously reported by Waegell and Aravind
\cite{waeg-aravind-jpa-11} and additional 12 by Pavi{\v c}i{\'c}
\cite{pavicic-book-13} and 111 with an even number of edges of which
22 were previously found by Pavi{\v c}i{\'c} \cite{pavicic-book-13}.

A general feature of all classes is that smaller sets have only odd
number of edges and that they all have parity 
proofs. On the other hand, among large sets with odd number of 
edges there are only a very few ones with the parity proofs. 
As we saw in Sec.~\ref{sec:6074} we did not obtain a single 
such 60-39 or 60-41 set in the 60-74 class although they exist 
(and are given above) and of 21 60-39 sets in the 60-105 class
no one has a parity proof and, to our knowledge, it is not known 
whether such a set exists. One of 11 smallest sets without a
parity proof is the 26-15 shown in Fig.~\ref{fig:ks-60-105-29-30}.
Encircled vertices (in red online) do not satisfy the parity proof
condition; they do not share an even number of edges. 

\begin{figure}[hbt]
\begin{center} 
\includegraphics[width=0.99\textwidth]{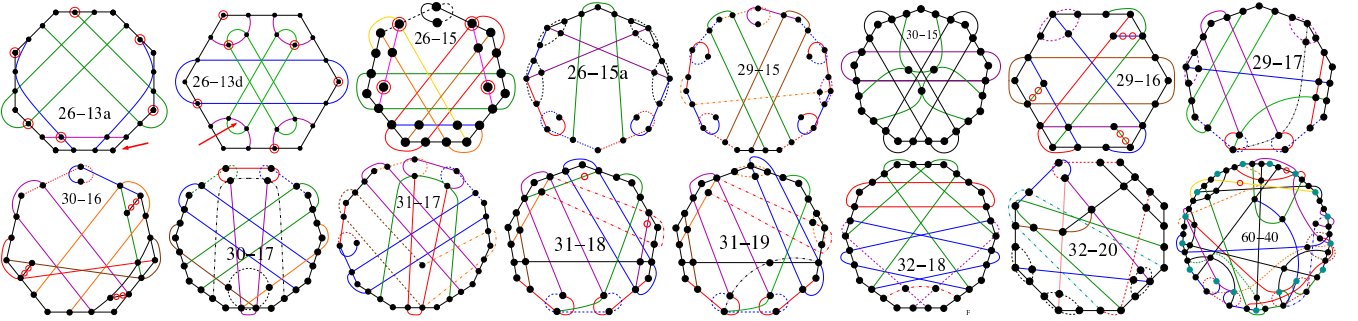}
\end{center}
\caption[MMP Hypergraphs Kochen-Specker Sets from the 60-105 Class.]{
Bigger 60-105-class criticals; sets with even number of edges (no parity) 
compared with sets with odd number of edges (with parity) of the same 
vertex size; arrows indicate edges at which conditions (1) and (2)
of the KS theorem are violated and the theorem proved; rings
(red online) in 29-16, 30-16, 31-18, and 60-40 denote vertices that
share just one edge; 26-15 is one of the smallest sets without parity
proofs; 60-40 is one of the biggest criticals; its maximal loop forms
a heptadecagon (17-gon).}
\label{fig:ks-60-105-29-30}
\end{figure}

The smallest sets with even number of edges are 29-16. In 
Fig.~\ref{fig:ks-60-105-29-30} a sample of them is shown with 
vertices which share only one edge drawn as rings (red online). As
the number of vertices and edges increase there are fewer and fewer
such vertices which are dominant among 3-dim KS criticals (see
Sec.~\ref{sec:3d}). Yet, there is one of them in the 60-40 set.  

Our generation of KS sets via stripping of master sets was so far 
completely random. As we already stressed this does require a 
considerable amount of CPU time. What slows down the generation 
is not the stripping itself, which is extremely fast, but 
filtering on the KS property and criticality. Algorithms which 
would be focused on particular arrangement of vertices and edges 
might prove more efficient and even serve us to obtain KS sets 
without previous stripping from any master set. Possible
arrangements of such a kind are the ones which would have all 
vertices contained within a single loop as 26-13 to 46-23 or 
nearly so as 50-25 and 54-27 in Fig.~\ref{fig:ks-60-105-big}.

\begin{figure}[hbt]
\begin{center}
\includegraphics[width=0.99\textwidth]{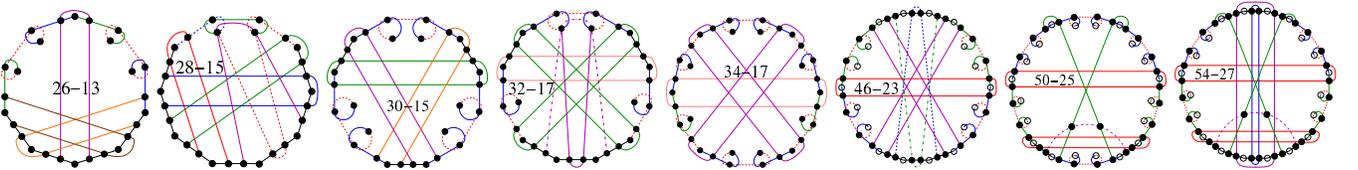}
\end{center} 
\caption[MMP of Hypergraphs Kochen-Specker Sets from the 60-105 Class.]{
26-13 to 46-23 samples of 60-105 criticals with all the 
vertices contained in the maximal loop; there are no such sets 
with 50 or more vertices; 50-25 and 54-27 are the closest 
structures; rings in 46-23 to 54-27 denote the end vertices of 
edges.}
\label{fig:ks-60-105-big}
\end{figure}

\section{\label{sec:300675}300-675 Class of 
4-dim KS Sets Containing 60-74 Class}

In 2011 Waegell, Aravind, Megill, and Pavi{\v c}i{\'c} made use of
600-cell convex regular 4-polytope to obtain a 60-75 master KS set and
a huge number of KS criticals which we call the 60-74 KS class
\cite{waeg-aravind-megill-pavicic-11}. Three years later, in 2014,
Waegell and Aravind considered its dual 120-cell and obtained a
300-675 master set and from it a number of different KS sets via
parity proofs \cite{waeg-aravind-fp-14}. In particular, using parity
proof algorithms and programs, they found the following 102 types of
critical KS sets from their 300-675 master sets: 38-{\bf 19},
42-{\bf 21}, 44$\cdots$46-{\bf 23}, 48$\cdots$50-{\bf 25},
50$\cdots$54-{\bf 27}, 52$\cdots$58-{\bf 29}, 54$\cdots$62-{\bf 31},
56$\cdots$66-{\bf 33}, 58$\cdots$70-{\bf 35}, 60$\cdots$74-{\bf 37},
53$\cdots$78-{\bf 39}, and 65$\cdots$82-{\bf 41}. We show MMP
hypergraphs for some of them (38-19, 42-21, 48-25) in
Fig.~\ref{fig:ks-300-675}.

\begin{figure}[hbt]
\begin{center}
  \includegraphics[width=0.99\textwidth]{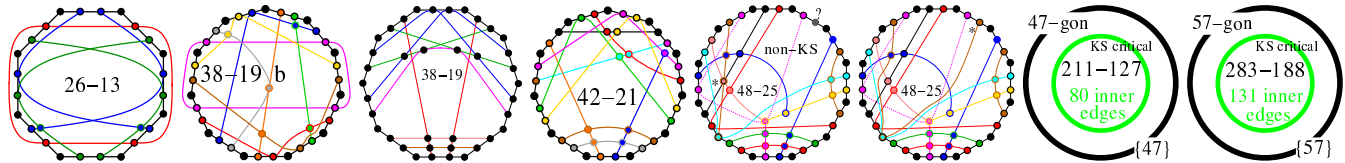}
\end{center} 
\caption[MMP Hypergraphs of Kochen-Specker Sets from the 300-675 Class.]{
  KS criticals from the 300-675 class together with one non-KS set (see
  text); KS criticals from the higher vertex-edge group (211 to 283
  vertices and 127 to 188 edges) are represented by circles since
  the vertices and edges are too numerous to be discernible in a
  figure---they are all listed in Fig.~\ref{fig:table-4dim-300-675}, though;
  black circles represent maximal loops with 47 
  (Schl{\"a}fli symbol \{47\}) and 57 edges (\{57\}).}
\label{fig:ks-300-675}
\end{figure}

Among the smallest KS criticals we generated from the 300-675 master
set are one 26-13 (shown in Fig.~\ref{fig:ks-300-675}), two
(non-isomorphic) 30-15, one 32-17, one 33-17, four 34-17, two 38-19
(one of them, 38-19b, is shown in Fig.~\ref{fig:ks-300-675}), 
one 43-24, and one 44-26. Apart from the last two, all of them have
parity proofs. These KS criticals with parity proofs are subgraphs
of the master set 60-74 and therefore belong to the 60-74 class. 

Waegell and Aravind actually show in \cite{waeg-aravind-fp-14}, by the
very construction of the 300-675 master set, that the 60-75 master set
is properly contained in it, i.e., that the hypergraph of
the 60-75 master sets is contained in the hypergraph of the
300-675 master set.

Next, in Fig.~\ref{fig:ks-300-675}, we present three hypergraph
MMP representations of the KS criticals obtained in
\cite{waeg-aravind-fp-14}: 38-19 (max loop: 10-gon),
42-21 (11-gon), and 48-25 (12-gon). Their MMP hypergraphs are
given in Appendix \ref{app:0-3}. Our program {\tt subgraph} shows
that these KS criticals are not subgraphs of the master set 60-74
(or 60-75) and that, therefore, the class 60-74 does not contain
them. 

Here we use the opportunity to show yet another advantage of
the hypergraph approach to KS sets. Waegell and Aravind made a
misprint somewhere in their Table 7 \cite{waeg-aravind-fp-14} which
should have defined their 48-25 set but an automated translation
gives a hypergraph denoted ``non-KS 42-21'' in
Fig.~\ref{fig:ks-300-675}. To find the misprint in their list of
vertices and edges one should invest a considerable amount of
time and most likely they themselves as well. However, in our
representation it is immediately visually apparent that in a
parity proof the vertex can neither share three
edges, denoted by ``*'', nor just one, denoted by ``?''.
Therefore we can easily amend the misprint by disconnecting the
brown edge from the *-vertex and extending it to the ?-vertex,
so as to obtain the 48-25 KS critical set shown as the next set
in the figure; its ASCII MMP representation is given above. It
provably does not belong to the 60-74 class.

Further advantages are obvious from the generation of a cluster 
of unprecedentedly big KS criticals indicated in the last two
figures in Fig.~\ref{fig:ks-300-675} and listed in all details
in Fig.~\ref{fig:table-4dim-300-675}. The generation of KS criticals
in the 300-675 class is an extremely demanding task due to the
intricacy of the master set 300-675 itself which stems from the
high number of vertices. If we strip too many loops in the
first step with {\tt mmpstrip} we shall find ourselves in the
non-KS desert, i.e., the probability of finding a KS set will
be too small. If we strip only, say, 500 loops, from the master
set, verification of whether a single obtained MMP hypergraph is
a KS set and if it is to reduce it to a critical KS set will
take between one and three CPU-months (3 GHz). At the first glance
it might look as if Waegell and Aravind also stumbled upon this
problem of intricately interwoven edges and orthogonalities:
``we have not found any [set] with more than 41 bases, but we
cannot be sure about the upper limit because our searches have
been limited to only the reduced sets in Table 4''
\cite[p.~1093, bottom]{waeg-aravind-fp-14}.

However, they actually could not have found them because with
their parity-proof-based programs they could not have seen
them at all. More precisely, 14\%\ of all criticals in the 300-675
class have a parity proof but the probability KS sets having them is
not uniformly distributed throughout the class. All KS sets with
parity proofs are in the bottom part of the class. KS sets from the
top part of the class do not have parity proofs---none of the 221-127
to 283-188 generated KS criticals has a parity proof. Hence, they
are invisible for parity-proof-based algorithms and programs and since
search algorithms in the literature rely almost exclusively on parity
proofs we give an MMP representation of the 221-127 critical KS
set (47-gon) in Appendix \ref{app:0-3}.

\begin{figure}[hbt]
\begin{center}
  \includegraphics[width=0.95\textwidth]{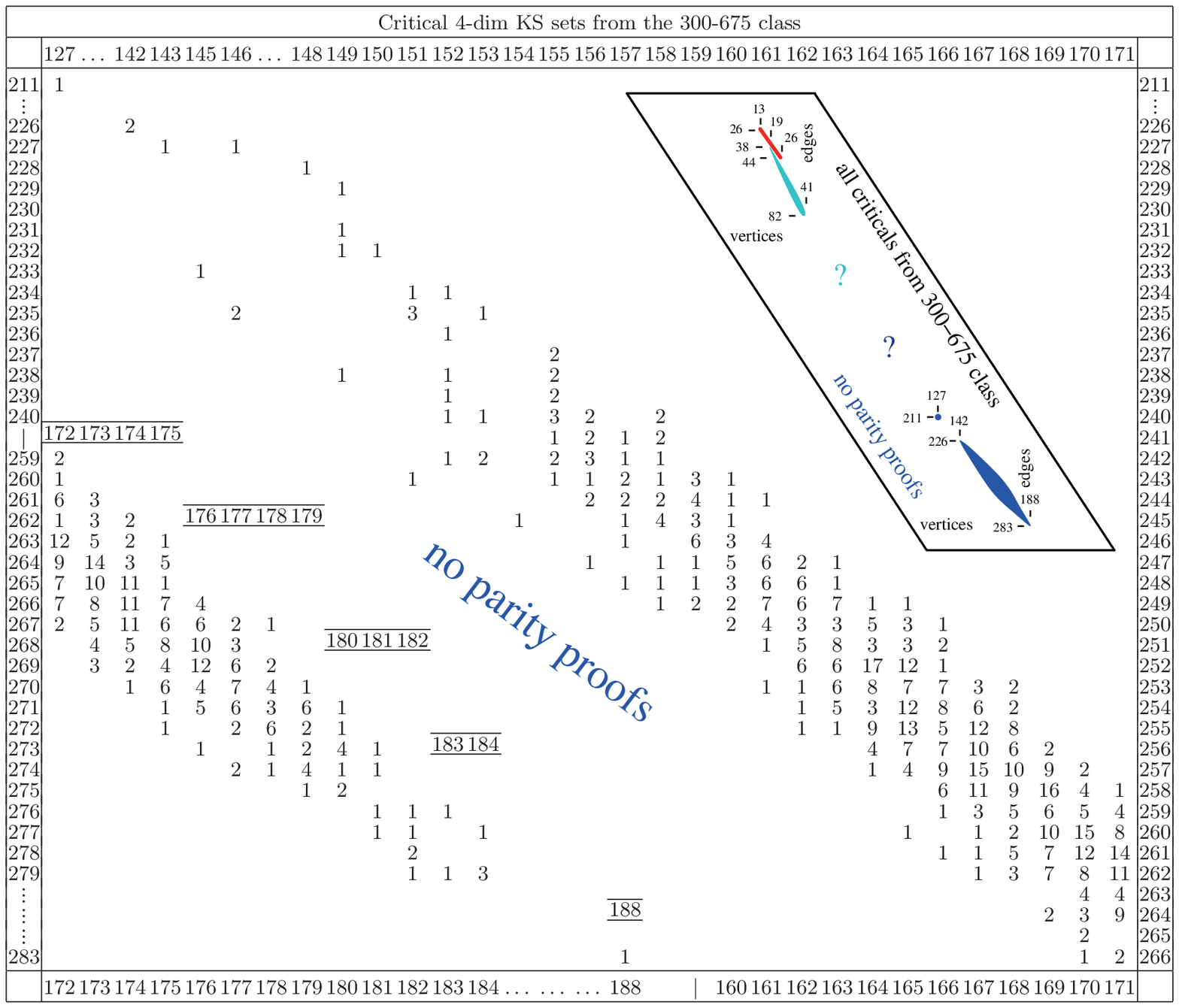}
\end{center} 
\caption{Critical 4-dim KS sets from the 300-675 class;
  the very table shows only big KS criticals with 127 to 188 edges
  (columns) and 211 to 283 vertices (rows); the inset shows all
  critical KS sets from the 300-675 class: at the bottom right
  (blue online) are all criticals from the outer table, at the top
  left, upper line (red online), are 26-13,\dots,44-26 we obtained
  (belonging to the 60-74 class as well; not shown in the table),
  and also at the top left, lower line (cyan online), are
  28-19,\dots,82-41 obtained by Waegell and Aravind
  \cite{waeg-aravind-fp-14} (equally not shown in the table itself).}
\label{fig:table-4dim-300-675}
\end{figure}

Higher criticals from the class 300-675 we obtained and presented
in Fig.~\ref{fig:table-4dim-300-675} are far less numerous than KS
criticals from any other class we presented in this paper. This is,
however, not due to a small number of sets in the class---their
overall number is according to our tests staggeringly huge; this is
due to the fact that their generation is computationally extremely
demanding and time consuming. Therefore we generated the sets in
stages and subjected them to several levels of filtering. We first
randomly stripped 400 to 550 edges from the master set 300-675
by {\tt mmpstrip} thus obtaining 150 groups of sets with 275 down
to 125 edges. Then we filtered these sets for the KS property
and randomly reduced them to criticals by means of {\tt states01}.
This procedure takes up to three CPU months for each single 
critical. We generated higher criticals from the KS noncritical
sets in the range from 190 to 275 edges. E.g., the 211-127 KS
critical we obtained from a set with 190 edges. For sets with less
than 190 edges we observed a sudden drop to criticals with up to
40 edges. 

\section{\label{sec:4dim-witt}148-265 Class of 4-dim KS Sets}

In January 2017 Waegell and Aravind \cite{waeg-aravind-17-arXiv}
showed that that the Penrose dodecahedron, Zimba and Penrose used to
construct their 40-40 non-critical KS set \cite{zimba-penrose}, can
be extracted from the {\em Witting polytope} in
$\mathbb{C}^{4}$. Actually Waegell and Aravind consider a
148-265 KS master sets and their subsets, 40-40 being one of
them. Since this is a work in progress, we shall not go into
details but will only list the types of KS criticals the master
set 148-265 can be reduced to and give two of their hypergraphs,
in Fig.~\ref{fig:ks-148-265}, so as to round up our presentation
of generation of KS criticals from all known 4-dim KS master sets
and their classes. 

\begin{figure}[hbt]
\begin{center}
  \includegraphics[width=0.6\textwidth]{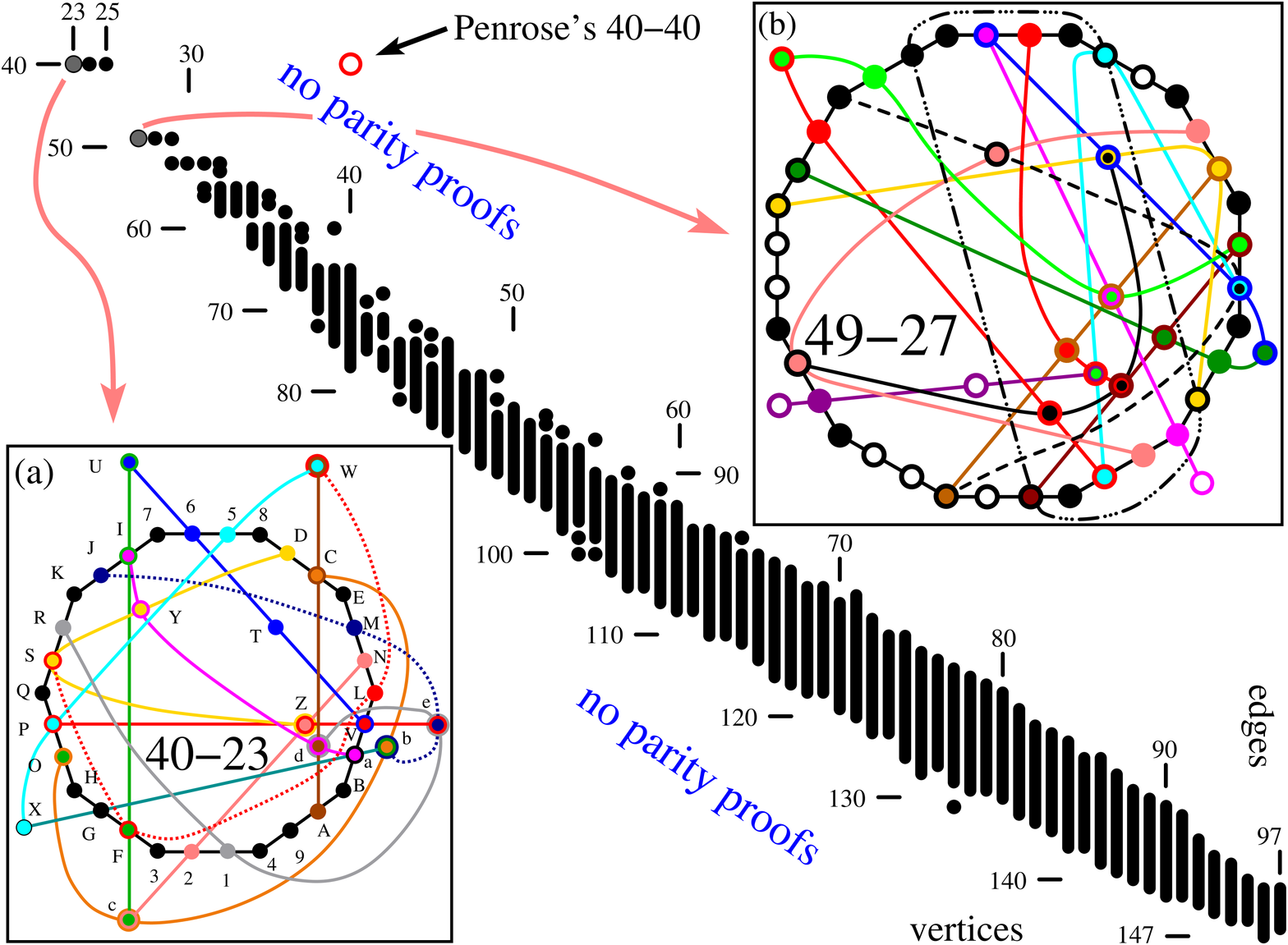}
\end{center} 
\caption[MMP Hypergraphs or Kochen-Specker Sets from the 148-265 Class.]{
  KS criticals from the 4-dim 148-265 class; none of them has a parity
  proof; 40-40 circle (red online) indicates the Penrose 40-40
  non-critical KS set; inset (a) shows one of smallest KS criticals
  generated from Penrose's 40-40 set; inset (b) shows one of the
  smallest KS criticals not contained in the 40-40 set. }
  \label{fig:ks-148-265}
\end{figure}

Waegell and Aravind in \cite{waeg-aravind-17-arXiv} make use of a
rather involved coordinatization but they also indicate that a
simpler one, in which vector components take the values from the
set $\{0,\pm 1,\pm\omega,\pm\omega^2\}$, where
$\omega=e^{2\pi i/3}=(-1+i\sqrt{3})/2$, can be used
\cite[Eq.~(6)]{waeg-aravind-17-arXiv}. We explicitly verified
that, in the master set 148-265, vectors can indeed be ascribed
a valuation from this set which means that all sets from the
148-265 class can easily be given a random valuation with the
help of our program {\tt vectorfind} by simply introducing the
7 values given above as its options. Two examples of such a
valuation are 40-23 and 49-27 KS criticals. 40-23
MMP hypergraph, shown in Fig.~\ref{fig:ks-148-265}(a) is one of
56 40-23 subgraphs of Penrose's 40-40 KS hypergraph. 49-27,
shown in Fig.~\ref{fig:ks-148-265}(b), is the smallest critical
from the 148-265 class which is not contained in its 40-40 set.
MMP hypergraph strings of 40-23 and 49-27 are given in
Appendix \ref{app:0-4}.

In contrast to the smallest sets from the other 4dim KS classes
the above smallest hypergraphs do not show geometrical symmetries
and that is caused by the geometrical features of the Witting
polytope which in turn cause that the vertices share both even and
odd number of edges, i.e., that they do not have parity proofs. 
Actually, none of 250140 KS criticals in the 148-265 class we
obtained has a parity proof, so, they are completely invisible
for the parity-proof-based algorithms and programs.

The maximal loops of the criticals are up to 36-gons big and
therefore smaller than the ones of all the higher KS criticals
from the 300-675 class but bigger than ones of all the other
KS criticals from any other class.

Similarly to 60-74 and 300-675 and unlike 24-24 and 60-105 classes,
no two edges share more than one vertex.

Program {\tt subgraph\/} verified that the master set 148-265 is not
a subgraph of the master set 300-675. Program {\tt shortd\/} verifies
that the classes 148-265 and 300-675 are completely disjoint. 

\section{\label{sec:6dim}$\medstar/\bigtriangleup$ 21-7 6-dim 
KS Set and 236-1216 Class of 6-dim KS Sets}

Lison\v ek, Badzi\c ag, Portillo, and Cabello \cite{lisonek-14} 
recently found a 6-dim 21-7 KS set which they drew in the form of a 
seven pointed star, a regular heptagram with Schl\"afli symbol 
\{7/3\}, as shown in Fig.~\ref{fig:6-dim}. (It was experimentally
implemented in \cite{canas-cabello-14}.) They chose vector
component values from the set $\{0,1,\omega,\omega^2\}$, as we did
in Sec.~\ref{sec:4dim-witt}, however, since $\omega$ is
a cube root of 1 and therefore $\omega\times\omega^2=1$, and 1
is already present as a component, for their set $\omega^2$ is not
needed. To see this, we start with the MMP hypergraph representation
of the seven star set:
{\tt 123456,6789AB,BCD3EF,F5G8HI,IAJD2K,KE4G7L,LH9JC1.} 
We assign {\tt 1} to any point and then proceed along the edges. 
Our program {\tt vectorfind\/} can then ascribe the components to
vertices: {\tt 1={\rm (0,0,0,0,0,1)}, 2={\rm (0,0,0,0,1,0)}, ...,6={\rm (1,0,0,0,0,0)}, 7={\rm (0,1,0,$\omega$,1,$\omega$)}, 8={\rm (0,0,1,1,$\omega$,$\omega$)}, 9={\rm (0,$\omega$,$\omega$,1,1,0)}, A={\rm (0,$\omega$,1,$\omega$,0,1)}, B={\rm (0,1,$\omega$,0,$\omega$,1)},\break C={\rm (1,$\omega$,1,0,$\omega$,0)}, D={\rm ($\omega$,1,1,0,0,$\omega$)}, E={\rm ($\omega$,$\omega$,0,0,1,1)}, F={\rm (1,0,$\omega$,0,1,$\omega$)}, G={\rm (1,0,0,$\omega$,$\omega$,1)}, H={\rm ($\omega$,0,1,$\omega$,1,0)}, \break I={\rm ($\omega$,0,$\omega$,1,0,1)}, J={\rm (1,1,$\omega$,$\omega$,0,0)}, K={\rm (1,$\omega$,0,1,0,$\omega$)}, L={\rm ($\omega$,1,0,1,$\omega$,0)}}.
Recall that dot products (orthogonality of vectors) involve
the complex conjugates, e.g.,
${\tt K}\cdot{\tt L}^\dagger=\omega^*+\omega+0+0+1+0+0=(-1-i\sqrt{3})/2+(-1+i\sqrt{3})/2+1=0$.

\begin{figure}[hbt]
\includegraphics[width=0.99\textwidth]{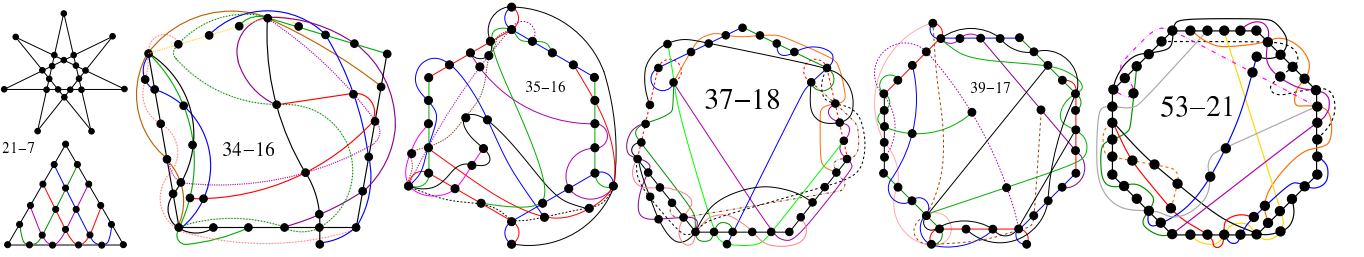}
\caption[6-dim Kochen-Specker Sets]{6-dim KS critical sets:
$\medstar/\bigtriangleup$ 21-7 and KS from the 236-1216 KS class;
$\medstar$ 21-7 is from \cite{lisonek-14}; $\bigtriangleup$
21-7 is isomorphic to $\medstar$; others are critical KS sets
from the 236-1216 class; 53-21 has a parity proof.}
\label{fig:6-dim}
\end{figure}

The heptagram is isomorphic to a triangle ($\bigtriangleup$)
hypergraph shown in Fig.~\ref{fig:6-dim} below the star
($\medstar$). The advantage of the triangular representation is that
it can describe both odd and even dimensional sets while the star
like representation is limited to the even dimensional sets.
4-dim triangle (5 pointed star), which does not 
admit a 0-1 state, would be a 10-5 KS set if it had a vectorial 
representation in the complex Hilbert space, but apparently it does 
not have it. Here we stress that all the results we obtained for
the KS sets of the 60-105 class depend on vector components from
$\{0,\pm 1, \pm i\}$. There are 4-dim KS sets with vector 
components from other sets, e.g., the ones from the 60-74 class. 
There are also, hypergraphs which do not admit non-contextual 0-1 
states, e.g., those smaller than the 18-9 \cite{pmmm05a}, or the 
above 10-5 one, for which we actually do not know whether they 
have a vectorial representation with vector component values from 
some other sets. A direct solving of nonlinear equations which would 
answer this question is rather demanding.  

Neither the 5-dim triangle (15-6 set) nor the 7-dim triangle (28-8) 
are KS hypergraphs (they do admit 0-1 states), but the 8-dim one (36-9) 
is. The latter KS set is also not a subgraph of the 120-2024 class 
(see Sec.~\ref{sec:8dim}). 

The authors of \cite{lisonek-14} have made an attempt to find a 
bigger 6-dim set but did not find any. 

Waegell and Aravind appreciated the approach as the first one 
``in a dimension that is not of the form $2^N$'' 
\cite{waeg-aravind-fp-14}, meaning that the 6-dim space cannot 
``host'' qubits (recall that two qubits reside in the $2^2$dim, 
i.e., 4-dim Hilbert space, three qubits in the $2^3$dim, i.e., 
8-dim space, etc.). Subsequently Aravind and Waegell 
\cite{aravind-waegell-6dim-private} designed a 6-dim 
236-1216 master KS set but since it did not allow parity proofs
they could not generate smaller sets with their parity proof
programs. So, they sent the master set to us and we generated 
$3.7\times 10^6$ KS criticals in this paper. We say that they 
make the 236-1216 class. Their statistics is shown in
Fig.~\ref{fig:table-6-dim}.
The vector components take values from the 
set $\{0,\pm 1/2,\pm 1/\sqrt{3},\pm 1/\sqrt{2},1\}$. 
The class does not contain the 21-7 KS set, though (verified
with {\tt subgraph}). 

\begin{figure}[hbt]
\includegraphics[width=0.99\textwidth]{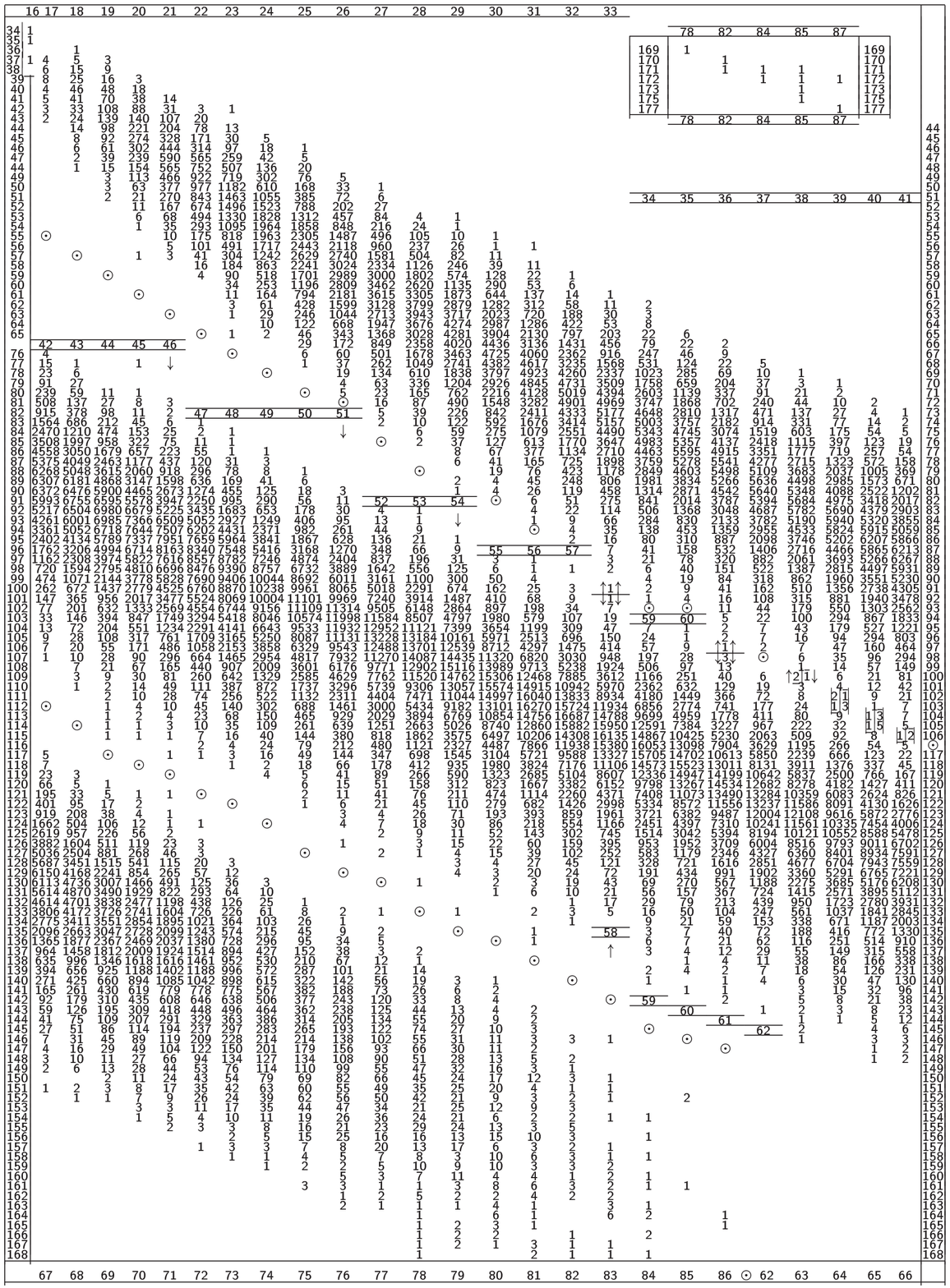}
\caption{List of 3,714,503 non-isomorphic
  6-dim KS critical sets from the 236-1216 class; 16 to 87 edges
  (columns); 34 to 177 vertices (rows); 169-78 to 87-177 sets
  are show in the inset.}
\label{fig:table-6-dim}
\end{figure}

Aravind \cite{aravind-waegell-6dim-private} has arrived at the 
236-1216 master KS set by considering hypercubes which led him 
to a hexeract (6-cube, 6-dim cube) with Schl\"afli symbol 
\{4,3,3,3,3\} or \{4,$3^4$\}. The master set written in the MMP 
notation occupies more than 3 pages so we do not print it here.

The approach of Aravind and Waegell is very geometrical and 
unorthodox and by no means straightforward, so, it is outside of the 
scope of the present paper. It will be presented in detail in a 
separate publication. The master set in the MMP notation is given
in our repository.

The features of the 236-1216 class are: 

--- Its KS sets cannot be implemented via qubits but can via 
spin-$\frac{5}{2}$ quantum systems; 

--- Its smallest KS sets have an even number of 
edges and small sets with odd and even number of edges are evenly 
distributed, unlike in any other class;

--- Although the number of vertices of the master set is comparable
with the 4-dim 300-675 and the number of edges is twice as high,
the criticals are computationally much easier to generate; a generation
of a single KS critical is up to 1,000 times faster; 

--- Types of sets with a definite number of edges and 
different number of vertices are more numerous than in other classes 
(columns in Fig.~\ref{fig:table-6-dim} are higher than in other tables);
a dynamic algorithm compensated for a lower occurrence of smaller
KS sets;

--- Edges connect vertices in much more irregular way than in 
other classes as the figures in Fig.~\ref{fig:6-dim} show. We were 
not able to find a single symmetric hypergraph; 

--- Statistics from Fig.~\ref{fig:table-6-dim} shows gaps in 
the KS sets with high number of edges indicating that a 
more extensive generation would generate many more sets possibly 
with higher number of edges and vertices. 

There is another peculiarity we should mention. 

As already stressed above, in the literature, most of the KS proof 
have been found via parity proofs. However, in the 236-1216 class
among $3.7\times 10^6$ KS critical sets we generated we found only
8 KS criticals with a parity proof. Their edges are in the interval
from 21 to 39. We shall present and discuss them in detail in a
subsequent publication and here we only show one of them (53-21) in
Fig.~\ref{fig:6-dim}.

\section{\label{sec:8dim}120-2024 Class of  8-dim
KS Sets and $\medstar/\bigtriangleup$ 36-9 8-dim KS Set}

We start with a brief history of generation of 8-dim KS sets
which can be realised with either 3 qubits ($2^3=8$) or
spin-$\frac{7}{2}$ systems. 
In 1995 Kernaghan and Peres produced a 36-11 KS critical set
and a 40-25 non-critical one (experimentally implemented in
\cite{canas-cabello-8d-14}) from which several smaller ones
including 36-11 can be obtained \cite{kern-peres}; in 2006 Ruuge
and van Oystaeyen gave a scheme for constructing 8-dim KS proofs
but did not themselves construct any \cite{ruuge05}; in 2012 Ruuge
claimed to have given an example of a 36-vertex 8-dim KS set
\cite{ruuge12} but we were not able to identify its octads of
orthogonal vertices in \cite{ruuge12} (nor to contact him), so,
we could not verify whether it is isomorphic to 36-11 from
\cite{kern-peres} as claimed in \cite{ruuge12}); and finally, also
in 2012, Planat discussed 8-dim KS sets that can be obtained from
the Kernaghan-Peres' 40-25 KS set \cite{planat-12}. In 2015 Waegell
and Aravind obtained a KS master set with 120 vertices and 2025
edges and, from it, many smaller 8-dim KS sets, including non-critical
Kernaghan-Peres' 40-25 one \cite{waeg-aravind-jpa-15} (see also
\cite{waegel-aravind-12}). In the present paper we generate
$6.9\times 10^6$ non-isomorphic KS criticals, listed in the table in
Fig.~\ref{table:8-dim}, from that Waegell-Aravind's 120-2025 master
set. We also produce a new star/triangle ($\medstar/\bigtriangleup$)
36-9 KS set which is not a subgraph of the 120-2025 master set. 

To obtain KS sets, in Refs.~\cite{ruuge05,waeg-aravind-jpa-15}, the
authors made use of the Lie algebra E8. Waegell and Aravind reduced
it to a collection of 120 vectors (rays, vertices) and 2025 bases
(octads, edges) \cite{waeg-aravind-jpa-15} to obtain their 120-2025
KS master set. We verified that by peeling off one edge at the time
we obtain 2025 varieties of the 120-2024 KS sets which are all
isomorphic to each other and therefore reduce to a single 120-2024
KS master set from which we generate the 120-2024 KS class, i.e.,
smaller KS critical sets. Critical KS sets from the 120-2024 class
are given in the table in Fig.~\ref{table:8-dim}. 

\begin{figure}[hbt]
\includegraphics[width=0.99\textwidth]{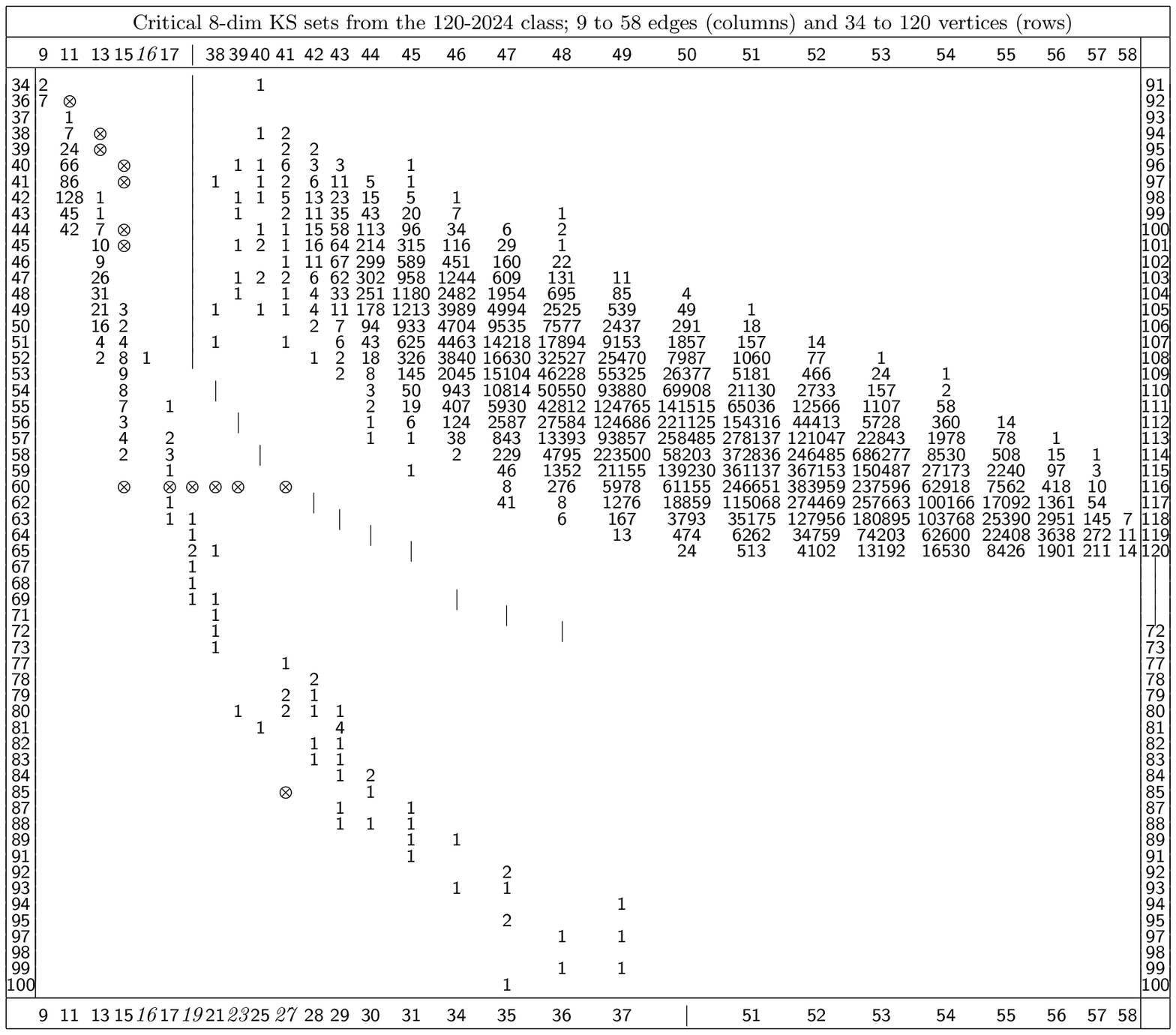}
\caption{List of 6,925,540 non-isomorphic 8-dim KS critical sets from
  the 120-2024 class we obtained on our cluster and of those, denoted
  as $\otimes$, obtained by Waegell and Aravind
  \cite{waeg-aravind-jpa-15} and still not by us.}
\label{table:8-dim}
\end{figure}

The coordinatization (vector components) in \cite{waeg-aravind-jpa-15}
is taken over from D.~Richter and is based on tetrads formed by
expressions $r_me^{in\pi/30}$ (values of constants $r_m$ and $n$ are
given in \cite{waeg-aravind-jpa-15}) so that their real and imaginary
parts form octads. Using this coordinatization Waegell and Aravind
generate sets of bases (edges)  which define their KS sets. We,
however, do not need the coordinatization to obtain KS sets. We start
with the master set 120-2024 and simply strip off edges. Then we
filter the smaller sets via {\tt states01\/} to obtain critical KS
sets. We can always add vector components later on, if needed.  

The distribution of sets from the 120-2024 KS class is different from
the above 6-dim class as well as from the three of the 4-dim ones
and somewhat similar to the 300-675 4-dim class with respect to the
following feature. The critical sets are split so as to be clustered
in two groups of subsets with respect to the number of vertices and
edges: first one, sparsely spread, over 9 to about 40 edges and 34
to about 100 vertices and the second one, densely populated, over
about 41 to 58 edges and about 100 to 120 vertices as shown in the
table in Fig.~\ref{table:8-dim}. The split structure of the 120-2024
class resembles the similarly split structure of the 4-dim 300-675
class. We conjecture that there is only one or at most a few KS
noncritical sets with about 100 vertices and 40 edges which most
the smaller critical sets are subsets of. 

Similarly to the 4-dim classes (with the exception of the 60-105
one) and the 6-dim class, the number of critical sets which exhibit
a parity proof is very small with respect to the total number
of critical sets, but on the other hand, parity proof algorithms
used by Waegell and Aravind \cite{waeg-aravind-jpa-15} are very
efficient in generating the sets so that the two approaches (via
the MMP algorithms for bare hypergraphs and the parity-proof-based
ones for vectors corresponding to vertices of hypergraphs) turn out
to be complementary. In particular, Waegell and Aravind
\cite{waeg-aravind-jpa-15} obtained the following sets which still
did not appear in the course of our computer generation so far:
36-11 (Kernaghan-Peres), 38,39-{\bf 13}, 40,41,44,45-{\bf 15},
48-{\bf 17}, 60-{\bf 15,17,19,21,23,27}, and 85-{\bf 25} (we do show
these sets in the table in Fig.~\ref{table:8-dim} as $\otimes$);
both Waegell and Aravind \cite{waeg-aravind-jpa-15} and we in the
present paper obtained 34-{\bf 9}, 36-{\bf 9}, 37-{\bf 11}, and
95-{\bf 35}; Waegell and Aravind \cite{waeg-aravind-jpa-15} have
not obtained all the other sets we obtained in the table in
Fig.~\ref{table:8-dim} and most of them they actually cannot obtain
due to the features of the parity based algorithm they make use of
but, still, the parity proof based programs confirm themselves as a
powerful complimentary method of providing us with KS criticals since
our general MMP hypergraph algorithms are CPU-time demanding. 

In Fig.~\ref{fig:8-dim-smallest}, we show five chosen KS criticals
from the 120-2024 class. KS criticals 34-9 are the smallest in the
class. KS 36-9 is particularly interesting because it can be viewed
as an 8-dim version of 18-9 from Fig.~\ref{fig:24-24-class}(a) with
graphically analogous edges where each vertex from the 18-9 is
represented by a pair of vertices in the 36-9. KS 44-11 is one of  
the critical KS sets with the biggest maximal loop (heptagon) among 
the sets with 11 edges (second smallest number of edges). KS 52-16 has
the smallest even number of edges. One of 14 KS 120-58 has the
biggest maximal loop---tetradecagon (14-gon); it is not shown in
the figure. 

\begin{figure}[hbt]
\includegraphics[width=0.99\textwidth]{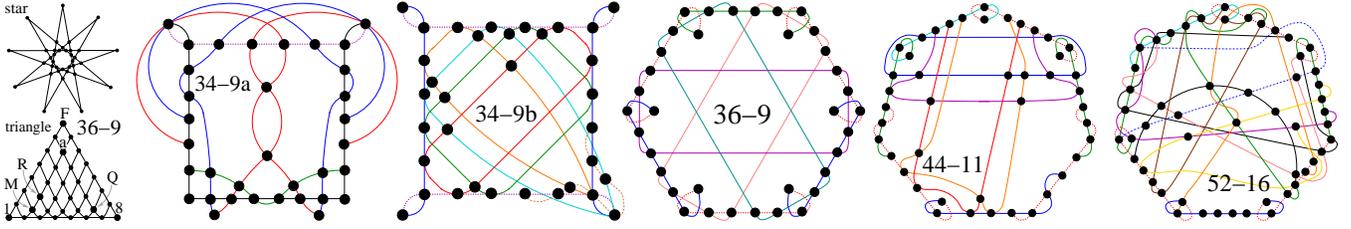}
\caption[8-dim Kochen-Specker Sets]{8-dim Kochen-Specker sets:
$\medstar/\bigtriangleup$ 36-9 KS set and five chosen critical
KS sets from the 120-2024 KS class; see the text for a
description of their features.} 
\label{fig:8-dim-smallest}
\end{figure}

In Sec.~\ref{sec:60105} we have seen that the 24-24 class
is contained in the 60-105 class and in Sec.~\ref{sec:300675}
that the 60-74 class is contained in the 300-675 class. On the
other hand in Sec.~\ref{sec:6dim} we have shown that the
$\medstar/\bigtriangleup$ KS set is not contained in the much bigger
236-1216 class of the 6-dim KS sets. Here we verified that
8-dim $\medstar/\bigtriangleup$ 36-9 KS critical set, shown in
Fig.~\ref{fig:8-dim-smallest}, is not contained in also much bigger
120-2024 class of 8-dim KS sets.

The MMP representations of the star and triangle forms (they are
mutually isomorphic) of the 36-9 critical are given in Appendix
\ref{app:0-6}. In Fig.~\ref{fig:8-dim-smallest} the first three edges
correspond to the edges of the triangle as indicated by its vertices
{\tt 1,8,F} and then the inner vertices are denoted in alphabetical
order from left to right from the bottom horizontal ones (indicated
by {\tt M} to {\tt Q}) to the single {\tt a} at the top.

In contrast to 6-dim 21-7 set from Fig.~\ref{fig:6-dim}, this 8-dim
36-9 can have real vector components from \{-1,0,1\}.
Coordinatizations for the triangle and for the star are given in
Appendix \ref{app:0-6}.

Interestingly, our program {\tt vectorfind} finds the triangle
coordinatization sooner than the one for the star.

The 8-dim star/triangle set is not smaller than the smallest sets from
the 120-2024 KS class as the 6-dim one is with respect to the smallest
sets from the 6-dim 236-1216 class; the 34-9 sets shown in
Fig.~\ref{fig:8-dim-smallest} are smaller. The 120-2024 class contains
at least seven 36-9 criticals but their structure is very different
from the star/triangle 36-9 (Cf.~36-9 in the middle of
Fig.~\ref{fig:8-dim-smallest}).

Via our program {\tt subgraph} we prove that the star/triangle 36-9
or any other 36-9 isomorphic to it cannot be obtained by stripping
edges and vertices from the master set 120-2024 down to sets with
36 vertices and 9 edges, i.e., that it cannot be a subgraph of
the master set and that it therefore does not belong to the
120-2024 class.

Of all sets from the 120-2024 class we generated so far, only
ca.~0.1\permil\ have parity proofs, notably 609 of 6,925,540. The
star/triangle 36-9 does have a parity proof, though.

\section{\label{sec:16dim}80-265 Class of 16-dim KS Sets}

In 2012 Harvey and Chryssanthacopoulos constructed an 80-265 KS
master set in the 8-dim real Hilbert space with vector components
from the set $\{-1,0,1\}$ \cite{harv-cryss-aravind-12a}.
They considered it for four qubits ($2^4=16$)
although---theoretically---it can also serve as a KS set for
spin-$\frac{15}{2}$ systems. The set has far too many redundant
edges, so, Planat promptly designed a procedure to obtain smaller
KS sets and he claimed to have obtained three sets with the
initial number of vertices: 80-21, 80-22, and 80-23
\cite{planat-12}, however, as we show below, his 80-21 and 80-22
are not KS sets and 80-23 is not critical. In this paper we
generate $4.1\times 10^6$ non-isomorphic critical KS sets from
the 80-265 master set. We say that KS critical sets that can be
generated by stripping the 80-265 master set form the 80-265
class of KS critical sets. The ones we obtained so far are shown
in the table in Fig.~\ref{table:16dim}.

\begin{figure}[hbt]
\includegraphics[width=0.99\textwidth]{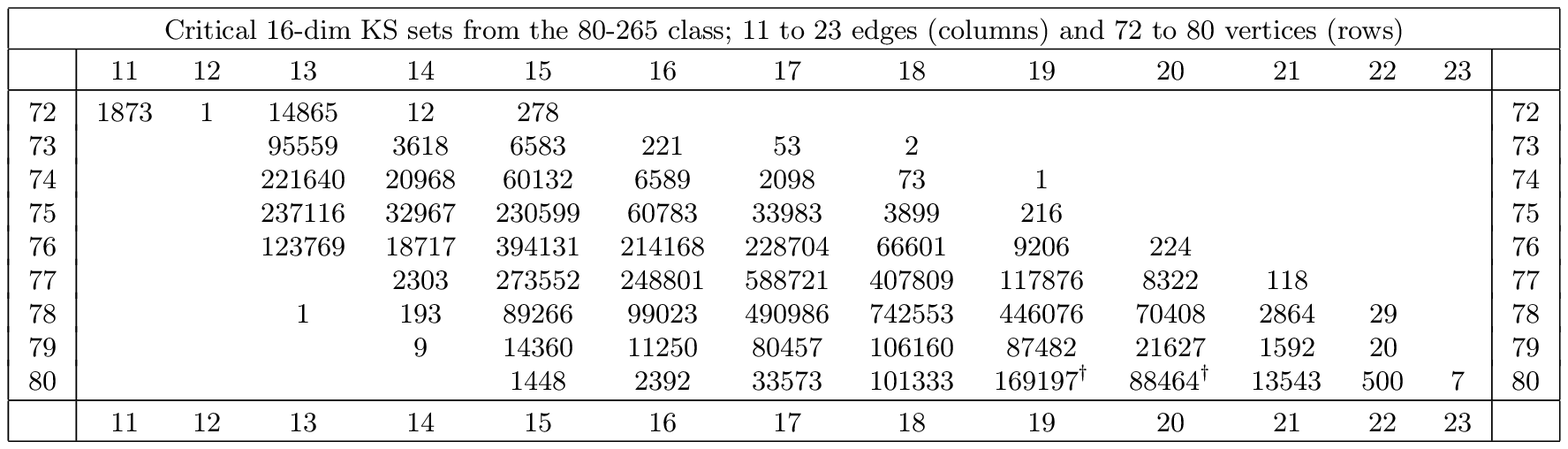}
\caption{List of 4,069,963 non-isomorphic 16-dim KS critical sets from
  the 80-265 class. Two 80-19s and one 80-20 KS criticals
  we generated from Planat's \cite{planat-12} non-critical 80-23 are
  among our 169,197 \ \ 80-19s \ and \ 88,464 \ \ 80-20s, respectively,
  indicated by $^\dagger$ in the table.}
\label{table:16dim}
\end{figure}

The original 80-265 master file printed in \cite{harv-cryss-aravind-12a}
took over 11 pages. Its MMP representation is much shorter. Still, it
takes over one page. So, we shall consider some smaller examples,
but, first, we shall check the sets Planat obtained in \cite{planat-12}.

His set 80-21 given by 21 lines of Eq.~(17) in \cite{planat-12} has an
MMP rendering with 21 edges as given in Appendix \ref{app:0-7}.
But this set is not a KS set. For instance, according to our program
{\tt states01} we can assign ``1'' to {\tt G, H, Y, o, r} and {\tt u},
so as to exhaust all 21 edges, i.e., when we delete the edges that
contain them, then none is left, meaning that each contains one ``1''
and therefore the set is non-contextual.
Cf.~Fig.~\ref{fig:24-24-class}(a) where, e.g., we can assign ``1'' to
none of vertices {\tt 789A} and for which there is always an edge to
which one cannot assign ``1'' at all vertices contained in it.

Then it is claimed that this 80-21 set together with the 1st line of
Eq.~(18) from \cite{planat-12}, in MMP notation:
${\tt 2ACEZbhj\$(*:<>?@}$, form an 80-22 KS set. However, this 80-22
is not a KS set, either.

All lines from Eqs.~(17) and (18), the last line reading   
{\small\tt notuwy!\#')-:\textless =\textgreater ?} in MMP notation,
form an 80-23 non-critical KS set. By deleting the first line,
{\small\tt Zbhjprsv\$(*-:\textless=@}, we get a non-critical 80-22 KS
set. If we also deleted the eighth line ({\small\tt HIKLMPQRTUVWbhlm}),
we would get a non-critical 80-21 KS set. These sets contain
one 80-20 critical set and two non-isomorphic 80-19 criticals---all
shown in Appendix \ref{app:0-7}. Their maximal loops are pentagons.
We obtained them from the aforementioned 80-23,22,21 via our
program {\tt states01}. 

The goal of \cite{planat-12} was to find small KS sets, but the 
table in Fig.~\ref{table:16dim} shows that its non-critical KS set
80-23 is bigger than all 2.5 million KS criticals we generated from the
master 80-265 set and listed in the table in Fig.~\ref{table:16dim}.
This shows that algorithms for automated exhaustive generation of
MMP hypergraphs, although probabilistic until full exhaustion is
reached, are indispensable sources for obtaining new KS sets. 
Still, the 80-20 and 80-19s we obtained with the help of our program
{\tt states01} are not isomorphic to any of the 80-20s and 80-19s we
listed in the table in Fig.~\ref{table:16dim}. This is because the
probability of generating any specific KS set via our programs
{\tt mmpstrip} and {\tt states01} is very low due to the their
probabilistic algorithms. Within established probabilities for
obtaining MMP hypergraphs with wanted number of edges and vertices
they are generated completely at random.

The 16-dim KS criticals listed in the table in Fig.~\ref{table:16dim}
have maximal loops in the range from a square to a heptagon as
illustrated in Fig.~\ref{fig:16-dim-smallest}. The vector components
corresponding to vertices from the set \{-1,0,1\} for the master set
listed in \cite{harv-cryss-aravind-12a} can be traced down to any
chosen MMP hypergraph from the table in Fig.~\ref{table:16dim}
via any of our programs {\tt mmpstrip}, {\tt states01},
{\tt mmpshuffle}, etc., or, equivalently, program {\tt vectorfind}
can generate the components directly for a given hypergraph. 

\begin{figure}[hbt]
\includegraphics[width=0.99\textwidth]{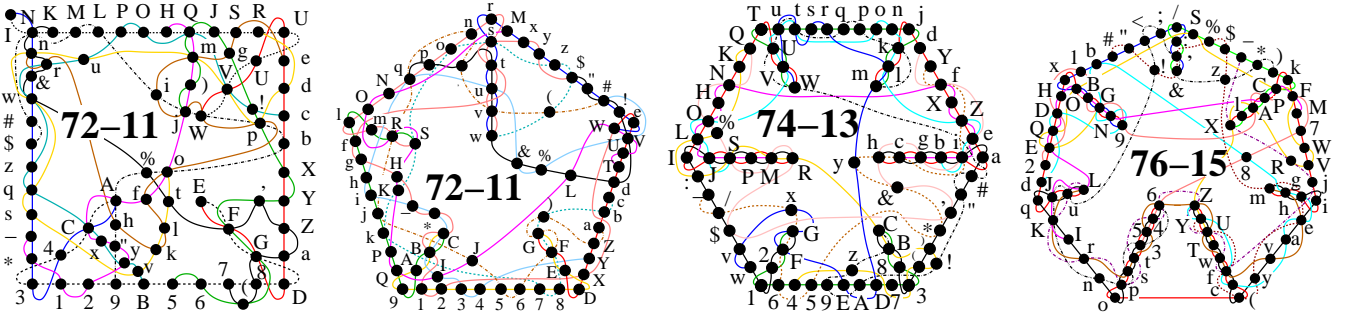}
\caption[16-dim Kochen-Specker Sets]{16-dim critical KS sets. The
smallest sets with square, pentagon, hexagon, and heptagon
maximal loops are shown. MMP hypergraph strings are given in
Appendix \ref{app:0-7}.} 
\label{fig:16-dim-smallest}
\end{figure}

In contrast to all previous classes of KS sets apart from 4-dim
60-105, 16-dim 80-625 class has a significant number of parity proofs,
notably, 28\%. There are ca.~64\%\ criticals with an odd number of
edges but only 44\%\ of them have parity proofs.

Also in contrast to all previous classes, except the 6-dim 236-1216
class, the KS criticals of the 16-dim 80-265 class do not exhibit
symmetries. They have rather intricate and dense structure. In
particular, all vertices share at least two edges and some pairs of
edges share eight vertices. Also, in contrast to KS sets from the
classes in smaller dimensions (not counting the tentative 24-24 
class for which we in Sec.~\ref{sec:60105} proved to be contained
in the 60-105 class), there are no maximal loops bigger than
heptagons. There are ca.~10\%\ of squares, 86\%\ of pentagons,
4.6\%\ of hexagons, and 1\textpertenthousand\ of heptagons. 

Why is 77-13 missing, while 76-13 has 123,769 non-isomorphic instances
and 78-13 is present, is an open question. 

16-dim star/triangle set does not admit 0-1 states and is critical and
therefore would be a critical KS set if one found a coordinatization
for it. We have not found any so far. It has 16+1=17 edges and
(16+1)16/2=136 vertices, which are 1.7 times the highest number of
vertices of the critical sets from the 80-265 class. Its structure is
dissimilar to any obtained set from the 80-265 class so it is very
unlikely that it might belong to it, however, for the time being, the
program {\tt subgraph} which would give us a definitive answer to this
question it is still running. 

\section{\label{sec:32dim}160-661 Class of 32-dim KS Sets}

Recently Planat and Saniga, extending Aravind's and DiVincenzo-Peres'
generalisations of the Bell-Kochen-Specker theorem
\cite{aravind-02,divinc-peres}, constructed a 32-dim KS master set
with 160 vertices/vectors and 661 edges with a real coordinatization
from the set $\{-1,0,1\}$ \cite{planat-saniga-12}. This is a
very big set which corresponds to states of five qubits, so, they did
not present it in their paper. But, M.~Planat kindly sent us the
set in their notation and we translated it to an MMP encoded hypergraph. 
Planat and Saniga only published a smaller 160-21 KS set they obtained
from the master set. MMP hypergraph string of that set is given in
Appendix \ref{app:0-8}. (Edge 14 in \cite{planat-saniga-12} which reads
08 should read 108.) 

However, this KS set is not critical and it contains at least two
smaller critical KS sets, 160-19 and 152-19 ones. There is no point
in giving their MMP representations here because we obtained thousands
of smaller KS 32 criticals from the 160-661 master set as shown in the
table in Fig.~\ref{table:32dim}. The non-isomorphic KS criticals with
the smallest number of edges (11) all have 144 vertices and we show
one of them in Fig.~\ref{fig:32-dim-smallest}.

\begin{figure}[hbt]
\includegraphics[width=0.99\textwidth]{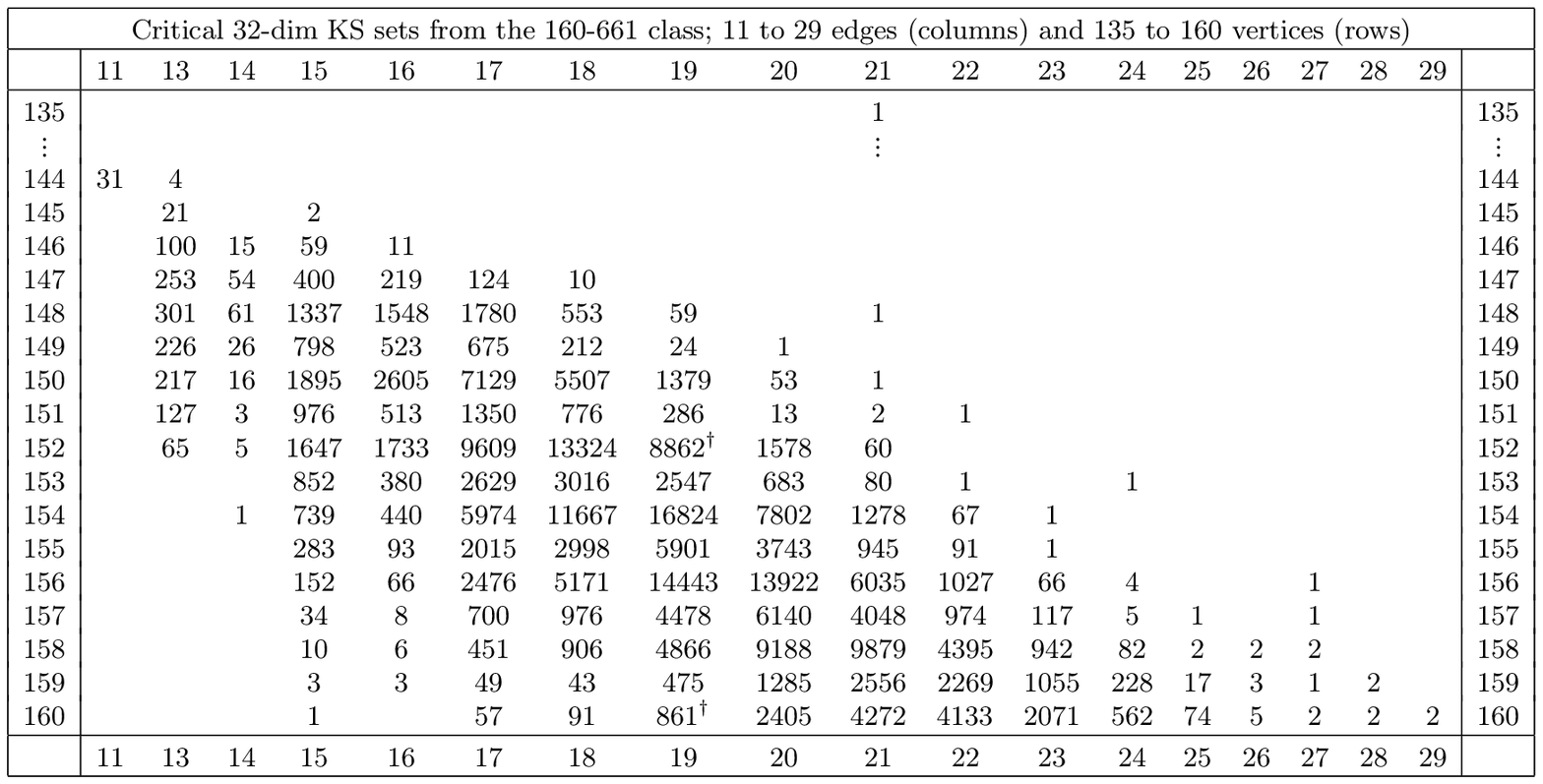}
\caption{List of 254,318 non-isomorphic 32-dim KS critical sets from the 
  160-661 class and 52-19 and 60-19 criticals we derived
  from Planat and Saniga's \cite{planat-saniga-12} non-critical 160-21;
  the latter criticals are included in 8862 \ \ 52-19s \ and \ 861 \
  \ 60-19s, respectively, and indicated by $^\dagger$ in the table.}
\label{table:32dim}
\end{figure}

\begin{figure}[hbt]
  \includegraphics[width=0.42\textwidth]{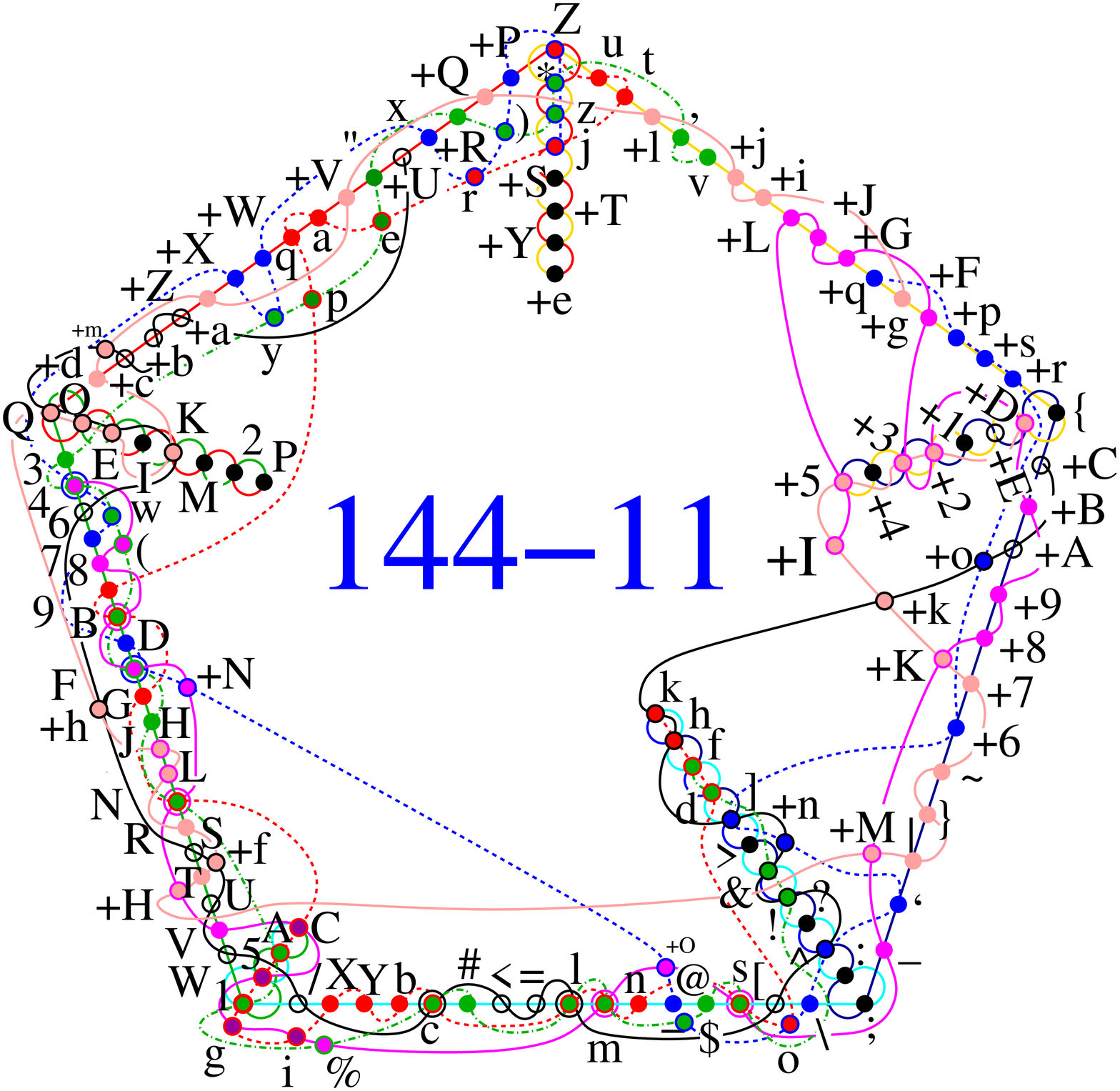}
\caption[32-dim KS sets]{144-11; one of 31 smallest 32-dim critical KS
sets.}
\label{fig:32-dim-smallest}
\end{figure}

It is interesting that a single KS set with 21 edges (the number of
edges of the KS set from \cite{planat-saniga-12}) has 9 vertices less
than any other set we found (135); indicated by $\vdots$ in the table
in Fig.~\ref{table:32dim}. This might stem from some geometrical
structure of the set or its smaller subset. We do not show the
135-21 hypergraph because it has almost twice as many edges as
144-11 and its figure would be much more difficult to read. Also
135-21 maximal loop is a square and its hypergraph has 47 vertices
outside of the loop as opposed to 24 such vertices of the 144-11
hypergraph shown in Fig.~\ref{fig:32-dim-smallest}. 

We find the complexity of KS criticals from the 32-dim 160-661 
class similar to the one of the 16-dim 80-265 class. Not a single
vertex shares only one edge, and some pairs of edges share 
16 vertices. The distribution of distances between maximal
bases was considered in \cite{planat-saniga-12} for a single
non-critical 160-21 set. In our approach such a distribution
does not play any role for either obtaining thousands of KS
criticals or proving that they really are KS sets.

The vector components of vertices from the set \{-1,0,1\}
for the master set they are listed in \cite{planat-saniga-12}.
As for all the sets from the previous classes given above we can
either trace them or generate them for any given hypergraph. 

Of all $2.5\times 10^5$ KS criticals we obtained, only 10.7\%\
have a parity proof. 

In contrast to all KS sets from the classes in smaller dimensions,
there are no maximal loops bigger than hexagons. There are 11.9\%\
of squares, 87.4\%\ of pentagons, and 0.7\%\ of hexagons.

The star/triangle KS set, although not admitting 0-1 states and
being critical, is far too complicated to be considered here. It
has (32+1)32/2=528 vertices and 32+1=33 edges which makes it far
bigger than any of the critical sets from the present class.
However, if one found a coordinatization for it would be the
biggest critical KS set of all known ones.

\section{3-dim KS Sets}
\label{sec:3d}

The successful generations of all the above presented KS sets in up
to 32 dimensions were enabled by newly found big master sets and
they were in turn derived from various polytopes (like, e.g.,
120-cell and 600-cell), or Lie algebras, or some involved
individual constructions which made use geometric symmetries of
even-dimensional spaces. Even without the big master sets a direct
generation of smaller KS sets is possible via our MMP algorithms
\cite{pmmm05a-corr} because in four and higher dimensional space
those KS sets are pretty small. Disparately, for the 3-dim space,
so far, no one has come forward with a master set and of the few
known 3-dim KS sets no one is small and all of them are critical
and cannot be lessened.

Since it would be very important to find more 3-dim KS
sets to gain a better insight into the structure of
contextual KS sets and enable new breakthroughs in their
generation and application algorithms and programs, in this
section we give MMP representations and KS hypergraphs of
the four known (the only known ones) 3-dim KS sets and
one that was claimed to be of such kind (the Yu-Oh 13-set),
but is not, as we show below.

The full specification of all vertices (their vector
components) is, as shown by Larsson \cite{larsson} and
Pavi\v ci\'c, Merlet, McKay, and Megill \cite{pmmm04c},
indispensable ``for an experimental realisation, which
involves procedures equivalent to basis rotations''
\cite[p.~332, end of the 1st par.]{held-09}.
E.g., spin-1 particle flying through a sequence of generalized 
Stern-Gerlach devices whose filters/paths correspond to 3 orthogonal 
{\em eigenprojections\/} of the spin observable 
\cite{anti-shimony} and we would not have a correct measurement
statistics if we ignored some of the vertices present in
particular edges.    

As shown in Fig.~\ref{fig:3-dim} 
Bub's \cite{bub}, Conway and Kochen's \cite{peres-book},
Peres' \cite{peres}, original Kochen and Specker's
\cite{koch-speck} KS sets and Yu and Oh's non-KS set
\cite{yu-oh-12}, have 49, 51, 57, 192, and 25 vertices,
respectively (and 36, 37, 40, 118, and 16 edges, respectively).
In Fig.~\ref{fig:3-dim}, the vertices that share only one edge
are denoted by fully greyed dots and grey ASCII characters. If we
ignored them in an implementation, we would be left with 33, 31,
33, 117, and 13 vertices, but then the measurements would give us
incorrect data as we explained above. Surprisingly, in all
presentations of their KS sets the aforementioned authors simply 
dropped the (grey) vertices that shared only one edge in an attempt
to present their KS sets as being smaller and therefore more
attractive for possible implementations.

Yet, all that vertices/vectors have definite vector components
in the coordinatization they made use of. Thus, it is just the
visual presentation of these KS sets in the original papers and
subsequent reviews in numerous articles and books of these sets
that are misleading, not the actual structures of them (which are
perfectly correct). 

\begin{figure}[hbt]
\includegraphics[width=0.99\textwidth]{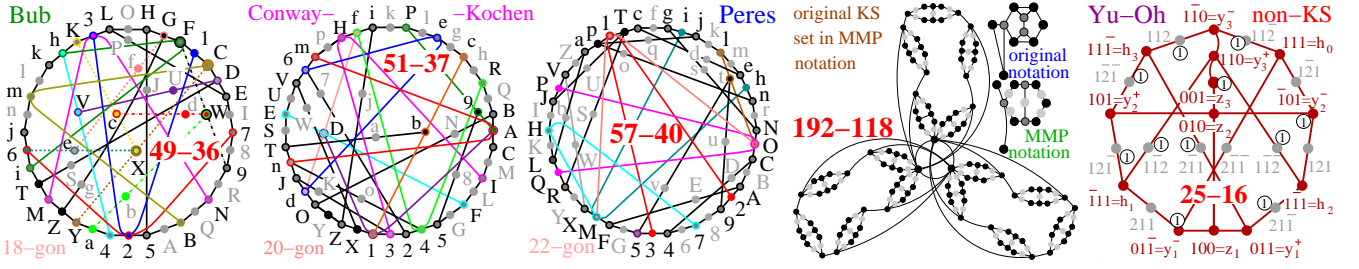}
\caption[3-dim KS Sets]{Four 3-dim KS sets: Bub's 49-36,
  Conway-Kochen's 51-37, Peres' 57-40, and original
  Kochen-Specker's 192-118---and Yu-Oh's non-KS set named
  ``13 vertices (rays) set'' according to 13 red (online;
  black in print) vertices; the components of each vector
  (vertex, ray) are from the set $\{0,\pm 1,\pm 2\}$; \=1 and
  \=2 stand for -1 and -2, respectively.}
\label{fig:3-dim}
\end{figure}

In 2012, Yu and Oh published a paper \cite{yu-oh-12} in which they
introduced a set with 13 vertices which they call a 13-ray
set---13-vertex set in our notation. The set is displayed in
Fig.~\ref{fig:3-dim} where the 13 vertices are shown as red dots
(in online version; black dots in printed version). Yu and Oh dropped
the vertices that share only one edge, shown as grey dots in the
figure (12 of them), following the aforesaid manner. In our figure
we see that after restoring the dropped vertices it is possible to
assign 0s and 1s to vertices from all edges. So, the Kochen-Specker
theorem \ref{th:ks} tells us that Yu-Oh's 13-vertex set is not a KS
set. Actually, Yu and Oh themselves cite the Bell-Kochen-Specker
theorem in the same wording as in Theorem \ref{th:ks} and admit that
their set does not satisfy the conditions of the theorem
\cite[p.~3, top]{yu-oh-12}. That can be formulated as the following
lemma.

\begin{lemma}\label{lemma:13} Yu-Oh's 13-vertex set is not a KS set.
\end{lemma}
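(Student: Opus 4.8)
The plan is to prove the lemma directly from the definition of a KS set in Theorem~\ref{th:ks}: a collection of orthogonal $n$-tuples is a KS set precisely when \emph{no} assignment of $1$s and $0$s to its vectors can simultaneously satisfy conditions (1) and (2). Hence, to show that the Yu--Oh $13$-vertex set is \emph{not} a KS set, it suffices to exhibit a single valuation $v:\{\text{vertices}\}\to\{0,1\}$ that does satisfy both conditions. The lemma is therefore an existence statement, and I would prove it constructively.

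First I would write down the \emph{complete} MMP hypergraph of the Yu--Oh set, restoring the $12$ vertices that share only one edge and that Yu and Oh suppressed in their figure. With $n=3$, this yields a $25$-vertex, $16$-edge hypergraph in which every edge is a triple of mutually orthogonal rays; the suppressed vertices are genuine members of the set, since they carry definite vector components in the Yu--Oh coordinatization and appear in the orthogonality triples.

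Next I would produce the required valuation explicitly. Condition (1) demands that the $1$-valued vertices be pairwise non-orthogonal, while condition (2) demands that every one of the $16$ edges contain at least one $1$-valued vertex; since each edge is a triple of mutually orthogonal rays, the two conditions together force exactly one $1$ per edge. I would choose a subset of vertices meeting these requirements, assign $1$ to its members and $0$ to all the rest, and then verify the conditions edge by edge. This finite, mechanical check is exactly what our program {\tt states01} carries out automatically, and it confirms the existence of a valid assignment, one explicit choice being visible in Fig.~\ref{fig:3-dim}.

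The substance of the lemma is conceptual rather than computational, and that is where I expect the only real subtlety to lie. The decisive point is that the suppressed pendant vertices must \emph{not} be ignored when testing the KS conditions: once they are reinstated, each edge gains enough freedom that a globally consistent $0/1$ colouring exists, whereas in a genuine KS set no such colouring exists no matter how the vertices are chosen. The contextuality that Yu and Oh do establish for their set is of the statistical, noncontextuality-inequality type, not the logical yes/no contradiction required by Theorem~\ref{th:ks}; indeed Yu and Oh themselves concede that their set does not satisfy the conditions of the theorem. The proof thus reduces to displaying the restored hypergraph together with one satisfying valuation.
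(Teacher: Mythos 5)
Your proposal is correct and takes essentially the same approach as the paper: the paper's proof also consists of exhibiting an explicit 0/1 valuation on the restored hypergraph (the encircled 1s in Fig.~\ref{fig:3-dim}) that satisfies both conditions of Theorem~\ref{th:ks}, which by definition shows the set is not a KS set. Your remarks on reinstating the suppressed pendant vertices and on the exactly-one-1-per-edge structure mirror the paper's surrounding discussion rather than diverging from it.
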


\begin{proof}
It is possible to assign 0s and 1s to vertices in such a way
that no two orthogonal directions are both assigned 1 and no three
mutually orthogonal directions are all assigned 0 as shown by
encircled 1s in Fig.~\ref{fig:3-dim}.
\end{proof}

Yu and Oh admit the validity of Lemma \ref{lemma:13} as follows:
``The KS value assignments to the 13-ray set are possible; i.e., no
logical contradiction can be extracted by considering conditions 1
and 2 [of Theorem \ref{th:ks}] only.'' Yet, they entitle
their paper {\em``State-Independent Proof of Kochen-Specker Theorem
with 13 Rays''} and on p.~3 they claim to have nevertheless ``proved
the original KS theorem'' \cite{yu-oh-12}. How come, when the Lemma
\ref{lemma:13} proves the contrary? Well, it seems to be a question
of misapplied terminology. In the paper they proceed to define a
new kind of contextuality through their inequalities (2), (3),
and (4) applied to their 13-non-KS-set, and then they mistakenly claim
that their proof of such a newly defined contextuality amounts to the
proof of the Kochen-Specker theorem. Only a KS set can be a proof of
the KS theorem since a violation of conditions 1 and 2 of the theorem
is tantamount to a definition of a KS set and therefore no non-KS can
prove the theorem since it does not violate them \cite{kochen-15}.
This does not mean that the contextuality Yu and Oh proved for their
13-set is wrong. This only means that via such a contextuality for
their 13-set one cannot prove the Kochen-Specker theorem simply
because the set is not a KS set.

Hence, we are left with the four big KS sets as the only known 3-dim
KS sets. (Gould and Aravind have proven that the so-called Penrose's
3-dim KS set is isomorphic to Peres' one
\cite{peres-penrose-gould-aravind09}.) It is well-known that all of
them are critical and our program {\tt states01} confirms that. So,
we cannot use them to generate smaller KS sets. (By the way, Yu-Oh's
13 vertex set is a subset of Peres' critical set and its criticality
is yet another avenue of proving that Yu-Oh's set cannot be a KS set
and therefore that it cannot prove the KS theorem \cite{kochen-15}.)
But when we look at their MMP hypergraphs we notice that the number of
grey dots, i.e., the number of vertices that share only one edge, 
increases with the total number of vertices within a KS set, 
in contrast to the opposite trend of KS sets from the 4-dim 60-105
class; Cf.~red circles in 29-16, 30-16, 31-18, and 60-40 in
Fig.~\ref{fig:ks-60-105-29-30}. In particular, there are 
16, 20, 24, and 75 such vertices (grey dots) in Bub, Conway-Kochen,
Peres, and Kochen-Specker's hypergraphs in Fig.~\ref{fig:3-dim},
respectively.  

Therefore, we conjecture that a more complex non-critical KS sets,
interwoven similarly to higher dimensional KS sets, with
comparatively low number of vertices, c.a.~50, might be found on
clusters and supercomputers and used to generate smaller 3-dim KS
criticals. This is a work in progress.

\section{\label{sec:discussion}Discussion}

In the past ten years the exploration and generation of
contextual sets, in particular, Kochen-Specker (KS) sets received
a lot of attention (see Sec.~\ref{sec:intro})
both for their possible applications and implementations and for
their further theoretical usage and development in quantum
mechanics and quantum information. The approaches to generation
of KS sets diversified and many partial results were achieved,
recently. Therefore, we have focused our efforts on the
unification of results, features, structure, and mutual relations
of different KS sets as well as on the development of technique and
method of their arbitrary exhaustive generation and handling. 

In pursuing this goal, we have concentrated neither on immediate
experimental implementation (small sets), nor on the standard 
parity proofs based algorithms. Instead, we have made use of the
general MMP hypergraph language by means of which we generated a
large number of new types of contextual KS critical sets and
numerous non-isomorphic instances within each of them, which mostly
cannot be generated by other known algorithms. The approach also
gave us the results we were not originally concerned with, such as
an abundance of novel small KS sets, and in addition provided us
with an explanation why the other approaches, like parity proof
ones, failed to spot them---it turns out that only a very few
KS sets have a parity proof (in some classes under 1\permil\ of
sets and in some none at all) what makes them completely invisible
for the parity proof based algorithms and programs, predominant in
the literature. 

Instead of parity proofs of only some KS sets with a particular
structure, the MMP hypergraph algorithms and methods enable direct
numerical proofs of the KS theorem for any chosen KS set via
literal verifying of KS theorem conditions: program {\tt states01\/}
gives a maximal number of 1s for a chosen KS set and after deleting
all edges that contain these 1s at least one edge should remain.
This is all automated but the user can easily check the output
MMP hypergraph strings by hand. When the MMP hypergraphs are drawn
as figures, the proofs also become ``visual'' as indicated in
Fig.~\ref{fig:24-24-class}(a) and
Fig.~\ref{fig:ks-60-74-26-38}$\,$(26-13) by dashed red ellipses.

We developed the hypergraph approach to KS sets introduced in
Sec.~\ref{sec:for} from the lattice theory of Hilbert spaces and the
way of assigning of 0-1 states to vertices of hypergraphs we took
over from the methods of dealing with discrete states defined on
those lattices. In particular, we redesigned our programs for
analysing the Hilbert lattice features and turned them into the
programs we used to build up an MMP-hypergraph-based language
(specified in Sec.~\ref{sec:for}) for generating, analysing,
filtering, and modifying KS sets. We made use of this language to
obtain numerous novel KS sets and classes and their features in
4-dim (in Secs.~\ref{sec:6074}, \ref{sec:60105}, \ref{sec:300675},
and \ref{sec:4dim-witt}), 6-dim (in Sec.~\ref{sec:6dim}), 8-dim
(in Sec.~\ref{sec:8dim}), 16-dim (in Sec.~\ref{sec:16dim}),
and 32-dim (in Sec.~\ref{sec:32dim}) Hilbert spaces. We also
reviewed the 3-dim KS sets Sec.~\ref{sec:3d}. In the table
in Fig.~\ref{table:overview} we list the most important properties
of critical KS sets we obtained and compare them with ones obtained
previously. 

\begin{figure}[hbt]
\includegraphics[width=0.99\textwidth]{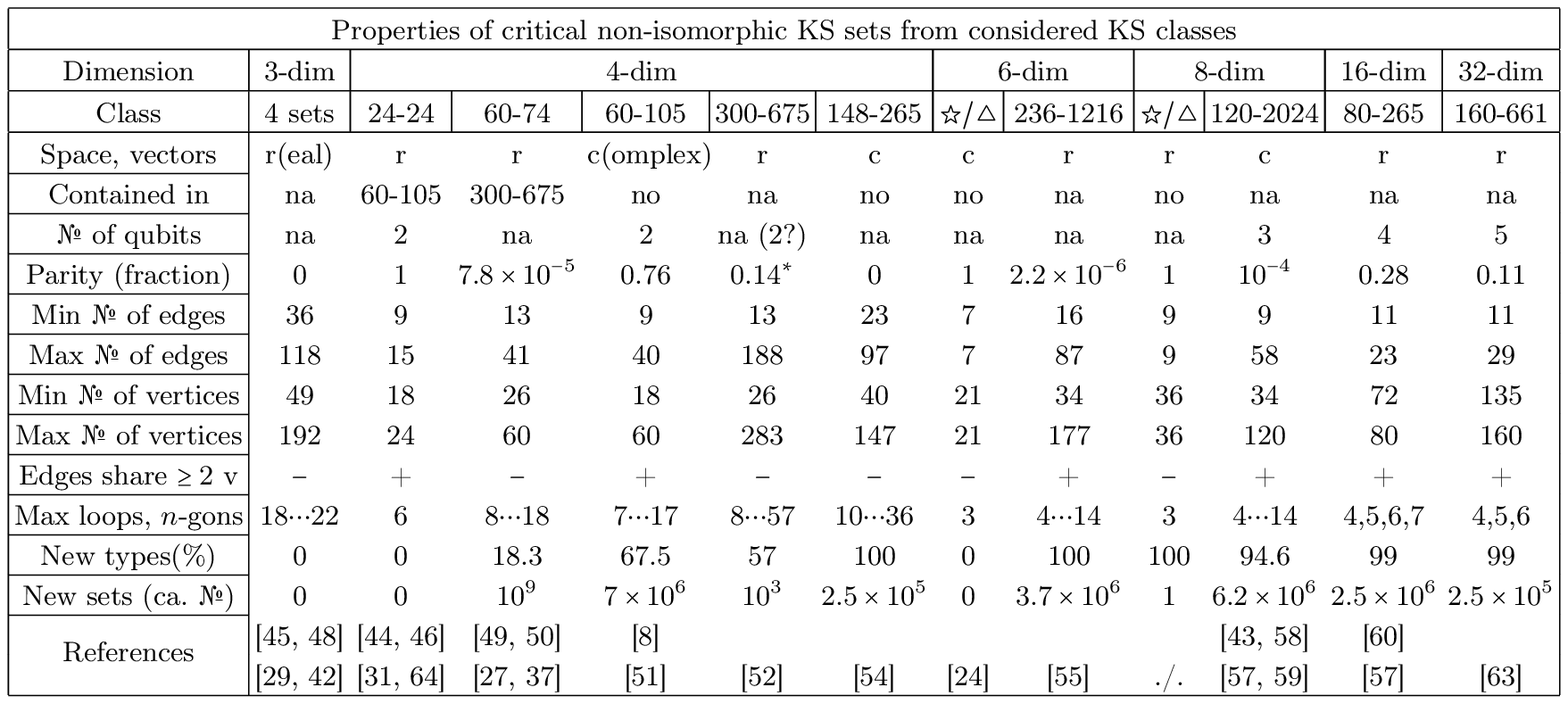}
\caption{List of properties of generated critical KS sets  and their
  comparison with the previously obtained sets; ``na'' stands for
  ``not applicable,'' e.g. for the biggest known classes; ``2?''
  refers to the claim in \cite{waeg-aravind-fp-14} that the operators
  defining the 300-675 might be redefined to allow a representation
  via 2 qubits; $^*$ in 0.14$^*$ for 300-675 indicates unevenly
  distributed parity proofs: all lower criticals have it while all
  higher ones (211-127 to 283-188) lack it;
  ``Edges share $\ge 2$ v'' = edges share 2 or more vertices, i.e.,
  intersect each other two or more times; 
  ``New types'' ``0'' for, e.g., 24-24 means that there are no new types
  of 24-24 KS criticals in this paper---they are only elaborated on
  and discussed here; ``18.3'' (for 60-74) means: 28 (new types
  obtained in this paper for the first time; others were obtained
  in \cite{mfwap-s-11,mp-nm-pka-mw-11,aravind10,waeg-aravind-megill-pavicic-11})
  / 153 (total number of types) = 18.3\%; ``100''(for 236-1216) means
  that all types are from this paper; ``New sets'' give \textnumero\ of
  sets obtained in this paper for the first time.} 
\label{table:overview}
\end{figure}

The large variety of KS sets we obtain provides a much greater 
choice for KS experiments and insight into their properties 
and the properties of gates that handle them as well as the 
properties of quantum sets in general. While smaller KS sets 
are currently preferred for a feasibility of experimental 
implementations, in the future other sets that are intrinsically
different (i.e. non-isomorphic) may become desirable for more
sophisticated  experiments, verification of the KS theorem with
different setups, etc., especially because bigger sets do not require
higher efficiency of measurements but only a higher number of
measurements. Since the sets we found are critical, there will
be no redundancy in any experimental setup making use of them.

Finally, we want to stress that our generation of KS sets in 16-dim
and 32-dim spaces allowing for their implementation by means of 4 and
5 qubits, respectively, is---as, actually, all generations in this
paper---vector/vertex-based and therefore complementary
to a recent operator-based generation of KS sets for 4, 5, and 6
qubits (explicitly, and more of them, in principle) by Waegell and
Aravind \cite{waeg-aravind-pra-13}. Building a correspondence between
the two approaches (via eigenvectors of their operators) is a work
in progress so that we, as of yet, cannot say to which extent our
results overlap. We can only say that we have not obtained KS
criticals with 9 edges (bases, in their terminology) for 4 and
5 qubits, as they have. Our minimal number of edges for the
corresponding two classes is 11, as shown in the tables in 
Figs.~\ref{table:16dim}, \ref{table:32dim}, and \ref{table:overview}. 
We did not include 16-dim and 32-dim KS criticals with 9 edges from 
\cite{waeg-aravind-pra-13} into our tables because we were not able
to find out whether they, respectively, belong to the 80-265 and
160-661 classes or not.

The generation of KS critical sets we presented in this work is,
with our algorithms and programs, straightforward but demanding and
CPU-time consuming. The jobs require cluster and grid calculations and
even with them it might take months to obtain a required or desired
particular set which might be needed in elaboration, confirmation,
or checking particular assumptions about construction, unification,
geometry, computation, or implementation of contextual sets.
We provide samples of the KS criticals in the MMP hypergraph notation
in the Supplemental Material. They are just a tiny fraction of TBs of
data we generated but the reader can obtain any KS sets from us upon
``appropriate'' request (e.g., the 60-74 file has 231 GB).
Also, the reader her/himself may generate any KS set by making
use of our programs that are freely available at our repository
{\tt http://goo.gl/xbx8U2}.

See Appendix \ref{app:B} for MMP hypergraph strings of chosen KS 
critical sets from most of the types from all classes we considered 
in the paper. 

\begin{acknowledgments}
Supported by the Alexander or Humboldt Foundation, the Croatian
Science Foundation through project IP-2014-09-7515, and the Ministry
of Science, Education, and Sport of Croatia through the CEMS funding.
Computational support was provided by the cluster Isabella of the
Zagreb University Computing Centre and by the Croatian National
Grid Infrastructure. The programs that were used in this work
were mostly written by Norman D.~Megill and his upgrading of
the programs in the course of obtaining the reported results is
gratefully acknowledged. P.K.~Aravind and Mordecai Waegell's
sending of their excel formatted versions of the master sets from
\cite{aravind-waegell-6dim-private,waeg-aravind-fp-14,waeg-aravind-jpa-15,waeg-aravind-17-arXiv}
and their communication with us on the details concerning these sets
and papers are gratefully acknowledged. Michel Planat's
sending of his master set from \cite{planat-saniga-12} is also
gratefully acknowledged.
\end{acknowledgments}

\appendix

\parindent=0pt

\section{\label{app:0}MMP Hypergraph Strings Referred
  to in the Article}

\subsection{\label{app:0-1}Section \ref{sec:6074}}

\begin{sloppypar}

{\bf Master set 60-75} {\1{1234, 5678, 9ABC, DEFG, HIJK, LMNO, PQRS, TUVW, XYZa, bcde, fghi, jklm, maSK, lZRJ, kYQI, jXPH, ieWO, hdVN, gcUM, fbTL, nopq, rstu, vwxy, yuqG, xtpF, wsoE, vrnD, pdYC, qeZB, ocXA, nba9, uROC, tQNB, sPM9, rSLA, wUHC, xVIA, yWJ9, vTKB, jgEB, liGA, khF9, mfDC, pLJ8, qMI7, nNH6, oOK5, vhX8, ygY5, xfZ6, wia7, ukc6, rjd7, tlb5, sme8, UQG8, TRF7, VPD5, WSE6, tXW4, rZU3, sYT2, uaV1, ofQ1, piP3, ngR4, qhS2, xjO2, wkL4, vlM1, ymN3, eHF1, cJD2, bIE3, dKG4.}}

\smallskip  
{\bf Master set 60-74} {\1{1234, 5678, 9ABC, DEFG, HIJK, LMNO, PQRS, TUVW, XYZa, WOGC, VNFB, UMEA, TLD9, aSK8, ZRJ7, YQI6, XPH5, bcde, fghi, jklm, mie4, lhd3, kgc2, jfb1, ndIB, oeJA,  pcK9, qbHC, reF8, sdE5, tbD6, ucG7, rfY9, uhXA, tiaB, sgZC, nkT8, plV6, qmU7, ojW5, piRE, ogSD, qhQF, nfPG, smNK, ujLI, rlMH, tkOJ, sTQ1, uVS4, tUP3, rWR2, oYN3,  nZM4, qaL2, pXO1, qpon, utsr, vePL, wcQM, xbRN, ydSO, vkYE, yjZF, wmXD, xlaG, wfVJ, xgUI, yiTH, vhWK, x954, vB71, yA62, wC83.}}

\smallskip  
 {\bf 60-39} {\14123, 3hsS, SQRP, PcwO, OLMN, NiX5, 5786, 6teF, FDEG, GVgp, pnqo, oabA, ABC9, 9fxK, KHIJ, JUd4,,, bcde, ujHZ, 8vTl, 2mMn, 7IRq, EyY1, WLrB, QCkD, IlXt, Uxj7, buE5, n8QH, 1KcT, arKk, hnUE, m5Ww, t1gQ, yHiA, 7Dsc, Of8Y, VbI2, Z2Dv, AtOU.}

\smallskip  
{\bf 60-41} {\12341, 1gQt, t6eF, FjBS, SydM, MLON, NiX5, 5mWw, whCI, I7Rq, qnpo, oAba, arkK, Kfx9, 9EPl, l8Tv, vZD2,,, 5678, 9ABC, DEFG, HIJK, PQRS, bcde, VgpG, hs3S, 2mMn, EyY1, WLrB, JUd4, QCkD, Uxj7, buE5, n8QH, 1KcT, 9Zdq, 7Dsc, CTuM, Of8Y, VbI2, iTFq, BcnX.}
\end{sloppypar}

\subsection{\label{app:0-2}Section \ref{sec:60105}}

\begin{sloppypar}
{\bf Master set 60-105}
{\11234, 5678, 9ABC, DEFG, HIJK, LMNO, PQRS, TUVW, XYZa, bcde, fghi, jklm, nopq, rstu, vwxy, 12FG, 12RS, 13MO, 13UW, 14gh, 14pq, 23fi, 23no, 24LN, 24TV, 34DE, 34PQ, 56JK, 56VW, 57EG, 57Ya, 58kl, 58oq, 67jm, 67np, 68DF, 68XZ, 78HI, 78TU, 9ANO, 9AZa, 9BIK, 9BQS, 9Ccd, 9Cnq, ABbe, ABop, ACHJ, ACPR, BCLM, BCXY, DERS, DFYa, DGde, DGtu, EFbc, EFrs, EGXZ, FGPQ, HIVW, HJQS, HKhi, HKsu, IJfg, IJrt, IKPR, JKTU, LMZa, LNUW, LOlm, LOru, MNjk, MNst, MOTV, NOXY, PSkm, PSxy, QRjl, QRvw, TWce, TWwy, UVbd, UVvx, Xagi, Xavy, YZfh, YZwx, bctu, bdwy, benq, cdop, cevx, ders, fgsu, fhvy, fipq, ghno, giwx, hirt, jkru, jlxy, jmoq, klnp, kmvw, lmst.}

\smallskip
{\bf Master set 24-24 as a subgraph of the master set 60-15 with original complex vector components:} {\11234, 5678, 9ABC, DEAC, FG9B, 9CHI, ABJK, HJKI, DFGE, 58JI, 67HK, 56GE, 78DF, LMNO, N4BC, O39A, 1MKI, 2LHJ, 1LFE, 2MDG, 57O4, 68N3, NO34, 12LM.$\{$1=(1,0,0,0), 2=(0,1,0,0), L=(0,0,1,0), M=(0,0,0,1), 5=(1,1,$i$,$i$), 6=(1,-1,$i$,-$i$), 7=(1,1,-$i$,-$i$), 8=(1,-1,-$i$,$i$), N=(1,1,0,0), O=(1,-1,0,0), 3=(0,0,1,1), 4=(0,0,1,-1), D=(1,0,$i$,0), F=(0,1,0,$i$), G=(1,0,-$i$,0), E=(0,1,0,-$i$), 9=(1,1,$i$,-$i$), A=(1,1,-$i$,$i$), B=(1,-1,$i$,$i$), C=(1,-1,-$i$,-$i$), H=(1,0,0,$i$), J=(1,0,0,-$i$), K=(0,1,$i$,0), $I$=(0,1,-$i$,0)$\}$}

\smallskip
{{\bf 20-11a with coordinatization from 60-105:} \11234, 5678, 19A8, 5BC4, DEFG, HIJK, 6BDF, A7HJ, C3HI, 29DE, HKDG.$\{$1=(1,0,0,0), 2=(0,1,0,0), 9=(0,0,1,0), 5=(1,1,$i$,$i$), 6=(1,-1,$i$,-$i$), B=(1,1,-$i$,-$i$), C=(1,-1,0,0), 3=(0,0,1,1), 4=(0,0,1,-1), A=(0,1,0,$i$), 7=(1,0,-$i$,0), 8=(0,1,0,-$i$), H=(1,1,$i$,-$i$), $I$=(1,1,-$i$,$i$), J=(1,-1,$i$,$i$), K=(1,-1,-$i$,-$i$), D=(1,0,0,$i$), E=(1,0,0,-$i$), F=(0,1,$i$,0), G=(0,1,-$i$,0)$\}$}

\smallskip
{{\bf 18-9 with coordinatization from 60-105:} \11234, 1567, 869A, BACD, ECFG, H3IF, H4B9, 25ED, 87IG.$\{$1=(1,0,0,0), 2=(0,1,0,0), 5=(0,0,1,0), H=(1,-1,0,0), 3=(0,0,1,1), 4=(0,0,1,-1), 8=(1,0,-1,0), 6=(0,1,0,-1), 7=(0,1,0,1), B=(1,1,-1,-1), 9=(1,1,1,1), A=(1,-1,1,-1), E=(1,0,0,-1), C=(0,1,1,0), D=(1,0,0,1), I=(1,1,1,-1), F=(1,1,-1,1), G=(1,-1,1,1)$\}$}

\smallskip
{{\bf 20-11b with coordinatization from 60-105:} \11234, 1567, 2589, ABCD, EFGH, 8IFH, ADI9, J7EH, J6CD, K4GH, K3BD.$\{$1=(1,0,0,0), 2=(0,1,0,0), 5=(0,0,1,0), K=(1,1,0,0), 3=(0,0,1,1), 4=(0,0,1,-1), J=(1,0,-1,0), 6=(0,1,0,-1), 7=(0,1,0,1), A=(1,1,-1,-1), B=(1,-1,-1,1), C=(1,1,1,1), D=(1,-1,1,-1), 8=(1,0,0,-1), I=(0,1,1,0), 9=(1,0,0,1), E=(1,1,1,-1), F=(1,1,-1,1), G=(1,-1,-1,-1), H=(1,-1,1,1)$\}$}
\end{sloppypar}

\subsection{\label{app:0-3}Section \ref{sec:300675}}

\begin{sloppypar}

{\bf 38-19} \
{\14123,3C6L,LVYE,EDGF,FOQP,PZaS,SNRJ,JIKH,HTXM,M794,,,5678,9ABC,B5N1,8OA2,TUVW,baXG,ZYcI,QKbU,cRDW}

\smallskip
{\bf 42-21} \
{\12143,36CP,PIOe,eGcW,WVXU,ULdg,gFHa,aRZY,YbfT,TJNS,S8A2,,,5678,9ABC,DEFG,HIJK,LMNO,479Q,15BR,Qbcd,EVMZ,DKfX.}

\smallskip
{\bf 48-25}
{\11243,36CP,Pcde,eVUX,XQYZ,ZmiE,EFGD,DabO,OMNL,LIhk,kjgR,R5B1,,,5678,9ABC,HIJK,479Q,28AS,SJTU,RVNW,GHfg,Sahi,Fjdl,\break cTYW,SfMl,RKmb.}

\smallskip
{\bf 221-127} {\small\tt ++S++T++U++V, ++V++5++6++7, ++7+j+k+", +"+z+!+`, +`+\^{}+\_++R, ++R++O++P++Q, ++Q+g+i+h, +h+V+e+>, +>+<+=+?, +?|\}+T, +T+R+m+S, +SQ\#++E, ++E++C++D++F, ++Fy+fz, zRt++8, ++8++9++B++A, ++A+b+c+a, +a\{+/J, JN>+M, +M+L++3+|, +|+\{+\textasciitilde +\}, +\}+\&+'+(, +(D+d++L, ++L++K++M++N, ++NU;S, S!+s@, @f+Ed, dcu++H, ++HBI+\$, +\$+\%++4+\#, +\#AL+B, +B+9+A+[, +[+@+\textbackslash +], +]+w+y+x, +x6\^{}E, ECs+o, +o+p+q+r, +r+I+J+K, +K+G++J+H, +H-+:+8, +8VW\&, \&\$+W\%, \%O+P++G, ++GXYZ, Zn+3h, hgxi, iP+t++S {\bf\,,\,,\,,\,} ++G++H++I++J, ++1++2++3++4, +-+/+:+;, +)+*+;+\textasciitilde, +s+t+u+v, +l+m+n++U, +d+e+f++B, +X+Y+Z++A, +W+n+!++A, +V+Z+z++6, +U+i+\textbackslash ++6, +N+O+P+Q, +F+c++6++I, +C+D+E+J, +6+7+8+[, +5+W++2++6, +3+4+(++A, \textasciitilde +1+2++I, \_`\{+v, ]\^{}+*++1, @[\textbackslash +Y, =>?+k, <\textbackslash +2++5, /:;++P, '()*, \#-+u++O, "*+K+>, !+J+Q++R, wx+7++F, vx+U+>, stu<, ru+4+q, pq?++5, o)+`++J, mn++9++T, l*+y+\_, jk[+t, ef(+b, ab@+\{, TU+g++H, Pv+M+\^{}, NR+I++M, Mq++M++O, KL-++7, w+P+=++U, \&:+U+\%, HI'+x, GNry, EFQ(, M/+4+E, QT+a+', Tf+l++D, Su+D++Q, CD+J++8, F+f+[+\_, t+3+B++N, s+K++F++K, u+\#+>++C, 9+3+J+f, 8k++F++V, 7+p++8++K, 6=+@++B, \^{}+O+d+j, 5H++P++V, IKh++R, 79O++C, H\{+A++B, 5n=+X, JZ+<++S, MS]++D, 9L++9++L, \$\^{}+[++E, G+o+(++C, k+3+T+\^{}, +I+V+(++S, Or+K+d, 4=+w++E, 3G+e++K, 2:+X++5, 1Cw+\^{}.}
\end{sloppypar}

\subsection{\label{app:0-4}Section \ref{sec:4dim-witt}}

\begin{sloppypar}

{\bf 40-23} \
{\small\tt 3124, 49AB, BaVL, LMNE, ECD8, 8567, 7IJK, KRSQ, QOPH, HFG3 ,,, TUV6, WXP5, YZSD, abXG, cZN2, deR1, ebMJ, dWCA, cUIF, cbOC, daYI, WSLF, eZVP.} \ \ {\small  {\tt 1}=(1,0,$\omega^2$,$\omega^2$), {\tt 2}=(1,$\omega$,0,-$\omega^2$), {\tt 3}=(1,-$\omega$,-$\omega^2$,0), {\tt 4}=(0,1,-$\omega$,$\omega$), {\tt 5}=(1,$\omega$,0,-$\omega$), {\tt 6}=(1,0,$\omega$,$\omega$), {\tt 7}=(0,1,-1,1), {\tt 8}=(1,-$\omega$,-$\omega$,0), {\tt A}=(1,0,0,0), {\tt B}=(0,1,-$\omega^2$,1), {\tt C}=(0,0,0,1), {\tt D}=(1,-1,-$\omega^2$,0), {\tt E}=(1,-$\omega^2$,-1,0), {\tt F}=(1,$\omega$,0,-1), {\tt G}=(0,1,-$\omega$,$\omega^2$), {\tt H}=(1,0,$\omega^2$,1), {\tt I}=(1,0,1,1), {\tt J}=(1,1,0,-1), {\tt K}=(1,-1,-1,0), {\tt L}=(1,$\omega^2$,0,-$\omega^2$), {\tt M}=(0,1,-$\omega$,1), {\tt N}=(1,0,1,$\omega^2$), {\tt O}=(1,-$\omega^2$,-$\omega^2$,0), {\tt P}=(1,$\omega^2$,0,-1), {\tt Q}=(0,1,-1,$\omega$), {\tt R}=(1,0,1,$\omega$), {\tt S}=(1,1,0,-$\omega$), {\tt U}=(0,1,-$\omega^2$,$\omega^2$), {\tt V}=(1,-$\omega^2$,-$\omega$,0), {\tt W}=(0,0,1,0), {\tt X}=(1,1,0,-$\omega^2$), {\tt Y}=(1,0,$\omega^2$,$\omega$), {\tt Z}=(0,1,-$\omega^2$,$\omega$), {\tt a}=(1,0,$\omega$,$\omega^2$), {\tt b}=(1,-1,-$\omega$,0), {\tt c}=(1,-$\omega$,-1,0), {\tt d}=(0,1,0,0), {\tt e}=(1,0,$\omega$,1)}

\smallskip
{\bf 49-27} \
{\small\tt 3241, 1675, 5XHW, WCLn, nEmj, jiIc, caZb, blJD, DghO, OPQN, NRS9, 98A3 ,,, 3BCD, 3EFG, 4HIJ, 4KLM, TUVS, 5YFZ, aUGM, dOEH, deCZ,  dRfM, geXK, 6kfh, 6UCI, lkFL, 7RBm.} \ \ {\small{\tt 1}=(0,0,0,1), {\tt 3}=(1,0,0,0), {\tt 4}=(0,0,1,0), {\tt 5}=(1,1,-$\omega$,0), {\tt 6}=(1,$\omega^2$,-$\omega^2$,0), {\tt 7}=(1,$\omega$,-1,0), {\tt 9}=(0,1,-1,-$\omega^2$), {\tt B}=(0,1,$\omega^2$,$\omega$), {\tt C}=(0,1,1,1), {\tt D}=(0,1,$\omega$,$\omega^2$), {\tt E}=(0,1,1,$\omega$), {\tt F}=(0,1,$\omega$,1), {\tt G}=(0,1,$\omega^2$,$\omega^2$), {\tt H}=(1,-1,0,$\omega$), {\tt I}=(1,-$\omega^2$,0,$\omega^2$), {\tt J}=(1,-$\omega$,0,1), {\tt K}=(1,-$\omega^2$,0,1), {\tt L}=(1,-$\omega$,0,$\omega$), {\tt M}=(1,-1,0,$\omega^2$), {\tt N}=(1,-1,-1,0), {\tt O}=(1,0,1,-$\omega$), {\tt R}=(1,0,1,-$\omega^2$), {\tt S}=(1,1,0,$\omega^2$), {\tt U}=(1,0,$\omega^2$,-$\omega^2$), {\tt W}=(1,0,$\omega$,-$\omega$), {\tt X}=(0,1,$\omega$,$\omega$), {\tt Z}=(1,-1,0,1), {\tt a}=(1,1,-$\omega^2$,0), {\tt b}=(1,0,$\omega^2$,-1), {\tt c}=(0,1,$\omega^2$,1), {\tt d}=(1,1,-1,0), {\tt e}=(1,0,1,-1), {\tt f}=(0,1,1,$\omega^2$), {\tt g}=(1,$\omega^2$,-1,0), {\tt h}=(1,-$\omega^2$,0,$\omega$), {\tt j}=(1,0,$\omega$,-$\omega^2$), {\tt k}=(1,0,$\omega^2$,-$\omega$), {\tt l}=(1,$\omega$,-$\omega^2$,0), {\tt m}=(1,-$\omega$,0,$\omega^2$), {\tt n}=(1,$\omega$,-$\omega$,0)}

\end{sloppypar}

\subsection{\label{app:0-6}Section \ref{sec:8dim}}

\begin{sloppypar}

{\bf 8dim $\mathbf\medstar$} \
{\small\tt 12345678,89ABCDEF,FGHI4JKL,L7MNBOPQ,QERSI3TU,UK6VNAWX,XPDYSH2Z,ZTJ5VM9a,aWOCYRG1.}

\smallskip
{\bf 8dim $\mathbf\bigtriangleup$} \
{\small\tt 12345678, 89ABCDEF,FGHIJKL1,L2MRVYaE,KM3NSWZD,JRN4OTXC,IVSO5PUB,HYWTP6QA,GaZXUQ79.},

\smallskip
{\bf Coordinatization for $\mathbf\bigtriangleup$} \
 {\small {\tt 1}=(0,0,0,0,0,0,0,1), {\tt 2}=(0,0,0,0,0,0,1,0), {\tt 3}=(0,0,0,0,0,1,0,0), {\tt 4}=(0,0,0,0,1,0,0,0), {\tt 5}=(0,0,1,1,0,0,0,0), {\tt 6}=(0,0,1,-1,0,0,0,0), {\tt 7}=(1,1,0,0,0,0,0,0), {\tt 8}=(1,-1,0,0,0,0,0,0), {\tt 9}=(0,0,0,0,0,0,1,1), {\tt A}=(0,0,1,1,1,-1,0,0), {\tt B}=(1,1,0,0,0,0,-1,1), {\tt C}=(1,1,0,0,0,0,1,-1), {\tt D}=(0,0,1,0,-1,0,0,0), {\tt E}=(0,0,0,1,0,1,0,0), {\tt F}=(0,0,1,-1,1,1,0,0), {\tt G}=(0,0,0,1,1, 0,0,0), {\tt H}=(0,0,1,1,-1,1,0,0), {\tt I}=(1,0,0,0,0,0,1,0), {\tt J}=(0,0,1,0,0,-1,0,0), {\tt K}=(1,0,0,0,0,0,-1,0), {\tt L}=(0,1,0,0,0,0,0,0), {\tt M}=(0,0,1,0,1,0,0,0), {\tt N}=(0,0,0,1,0,0,0,0), {\tt O}=(1,-1,0,0,0,0,-1,-1), {\tt P}=(0,0,0,0,1,1,0,0), {\tt Q}=(1,-1,0,0,0,0,-1,1), {\tt R}=(1,0,0,0,0,0,0,1), {\tt S}=(0,1,0,0,0,0,0,-1), {\tt T}=(0,1,0,0,0,0,-1,0), {\tt U}=(0,0,1,-1,1,-1,0,0), {\tt V}=(0,0,1,-1,-1,1,0,0), {\tt W}=(1,1,0,0,0,0,1,1), {\tt X}=(0,0,1,0,0,1,0,0), {\tt Y}=(1,0,0,0,0,0,0,-1), {\tt Z}=(1,-1,0,0,0,0,1,-1), {\tt a}=(0,0,1,1,-1,-1,0,0)}

\smallskip
{\bf Coordinatization for $\mathbf\medstar$} \
{\tt 1-8}---as for $\mathbf\bigtriangleup$, {\small {\tt 9}=(1,1,0,0,0,0,-1,1), {\tt A}=(0,0,1,1,1,-1,0,0), {\tt B}=(0,0,0,0,0,0,1,1), {\tt C}=(0,0,1,-1,1,1,0,0), {\tt D}=(0,0,0,1,0,1,0,0), {\tt E}=(0,0,1,0,-1,0,0,0), {\tt F}=(1,1,0,0,0,0,1,-1), {\tt G}=(0,0,1,0,0,-1,0,0), {\tt H}=(1,0,0,0,0,0,0,1), {\tt I}=(0,0,0,1,0,0,0,0), {\tt J}=(1,-1,0,0,0,0,-1,-1), {\tt K}=(0,1,0,0,0,0,-1,0), {\tt L}=(0,0,1,0,0,1,0,0), {\tt M}=(0,0,1,-1,1,-1,0,0), {\tt N}=(1,-1,0,0,0,0,-1,1), {\tt O}=(0,0,0,1,1,0,0,0), {\tt P}=(0,0,1,1,-1,-1,0,0), {\tt Q}=(1,-1,0,0,0,0,1,-1), {\tt R}=(1,0,0,0,0,0,-1,0), {\tt S}=(0,0,1,0,1,0,0,0), {\tt T}=(0,1,0,0,0,0,0,-1), {\tt U}=(1,1,0,0,0,0,1,1), {\tt V}=(0,0,0,0,1,1,0,0), {\tt W}=(0,0,1,1,-1,1,0,0), {\tt X}=(1,0,0,0,0,0,0,-1), {\tt Y}=(0,1,0,0,0,0,0,0), {\tt Z}=(0,0,1,-1,-1,1,0,0), {\tt a}=(1,0,0,0,0,0,1,0).}

\end{sloppypar}

\subsection{\label{app:0-7}Section \ref{sec:16dim}}

\begin{sloppypar}

{\bf 16dim 80-21}
{\small{\tt Zbhjprsv\$(*-:<=@, HIPQZbdehjlmpqvx, Zehmoqwxz!"\#')>?, 12457BCGLMPQZbdm, 378GIJKLOPUVXgik, 12346789ACEGJKOV, 16ACNOVWXYakny!\#, HIKLMPQRTUVWbhlm, KRVWXYabcfghiklm, 129AKRVWbhlmqrv", 37BFXYfgstvwxy"\#, 123456789ABCDEFG, 23ABCDEG\$\%:;<>?@, Xafiuwy!\$()/:;=?, 46CEpsuwxyz!\$(/;, ILQTnot\#\%\&'*<=?@, 5BDFLMNRSTUWrsz", 5BDFHINPQRSWpqvx, XYacfgik\$\%\&(*/;@, HIKLMOPQSTUW\&(;@, 16ACXYadeklm\%\&/;.}}

\smallskip
{\bf 16dim 80-20}
{\small\tt 123456789ABCDEFG, HIJKLMNOPQRSTUFG, VWXYZabcdeRSTUDE, fghijklmndeQABCG, opqrstmn456789BC, uvwxyz!"jklbce3E, \#\$\%\&()"ilYZacde, *-z!stghjkmn89BC, /:;\textless qrstfghXPQ2D, =;\textless xyoprtWOU5679, \textgreater ?-\$\%\&'()w!WZaOT, ()vwz!"plnceMN7C, @?;\textless *-vwxyz!ghjk, \textgreater =:fVWIJKLMNRS1A, \textless 'y"rshikYaeLS49, \textgreater =\textless \&)xgkVWHKNOTU, ?*'v!ofnZJPS17BG, /uXbHOPQTU23DEFG, :\#\&(u"XYcINP12EF, @?*-\&(uvwz!XIN2E.}
    
\smallskip
{\bf 16dim 80-19a}
{\small\tt 123456789ABCDEFG, HIJKLMNOPQRSDEFG, TUVWXYZabcdefgRS, hijklmnofgNOPQFG, pqrsdeLM9ABCDEFG, tuvwpqrsdeJKLMPQ, xyz!"\#vwbc5678BC, \$\%\&'()xyz!"\#Zabc, *-/\!:()"\#smnoXYMO, ;\textless \!/\!:rklmnoMN3478, =\textless :\&')z!\#loWacIS, \textgreater ?*-hijnUVXYfgNO, \textless /\%)!"uwqjkoegAC, ?=-')!\#uvVWYIS9C, =*-/:moTXYHS2468, \textgreater *twrnUXMN3478AB, @UVWXYZabcIR1357, @\%\&(xy"TZcHKQREG, ?*pjUYefJQ2358EF.}

\smallskip
{\bf 16dim 80-19b}
{\small\tt 123456789ABCDEFG, HIJKLMNO9ABCDEFG, PQRSTUVWXYZaNO78, bcdefghijklmLM56, nopqfghijklmXYZa, rstuvwxyVWKM46FG, z!"\#vwxy12345678, "\#xypqjklmZaJO38, \$\%\&'(uimYaJN28EG, )*-\!/\!:'\!(thmUWYZCD, ;\textless /:!\#stwyIJ23BD, :z\#swxoqdeglRSTW, =\textgreater *-\%\&'(bcdePQST, :\$rsoqceglQRTWAG, ?@\textgreater \textless )-/\$\&(ruHM45, \textgreater *\%(opflPQSTXYZa, ?@;\textless )/:\$rstuRUVW, *-:'(!"tvybcPQUW, \textless )beQSHKLM14569D.}

\smallskip
{\bf 16d 72-11 4-gon}
{\small\tt 34AC1256789BEFGD, DEFGXYZabcdeUVWT, TUVWHJKLMOPQRSNI, IN*-\&\#\$zswnq4AC3 ,,, fghijklmnopqreSW, stuvwklmnopqrdRV, xyz!uvwqrcOPQVBC, "\#\$\%hijopbLMNW9A, \&'(\%twnrZaKN78FG, )'(!gjmpXYJQ56FG, *-)"xyfjmoHQ12AC.} 

\smallskip
{\bf 16d 72-11 5-gon}
{\small\tt 9ABC12345678EFGD, DEFGXYZabcdTUVWe, eWxyz!"\#\$Mstuvwr, rstuvwnopqNOmRSl, lmRSfghijkPQABC9 ,,, HIJKLMNOPQRSTUVW, \%\&'!"\#\$pqtuvwdLV, ()z\$oswkbcKS78FG, ()'\#nqvjZaJR56FG, *-\%\&\#quhiYIV34BC, *-xy\$stfgXHW12BC.} 

\smallskip
{\bf 16d 74-13 6-gon}
{\small\tt 12FG34569ADE8BC7, 78BC()*'"\#bcghia, abcghiXYZdefklmj, jklmnopqrstuUVWT, TUVWHKLNOQJMPRSI, IJMPRS:-/\$vw2FG1 ,,, vwxyrstulmVWDEFG, z!"\#pqtukmUW9ABC, \$\%\&'defghiNOPQRS, -/*\&xyYZcfhi56FG, :()\%z!KLMQRS34BC, XZbcefgiHJLMOPQS, notuZbeiklJLOSUV.} 

\smallskip
{\bf 16d 76-15 7-gon}
{\small\tt imghjVWM7FlXPACk, klXPAC)*-\$\%S:\&'/, /:\&';\textless "\#b1NO9BGx, xNO9BGdHQ2DEuJLq, quJLnrIKpst3456o, opst3456wcfTUYZ(, (wcfTUYZyvaemghi ,,, 123456789ABCDEFG, HIJKLMNOPQBCDEFG, RSTUVWXYZOPQAEFG, abcdefghYZNQ89DG, vwrstumefhUZKL56, yz!"\#\$\%\&'(mefgUY, -wprtujfRUXZKL46, ;\textless )*z!\&'npsuJK36.} 

\end{sloppypar}

\subsection{\label{app:0-8}Section \ref{sec:32dim}}

\begin{sloppypar}

{\bf 160-21}
{\small\tt 123456789ABCDEFGHIJKLMNOPQRSTUVW, 3XY45Za8b9CDEcdFIJKefghiMRjklmTW, nopqrstuvwxyz!"\#9IJK\$\%\&jklm'()*-, /:;\textless =\textgreater ?@[\textbackslash ]\^{}\_`\{|\}\textasciitilde +1+2+3+4+5+6+7+8+9+A+B+C+D+E, +F+G+H+I+J+K+L+M23+N4+O679AB+PDFGJ+QLNOPQSUV, 2+NXY45+O679A+PDEcdIJ+QKL\$Q\&jklm'SV-, 123XYa8bAB+PDEJ+QK\textasciitilde +R+S+T+3+4+5+U+8+V+W+X+B+C+D+Y, noqruwy"38CFfiM+Z\%O+ajkl+bmS(T)U*+cV, +G+H+d+J+e+K+f+g+NXY+OZab+Pcd+Qefghi+Z\$\%+ajm'),  /+h;+i+j\textgreater +k@+l\textbackslash +m\^{}\_+n\{+o+F+Gop+d+pqr+J+qv+ewx!+M, nqrvxyz\#13+NY457Z8b9ABCDEdFGHIJ+QK, 2679AIJKLMN\$\%OPQR\&jklm'S(T)U*V-W, +r\textbackslash \^{}+s+n`|+o2367Za8bACFeN\%+ajl(T)U*+cW,\break 
+t/+h:+j=+u\textgreater +l[+r\textbackslash +v\_+n`+w+R+S+2+T+3+4+x+U+6+7+8+9+D+y+z, +t/;+!+l[]+m13XY46ZaABCEcdGHM+Z+aR+bT+cW, \textasciitilde +R+2+3+U+6+"+\#+8+V+A+B+Y+E+y+zfg+ZN\$OQRjkS(T)+c-, +u\textgreater @+\$+l[]+m+v\_+n`\{+\%|+o1+NY+PcGIJgLi'S)*-, +F+\&no+H+'+dstv+e+Kwxy+M+w+R+2+4+5+x+U+6+(+V+A+C+D+)+Y+E, 3XY4589CDEcdFIJKfghLiMQRjklmSTVW, +F+\&no+'p+pq+Irs+qtuvwx+Lyz!+M"\#\&kl+b(*+c-, +t+h+!+i+j+u+k+\$+l+r+m+s+v+n+\%+o+w+R+S+T+x+U+"+\#+(+V+W+X+)+Y+y+z.}

\end{sloppypar}

\section{\label{app:B}Samples of KS Criticals in MMP
  Hypergraph Notation}

\small{Whenever possible, an attempt has been made to show the sets not
already drawn above. Smaller sets are presented in the MMP hypergraph
figure-to-draw format: the edges before ``{\tt ,,,}'' build the
maximal loop; ``{\tt *}'' following a character/vertex means that
the vertex belongs to an edge from the loop. Thus, the reader can
easily turn each of the following smaller MMP hypergraph strings into an
MMP hypergraph figure reversing the procedure we explained in the body
of the paper on how to turn any MMP hypergraph figure into an MMP
hypergraph string. Bigger sets are presented just as they were generated
(without maximal loop denotations)---only a space is inserted after each
comma for easier formatting; these spaces should be taken out before
processing.}

\subsection{\label{app:A-1}3-dim}

\linespread{0.5}
\smallskip
\small{Bub {\bf 49-36} (18-gon) {\tt 425,5AB,BQN,NR9,987,7IE,EDC,C1F,FGH,HOL,L3K,Khk,klm,mnj,j6i,iTM,MZY,Ya4,,,1*2*3*,2*6*7*,\break 5*JH*,3*M*N*,L*PE*,D*ST*,D*UV,F*9*W,XY*C*,a*bW,XK*c,cdW,X6*e,efG*,G*gZ*,h*i*F*,h*4*V,C*B*m*.}

Conway-Kochen {\bf 51-37} (20-gon) {\tt 132,245,5GF,FLI,IMC,CAB,BQR,Rhc,cge,elP,Pki,ifH,Hpm,m6V,VUE,EST,TnJ,JdO,OYZ,\break ZX1,,,2*6*7,5*89,DE*F*,3*H*I*,3*J*K,B*NO*,A*9P*,V*WX*,T*ab,1*bc*,d*6*e*,Dd*f*,f*4*R*,Z*ji*,m*n*A*,o4*n*,\break 1*A*D.} 

Peres {\bf 57-40} (22-gon) {\tt 534,467,789,92A,ABC,CON,Nrn,nhe,eml,lkj,jig,gfc,cT1,1pa,aZV,VPJ,JIH,HKL,LQR,RYX,XMF,\break FG5,,,1*2*3*,C*D4*,A*EF*,H*7*M*,O*P*Q*,R*ST*,T*UJ*,V*WX*,L*ba*,c*de*,M*h*i*,n*op*,p*qj*,g*sN*,tu9*,tl*O*,\break tv5*,1*M*O*.}

\smallskip
\begin{sloppypar}
Kochen-Specker {\bf 192-118} (18-gon) {\tt 123, 345, 567, 789, 9AB, BC1, DEF, FGH, HIJ, JKL, LMN, NOD, PQR, RST, TUV, VWX, XYZ, ZaP, bcd, def, fgh, hij, jkl, lmb, 1no, opq, qrs, stu, uvw, wx1, yz!, !"\#, \#\$\%, \%\&', '(), )*y, -/:, :;\textless , \textless =\textgreater , \textgreater ?@, @[\textbackslash , \textbackslash ]-, \textasciicircum \_`, `\{\textbar , \textbar \}\textasciitilde , \textasciitilde +1+2, +2+3+4, +4+5\textasciicircum , +6+7+8, +8+9+A, +A+B+C, +C+D+E, +E+F+G, +G+H+6, y+I+J, +J+K+L, +L+M+N, +N+O+P, +P+Q+R, +R+Sy, +T+U+V, +V+W+X, +X+Y+Z, +Z+a+b, +b+c+d, +d+e+T, +f+g+h, +h+i+j, +j+k+l, +l+m+n, +n+o+p, +p+q+f, +r+s+t, +t+u+v, +v+w+x, +x+y+z, +z+!+", +"+\#+r, +\$+\%+\&, +\&+'+(, +(+)+*, +*+-+/, +/+:+;, +;+\textless +\$, +T+=+\textgreater , +\textgreater +?+@, +@+[+\textbackslash , +\textbackslash +]+\textasciicircum , +\textasciicircum +\_+`, +`+\{+T, 4A+\textbar , GM+\}, SY+\textasciitilde , ek++1, pv++2, "(++3, ;[++4, \{+3++5, +9+F++6, +K+Q++7, +W+c++8, +i+o++9, +u+!++A, +'+:++B, +?+\_++C, D7y, PJy, bVy, shy, -\%+T, \textasciicircum \textgreater +T, \textasciitilde +C+T, +N+6+T, +f+Z1, +r+l1, +\$+x1, +\textbackslash +*1, 1y+T.}
\end{sloppypar}
}
  
\subsection{\label{app:A-2}4-dim 24-24}

\small{Maximal loops of all KS sets from the 24-24 class are hexagons.}

\smallskip
\small{Master set {\bf 24-24} {\tt 3241,1576,69LI,IHJD,DEGF,FBN3,,,2*89*A,5*8B*C,E*KL*M,H*KN*O,7*AJ*M,4*CG*O,2*5*E*H*,4*B*J*L*,\break 7*9*G*N*,1*8D*K,3*CI*M,6*AF*O,1*2*5*8,I*J*L*M,F*G*N*O,6*7*9*A,3*4*B*C,D*E*H*K.}}

\smallskip
\small{{\bf 18-9} {\tt 1234,4567,789A,ABCD,DEFG,GHI1,,,I*2*9*B*,3*5*C*E*,6*8*F*H*.}
  
{\bf 20-11}  {\tt IHJK,KBCG,GDFE,E7A8,8154,429I,,,9*A*F*J*,5*67*8*,1*2*34*,34*C*D*,68*B*H*.}

{\bf 20-11} {\tt HIJK,KEGF,F257,76D3,3ACB,B18H,,,C*D*G*J*,8*9A*B*,45*6*7*,2*9B*E*,1*47*I*.}

{\bf 22-13} {\tt JKLM,M29A,A486,63CE,EBFI,I1HJ,,,F*GH*I*,B*C*DE*,78*9*A*,3*4*56*,2*DE*L*,1*56*K*,7A*GI*.}

{\bf 22-13} {\tt KJLM,MABE,ECDI,IFGH,H264,415K,,,9B*H*L*,8D*G*K*,7A*C*E*,6*9F*H*,5*8J*K*,1*2*34*,34*7E*.}

{\bf 24-15} {\tt 43FG,G7E8,86IK,KAHC,C9LO,O1N4,,,L*MN*O*,H*I*JK*,DE*F*G*,9*A*BC*,56*7*8*,1*23*4*,BC*DG*,24*JK*,58*MO*.}}

\subsection{\label{app:A-3}4-dim 60-74}

\small{Master set is given in Appendix \ref{app:0} of the paper.}

\smallskip
\small{{\bf 26-13} {\tt ONQP,P2LB,BCDE,E8A9,91M6,657I,IGHF,F3KO,,,JK*L*M*,3*47*D*,2*4A*H*,1*C*G*Q*,5*8*JN*.}

{\bf 30-15} {\tt LJKM,MIQH,H9T6,6EN5,51R3,3AD4,42S7,7BO8,8CUF,FGPL,,,R*S*T*U*,N*O*P*Q*,C*D*E*K*,9*A*B*J*,1*2*G*I*.}

{\bf 36-19} {\tt ZXYa,aNOP,P67A,A89S,SQRW,WTUV,VKLM,MBCD,D45G,GEFJ,JHIZ,,,35*U*Y*,2C*I*O*,17*F*L*,26*E*R*,19*H*N*,\break 13B*Q*,24*8*K*,12T*X*.}

{\bf 38-19} {\tt aZcb,b34O,OPRQ,QIJK,KCDE,E56W,WVXY,YTUS,S2HA,ABN9,917a,,,LMN*R*,FGH*P*,7*8MX*,4*8B*U*,3*6*J*T*,\break 1*D*GV*,C*FLc*,2*5*I*Z*.}

{\bf 38-21} {\tt aZbc,cNOR,RPQU,USTY,YVWX,XDEI,IGHF,F5C3,328J,JKML,L9Ba,,,C*E*M*b*,AB*H*W*,8*B*O*T*,67G*Q*,5*7K*V*,\break 47AZ*,13*6S*,3*49*D*,1B*E*P*,2*7E*N*.}

{\bf 38-22} {\tt aZbc,cHIR,RPXQ,Q6G8,84CJ,JKVL,L57U,USYT,T2FA,A3BM,MNWO,ODEa,,,V*W*X*Y*,F*G*L*b*,B*C*U*Z*,9A*I*K*,\break 7*8*H*N*,5*A*E*P*,18*D*S*,13*L*R*,2*4*O*R*,6*9O*U*.}

{\bf 39-23} {\tt ZWXY,YOPN,N2Tb,b8I7,713B,BCGD,DJUc,c9AL,LKMS,SRVQ,Q4H6,65dZ,,,ab*c*d*,T*U*V*Z*,H*I*J*P*,EFG*d*,\break C*M*X*b*,B*O*R*a,4*8*FK*,3*A*EH*,1*Q*W*c*,2*4*9*B*,3*5*K*N*.}

{\bf 40-23} {\tt dbce,eYaZ,ZTUS,SMWI,IHKJ,JEOD,D6B5,548N,NGXA,A273,3FL1,1CVP,PQRd,,,V*W*X*a*,L*M*N*O*,F*G*K*R*,\break B*C*U*c*,9A*E*T*,7*8*C*H*,6*9H*b*,3*5*Q*Y*,4*9F*W*,2*D*P*S*.}

{\bf 49-30} {\tt lknm,mMTX,X56F,FAIb,bBNj,jgih,h2SU,U4ZJ,JDOG,G1YV,V3H8,879Q,QPRe,ecfd,dCaW,WKLl,,,Y*Z*a*b*,U*V*W*X*,\break S*T*b*f*,N*O*R*n*,H*I*J*M*,EF*G*L*,D*Q*b*k*,A*P*V*i*,9*B*X*c*,6*C*H*S*,4*9*A*K*,7*EN*U*,1*5*K*N*,1*C*Q*g*.}

{\bf 51-29} {\tt omnp,pjlk,kGW9,96D7,73AU,USXT,T8BO,OLNM,MHVY,YZba,aIJK,KCER,RPQe,ecfd,d125,54Fg,ghio,,,V*W*X*f*,\break b*d*i*l*,F*G*H*U*,D*E*H*n*,A*B*C*c*,8*K*h*j*,5*N*Q*S*,A*M*P*g*,3*E*L*k*,9*B*Q*Z*,3*4*8*Y*,5*7*V*m*.}

{\bf 54-30} {\tt qprs,smno,o8fD,DABC,C7c6,659O,OMNL,L4bF,FEGR,RQgl,lijk,kPda,aYZX,X3eI,IJKH,H12S,SThW,WUVq,,,e*f*g*h*,\break b*c*d*r*,9*K*P*T*,7*8*V*Z*,4*5*8*J*,2*3*N*p*,4*Q*U*Y*,6*S*j*n*,3*B*G*P*,1*7*M*Q*,I*L*i*m*,2*9*D*F*.}

{\bf 55-31} {\tt sqrt,tnpo,oYZX,X4Kb,bacd,dUWV,VLhE,E56e,efjg,gFMG,GAC8,823I,IHJN,NBi9,91DP,POQT,TRSm,mkls,,,h*i*j*p*,\break K*L*M*N*,D*E*G*c*,C*S*Z*r*,C*J*Q*b*,A*B*Y*a*,I*W*X*f*,78*9*R*,4*P*V*q*,3*4*B*F*,1*M*S*n*,1*3*U*e*,1*A*H*h*.}

{\bf 57-31} {\tt tsuv,vpqr,rHea,aYZb,bInT,TRSU,UJok,kjli,i7m1,148K,KLQM,MFGE,E265,5Nd9,9ABD,DCPX,XVWh,hfgt,,,m*n*o*u*,\break cd*e*l*,N*OP*Q*,8*B*G*O,7*B*H*J*,6*A*I*L*,A*F*S*W*,4*9*R*g*,39*V*Z*,2*3f*j*,1*A*cq*,2*B*Y*p*,1*2*C*s*.}

{\bf 58-32} {\tt utvw,wqrs,shim,mjlk,kQRY,YXcg,gdef,fSUT,TAVM,MLbN,NFGW,W129,97Z8,86nC,CBaK,KHIJ,J3oD,DEPu,,,n*o*pv*,\break Z*a*b*c*,V*W*l*r*,OP*U*p,E*R*S*i*,9*Q*T*t*,456*c*,8*D*e*h*,3*C*X*j*,2*B*D*L*,3*7*d*i*,2*A*Od*,1*6*J*M*,\break C*OR*q*.}

{\bf 60-40} {\tt xvwy,ysut,tgih,hABZ,ZYfa,akpD,D4nS,SUlT,TLWC,CMrQ,QPRX,Xeom,mbcq,q8N7,79KE,E2F3,3JV6,6Ojx,,,p*q*r*x*,\break n*o*r*u*,j*k*l*m*,de*f*i*,V*W*X*p*,M*N*O*e*,J*K*L*e*,HIR*s*,F*Go*w*,E*IU*x*,K*S*c*w*,9*B*GO*,A*J*n*v*,\break 56*T*o*,4*9*m*s*,6*D*b*h*,6*8*f*s*,3*N*P*u*,7*A*C*d,5B*Q*y*,3*C*g*m*,1D*e*y*.}}

\subsection{\label{app:A-4}4-dim 60-105}

\small{Master set is given in Appendix \ref{app:0} of the paper.}

\smallskip

\small{{\bf 22-11} {\tt JKLM,M135,56B9,97C8,824F,FGIH,HIED,DAKJ,,,A*B*C*E*,3*4*6*G*,1*2*7*L*.}

{\bf 25-13} {\tt MNPO,OPKL,L89D,DCIG,GJFE,EFBA,A374,4165,56HM,,,H*I*J*N*,7*9*B*K*,24*8*C*,1*23*4*.}

{\bf 27-15} {\tt OPRQ,QRMN,NCDG,GBA9,9A68,8154,42EH,HIJL,LKPO,,,E*FG*J*,B*D*FG*,5*6*78*,1*2*34*,78*I*K*,34*C*M*.}

{\bf 28-15} {\tt PQSR,RSAF,FDEG,GCON,NOML,LMKB,B3I4,4158,86H9,9JQP,,,H*I*J*K*,5*6*78*,1*23*4*,78*C*E*,24*A*D*.}

{\bf 30-15} {\tt RSTU,U132,23LK,KLMQ,QPON,NOED,D5A6,64FG,GJIH,HIC9,9B87,78SR,,,F*J*M*P*,A*B*C*E*,1*4*5*T*.}

{\bf 30-16} {\tt TRSU,UFGI,IHML,LMJK,K3O4,418E,ECDT,,,NO*PQ,BG*PQ,9ABH*,5678*,1*2J*N,24*F*S*,23*7D*,C*K*NR*,8*BD*I*.}

{\bf 34-20} {\tt SRUT,TPYB,BYV1,1V7F,FEGK,KJML,LMHI,I8D9,94W2,2WXA,AXOS,,,V*W*X*Y*,NO*P*Q,CD*QU*,567*J*,7*9*CG*,\break 4*6D*F*,34*CH*,A*E*J*W*,8*B*H*V*.}

{\bf 38-19} {\tt VWYX,XYQP,PICB,BC87,78Z2,2c65,56KG,GMUT,TUSR,RSJF,FH43,34b1,1aA9,9AED,DELN,NOWV,,,Z*a*b*c*,K*L*M*O*,\break H*I*J*Q*.}

{\bf 38-24} {\tt aZcb,bcXY,YKNL,LACB,BCQ1,1DWI,IHJG,G5F8,87S9,96Oa,,,TUVW*,Q*RS*W*,O*PZ*a*,MN*X*Y*,F*J*Pa*,D*EH*I*,\break 45*6*R,3C*Vc*,2B*Ub*,1*6*9*T,5*8*A*K*,EI*MY*,348*b*,25*7*c*.}

{\bf 42-21} {\tt degf,f165,56CD,DERQ,QRba,abcZ,ZUKJ,JKGF,FG24,43IH,HINM,MNLP,POYX,XYWV,VWB9,9A87,78ed,,,STU*c*,\break C*E*L*O*,A*B*ST,1*2*3*g*.}

{\bf 46-22} {\tt ihkj,jkgf,fgTa,aYZb,bWXV,VPUN,N8OB,B9AC,C7MS,SQdc,cdei,,,U*X*e*h*,RS*W*Z*,O*P*T*Y*,KLM*Q*,IJM*R,\break FGHO*,DEHN*,567*8*,348*C*,127*B*,HP*Q*R.}

{\bf 46-23} {\tt hijk,k132,23HG,GHIJ,JKWV,VWYX,XYTU,ULNM,MNFE,EF46,65DC,CDPO,OPQa,aZbc,cfgS,SBRA,A987,78ih,,,def*g*,\break R*b*de,I*K*Q*Z*,9*B*L*T*,1*4*5*j*.}

{\bf 47-23} {\tt ilkj,jkaZ,ZaOP,PNcb,bced,deTW,WVgU,UfML,LMKF,F7GA,A89I,IHQR,RSYX,XYhi,,,f*g*h*l*,Q*S*T*V*,H*I*JK*,\break G*K*N*O*,DEG*J,BCF*J,567*H*,34A*H*,127*I*.} 

{\bf 52-26} {\tt onqp,pqed,deRS,SQgc,cZab,bLfM,M8FB,B9AC,C7JP,PNOT,ThiY,YWXm,mjkl,lUVo,,,f*g*h*i*,KS*X*k*,IJ*W*Y*,\break HM*V*j*,GKQ*U*,DEF*L*,567*8*,348*C*,127*B*,F*HKR*,HIL*O*,J*O*U*n*.}

{\bf 57-29} {\tt stvu,uvrq,qrlm,mYZc,cBHL,LIJK,KGdT,T7SA,A89i,ifhg,gOWR,RPQV,VUbj,jkps,,,nop*t*,d*eh*i*,ab*c*k*,\break W*Xf*g*,XZ*g*l*,T*eno,S*Y*ae,MNO*U*,EFG*H*,CDS*d*,567*h*,347*i*,12A*h*,H*K*R*U*,G*L*O*V*.}

{\bf 58-30} {\tt wtuv,vMsK,KIJb,bcde,eVXW,WCNB,Bpq7,76A8,8Efi,ijko,olnm,mYZa,aGHF,F9rw,,,p*q*r*s*,f*ghn*,ghj*k*,\break STUd*,PQRc*,LM*N*O,DE*OX*,9*A*E*H*,Z*l*s*w*,7*C*G*Y*,RUr*v*,58*DF*,34q*u*,12p*t*,QTq*t*,6*C*t*u*.} 

{\bf 60-33} {\tt rsut,tubc,ceiH,HfS3,3ST4,4TKO,ONPQ,Q9MU,U2jl,lxym,mCkD,DAWV,VWYX,XYaI,I157,7gh8,86nr,,,vwx*y*,n*opq,\break j*k*vw,f*g*h*i*,de*pq,Za*os*,RS*T*U*,JK*LM*,EFGH*,A*BC*D*,BD*b*d,5*6*FG,6*7*LP*,4*I*S*Z,2*9*RU*,2*5*8*R.}

{\bf 60-34} {\tt rsut,tuAB,B8HL,LNOM,M6Iy,y7aZ,Zabc,cEUX,XWih,higf,fgVe,e3dG,Gvon,noqp,pqST,TQRP,PFsr,,,v*wxy*,jklm,\break d*e*xy*,Yc*lm,U*b*jk,H*I*JK,G*KO*w,F*JN*R*,Q*S*Yc*,CDE*d*,8*9A*B*,6*7*DV*,45F*S*,3*H*L*v*,9B*U*c*,126*x,\break 57*P*x.}}

\subsection{\label{app:A-5}4-dim 300-675}

\small{The master set is given in the repository. Particular smaller
sets that also belong to the 60-74 class are in \ref{app:A-2}; See
Sec.~VII. The first three sets below are the sets obtained by Waegel a
nd Aravind [52], just translated into MMP hypergraphs. They do not
belong to the 60-74 class.}

\smallskip 
\small{{\bf 38-19} {\tt 4123,3C6L,LZSE,EDGF,FOVU,URbX,XNWJ,JIKH,HPQM,M794,,,56*7*8,9*ABC*,B5N*1*,8O*A2*,R*S*TI*,P*YZ*a,\break cb*Q*G*,V*K*cY,TW*D*a.}

{\bf 42-21} {\tt 2143,36CP,PIOe,eGcW,WVXU,ULdg,gFHa,aRZY,YbfT,TJNS,S8A2,,,56*78*,9A*BC*,DEF*G*,H*I*J*K,L*MN*O*,4*79Q,\break 1*5BR*,Qb*c*d*,EV*MZ*,DKf*X*.}

{\bf 48-25} {\tt 1243,36CP,Pcde,eVUX,XQYZ,ZmiE,EFGD,DabO,OMNL,LIhk,kjgR,R5B1,,,5*6*78,9AB*C*,HI*JK,4*79Q*,2*8AS,SJTU*,\break R*V*N*W,G*Hfg*,Sa*h*i*,F*j*d*l,c*TY*W,SfM*l,R*Km*b*.}

\smallskip
\begin{sloppypar}
{\bf 211-127} {\tt 1234, 5678, 9ABC, DEFG, HIJK, LMNO, PQR4, STUV, WXYZ, abc8, defg, hijk, lmno, pqoZ, rstY, uvwV, xyzc, !"\#g, \$\%\&', ()*-, /:;3, \textless =zR, \textgreater ?@7, [\textbackslash ]O, \textasciicircum \_`N, \{\textbar \}N, \textasciitilde ;yN, +1\}xQ, +2@fQ, +3+4+5:, +6+7+8+9, +A+BXU, +C+D+E-, +F+G+EG, +H`QF, +I+J+K+D, +L+M+Ne, +O+P+Qe, +R\textasciitilde TQ, +S+TtN, +U+V+WF, +X+Y+5k, +Z+a+b', +c+dqS, +e+f+g\textbar , +h+i+j=, +k+g+WP, +l+m+n6, +o+Q+Gn, +p+q+r+s, +t+u+v\textasciitilde , +w+o\&5, +x+s+Ej, +y+D+98, +z+!]K, +"+\#+PK, +1\textbackslash ?j, +\$+\#+2j, +\%+\&+'+k, +(+'+T*, +)+*+jP, +-+rcG, +/+:M2, +;+s\#b, +\textless +=+f\%, +\textgreater +?+@+\#, +[+\textbackslash +q\_, +]+\textasciicircum +'E, +\_+`+eW, +\{+\textbar +\}D, +\textasciitilde ++1+v+Q, ++2++3\textgreater E, ++4++3+nC, ++5+\&+!L, ++6+w+4J, ++7+\$+Ba, ++7+@\%1, ++8+u+8D, ++9++5+CB, ++A+*B5, ++B++C+oR, +"+8i3, ++D+b\textasciicircum m, +v+m+2w, ++E++F+p", ++D++9+i+B, ++G++9+(+z, ++H++I++6+q, ++A+l+T+K, ++6++2\textasciicircum s, ++2+\textbackslash /I, ++J[tA, ++4+'+J7, ++K++J+DL, ++I]eb, +\&+S+NC, +\}+?+:+S, ++L++FvE, ++M++C+Nu, +\%+EK9, +'ujH, ++N+S+D], ++O+=K4, ++P)L9, ++Q+hdO, ++4+y+e\$, +d+7[\textless , ++R++E64, ++F++B+?8, ++P++N++8H, ++E+b+MO, ++Q++H+d", +\textasciicircum +\textbackslash +e+K, ++R+:+h\{, ++D+\}h1, ++A++4+cI, ++N++CMA, ++K++H+\%(, +t+deJ, ++G(tH, +=+S+5a, +C+1t1, ++8+(+E[, ++S+h!J, ++T++G\textbackslash 9, ++U+m\{P, ++V++K+"a.}

\smallskip
{\bf 226-142} {\tt 1234, 5678, 9ABC, DEFG, HIJK, LMNO, PQRS, TUVW, XYZa, bcde, fghi, jklm, nopq, rstu, vwxy, 2zF!, "\#\$\%, 45\&9, '(B), *-A/, 4wE:, 1;\$\textless , "xF/, 3=\textgreater ?, v@[G, \textbackslash z(\textgreater , v\$A!, ];9y, \textasciicircum 6\_/, 5`F\{, @\textbar \%\}, w=B\textless , 2-\%y, \textasciicircum `C:, ]7=!, z[9\textasciitilde , +1+2+3+4, +5N+6+7, +8+9+A+B, IO+C+B, +D+5+E+4, +F+E+G+H, +I+JV+K, +L+MT+N, +LS+O+K, +P+Q+R+S, +J+TX+U, +P+VW+U, +I+W+R+N, +X+M+Y+Z, +a+b+c+d, +e+fcf, +e+gk+h, +i+j+k+l, +b+mi+n, +o+pei, +qbj+r, +s+tc+l, +ud+d+v, +q+g+w+x, +i+ydg, +s+z+ml, +!+f+k+n, +"e+\#+\$, +\%+\&p+', +(+)+*u, +-+/os, +\%+(+:+;, +\textless +=+\textgreater +?, +@+[+\textbackslash +], +\textasciicircum +\_+`+\{, +\textbar +\textgreater q+\{, +\textasciicircum +@pu, +\textless +(+\}+], +\textasciitilde ++1++2r, +/++3+?+', 1++4+!+\%, \textasciicircum +I++5+\textasciicircum , "J++6+-, ++7++8++9++A, ++BK++A+\textless , 4+D++5++C, ++7++D++E++F, 1++G++9+-, ++8+L+e+\textasciicircum , \textbackslash I+s+\textasciicircum , ++B++G+L++E, v++H+a+-, ++I++4+o+\textless , H++9++5++J, 4+L++6+\textbar , ++7J++4++C, w++K++L+t, ++M++N+p+=, *+5+Q+g, 6++M+t+\_, ++KP+z+(, 8+Q+z++O, \#+X+p++P, w++QS++O, \&++R++S+\}, ++T+M+w+\textgreater , =+9e++U, (++V++W+c, -++X++Y+), \$+Te+[, ++T++Vc+`, xL+w+[, (+E+m++1, ++Z+O++a+`, M++We+\textgreater , 9++bk++c, Fa+x+?, ++d++em+\textbackslash , D+Gi++2, C++fXf, G+AVl, ++gYjn, ++hT+\#++i, E++jg++c, \%+3a++i, D++f+So, +CXkp, ++eW+k+*, \_++fj+?, ++d+AY++c, \%Uin, A+Sl+*, +6V+x++2, \_Vgp, C+Ym++i, DY+k+:, A++jT+?, ++d+C+S+x, 9++hZi, F++gf+*, ++fU+d++c, y+H+nr, \}+7+l+\{, +B+Z+n+', ?+U+\$+], /+K++ku.}

\smallskip
{\bf 226-143} {\tt 1234, 5678, 9ABC, DEFG, HIJK, LMNK, OPQR, STUV, WXYV, ZabY, cdef, ghiR, jkli, mnoN, pqfJ, rste, uvwx, yzKG, !tFC, "\#K8, \$\%q8, \&'\#o, ()*M, -/:4, 'lbF, ;\textless =\textgreater , ?@[:, \textbackslash ]\textasciicircum \_, `\{\textbar \}, \textasciitilde +1+2B, +3+4+5x, +6+7+8+9, +A+B+C\%, +D'zq, +EnbJ, +F+G\textgreater *, +H+I+J[, +K+2[), +L+M+N+G, +O+P+QQ, +R+S+T+U, +V+W+G+9, +X+Y+Z), +a+b+c+K, +d+e:U, +f[h7, +g+hhX, +i+j+kp, +l+m+nh, +o+p+q!, +r+s+t+n, +u+v+w+x, +y+z+!/, +"+\#+\$+\%, +\&+'+(+C, +)+*-A, +-+NdX, +/+5\_W, +:+;+Kd, +\textless +;+ma, +=+E\#E, +\textgreater +\%=9, +Q+J+4\}, +*+!+e3, +?+@+[+), +\textbackslash +]+=9, +\textasciicircum +(\textgreater w, +\_+`)R, +\{qTN, +\textbar +DmI, +/+-+x+d, +\}\textbar \textasciicircum L, +w+n:s, +\textasciitilde ++1+Us, ++2+[+dY, \textasciicircum *\$I, ++3++4++5J, ++6++7++8", ++9+@+I3, ++A++B++CP, ++D++E+\textgreater D, ++F++8+\$\textgreater , +]+dtm, ++Gpl6, ++H++I+km, ++J++K++L\%, ++M+'+xV, +\{+f+cc, +t+haC, ++N++1+\textgreater N, ++O+Z]i, ++E+\textless +T', ++G+\}\textless (, +\textbar +\textasciicircum +/@, ++P+j+D2, ++2+\&+\#+Y, +`+:+\$k, ++7+[vo, ++Q++R+\textless +4, ++S+?+!+H, ++T+\textbackslash ug, ++L++5\textgreater V, ++N+\textbackslash \textgreater j, ++C+P*z, +B*ZO, ++U+b+Yk, ++R+qle, ++V+W\textasciicircum j, ++C+\textasciitilde +3\&, ++G+S+8:, ++U+"[1, ++W++X+\$h, +d+U+Ic, +b\&pL, ++X+\{)l, ++Y++ZWG, ++a++PtP, ++b++IyC, ++c++Z++L5, ++d++S+FA, +\textasciitilde mf5, ++e++f++X+\&, ++W+\_j1, ++gSOG, ++T++Q++4\&, ++V+??2, ++T++M+\textasciicircum +8, ++g++S++I1, +\}+\{+E+7, ++7+k+6O, ++B+\&+QZ, +bTHF, ++V+bge, ++O+bz7, +\textasciicircum +MpA, ++O+\textasciitilde "b, ++U+q\textasciicircum R, ++G+\textasciitilde \$D, ++h+Ai2, ++i++QnU, ++j)n8, ++k*T5.}

\smallskip
{\bf 240-156} {\tt 5678, 9ABC, HIJK, LMNO, PQRS, TUVW, bcde, nopq, vwxy, z!"\#, ()*-, /:;\textless , =\textgreater ?@, \_`\{\textbar , \}1\textasciitilde +1, +4+59+6, \_3+7+6, +8+9C+A, \}+B+C+D, +E+F+G\textbar , +E2+CE, +H+I+5+J, +K+LG+D, +M`9+N, 17C\textbar , +M+O\textasciitilde +J, +PM+Q+R, I+S+T+U, +VL+W+X, +c+d+e+f, +gN+h+i, +j+S+e+k, +V+lR+m, +n+T+a+o, K+p+q+r, +Y+s+t+u, ILS+v, +wP+W+f, +P+w+v+m, J+s+z+o, +l+Q+a+i, L+d+!+r, +YR+!+U, HOP+a, I+l+\#+f, K+w+t+i, N+T+\%+m, +\&Y+'+(, WXk+*, Wiw+-, +/Zu", +:o+*+(, +;hnt, +\textless jv", U+=u+\textgreater , +\textless ior, +?+@+[+\textbackslash , +]-\textless +\textasciicircum , ++5/++6++7, *++8\textgreater ++9, \$+\textasciitilde \textgreater ++A, +?++1++6++D, ++E++F/?, +`++G++2+\textasciicircum , ++H++I\textless @, +]++J\textgreater ++K, \&+`++L++7, '++5=+\textasciicircum , \$:++M+\textbar , ++E++8++N+\textasciicircum , ++O++5++P++Q, \&+@?++R, ++a[++b++c, ++d++e++f++c, ++S++g++b++h, ++i++j[++k, ++d++n++o++p, ++W++q++r++c, ++s++a++t++u, ++x++U++y++c, ++S++q]++z, ++!++"++m++\#, \}+;+?++i, +H+\&++H++d, +8IV++i, +8+n+\&+], ++'+/\&++(, +VW++H++s, +H+YV\$, +K+g\&++i, I+\textless '++*, 3MY++q, +L+Zb++a, +INc++F, ++/d+@++j, +SX++I++", ++:++/Y-, `+S*++q, +9X++;++a, 1OZ++\textless , Mf)++=, +B+w(++\textgreater , 1++/++F++a, 3++?i+@, 2d*++e, +se++;++q, 2NX++\textgreater , 3+l++I++T, +CQk\textasciicircum , +F+Q+'++r, 8R++[:, ++\textbackslash +tl++G, So+[++n, ++]j;++\textasciicircum , +O+p+=++g, ++\textbackslash Q++8++\textasciicircum , 8k++L++n, +Fn;[, Rp++J++g, +5+'++J++\textasciicircum , 6+Tn++\textbar , \{++]p++L, +C+t;++g, +d++[++8[, Pq++G++r, ++\}+hs++Z, +G++\textasciitilde ++3++m, +7+\#++N++k, G+zt++Q, +Ww++2++t, +7+++2w?, Br=++y, ++\}+++3++2++f, +G+Wt+++4, +hv++3++k, 9+e++2++o, +G+!++6++y, ++\}+ay\textgreater , B+zw++m, \textasciitilde +qu++b, Cs@++t, +\#++\textasciitilde =++Z, +ux++M++z, 9t\textgreater ++k, A+!v++f, +G+vx@, +\%+++3?+++4, +1+U"+++7, +A\#++D++u, +J+r+\textgreater ++h, +6+r+-++R, +N+U+++D+\textbar , +++8z+\textasciicircum ++h, +o+\textgreater ++D+++H, +N!++A++p, +k\#+\textbackslash ++\#.}

\smallskip
{\bf 253-165} {\tt 1234, DEFG, HIJK, LMNO, PQRS, TUVW, XYZa, bcde, fghi, jklm, rstu, vwxy, \$\%\&', ()*-, /:;\textless , =\textgreater ?@, [\textbackslash ]\textasciicircum , \_`\{\textbar , +3+4+5+6, +7+8+9+A, 1+B+C+D, +F8+G+A, +H9F+D, 5+I+J+K, +8+L+M+N, +O8+PF, 1+QD+K, +F+T+L+U, +W+X+Y+D, +Z+B+G+a, +bQX+c, +dPe+e, +f+g+hh, HZdg, +j+kc+l, +mRc+n, +b+r+i+e, +fMTd, +u+ve+w, +m+gZ+i, +xNWe, KQ+k+q, H+jU+e, LX+h+n, JXef, +g+sb+c, +yo+z+!, +"+\#+\$+\%, +\&+'+(+), +*+-+/+:, jp+;+\textless , +=+\#+\textgreater +?, +y+@+[+\textbackslash , +*+]+\textasciicircum +\_, m+`+z+/, o++2++3+), j+'+\textasciitilde +[, ++4+`++2+\textbackslash , +\#+@+\}+:, +\&++6++7+\%, ++8+z++9+?, lo+\}++A, m++B+@+), +=n+\textasciitilde ++C, +*++D+\textbar +\textless , +'++E+\textasciicircum +!, v++J;++K, "++L@++M, *++N:++O, x(++P++M, t++Q++R++S, )++T\textless ++U, ++V(++J++R, w++X:++Y, t)@++K, s++N++Z++a, ++d\%;++M, y)++e++I, ++k++l++m++n, ++o++p`++q, ++r\textasciicircum ++s++q, ++k++t\textbar +5, ++u]++v\{, ++u++o++"+1, ++\%++w\_+3, 3+fk++u, +7H++\&++r, 1+=w++k, +O+xu++(, +F+b+\&++), 4I++4r, ++*Kx++u, +7+bs++k, ++-jr++r, +W++8++'++/, +F+d+*++\%, +OI+y++f, 1++:++)++r, 4++\&y++\%, ++*++:++4++(, ++-+f+y++V, +7+dmv, 3+m++d++f, +x+*x++r, Ik++'++k, +8L++B*, Op\%++w, S++D(++\textgreater , +p++?"], 6+g++D\&, +T++?\$++o, +j+-!\textasciicircum , +H+u(++t, +Q+po++z, +X+'"++@, +H++B\%++z, 8n-], 5+p+`', No\&++@, ++[+\$++N++\textbackslash , +IZ++X++p, +BY+z++T, a+\textbar ++P++l, ++[a++J++], +P+s++\textasciicircum ++p, +r+z++L++\_, 9V+\textbar ++L, +B+v+\$++`, +k+\textasciitilde ++X++\textbackslash , +Y++\textasciicircum ++Q++\_, 9+]++F++], ++[W+z++!, CZ++E++\textbar , +I+r++6++e, +R++7;++\}, +G++9\textgreater ++y, G++\textasciitilde +/?, e+\textgreater =++", +Cb++9++W, Ed:++y, +M+;=++s, D+q+\}++\}, ++\textasciitilde +\textasciicircum \textless \{, D+++1++3\textgreater , +J+[\textless ++", +i++9/\textbar , +R+++2++1++s, E+i++7=, +++1+;;\_, b++5?+++3, +q+[++Z++y, +N+l+?+++4, +A+c++U\}, +w+\textbackslash ++a+5, +e+?+++6+1, +c++A++I+6, +A+e+:+3, +n+\_++c+2, h+!++U++n, +l+\%++S+4, +D+++7++C+4, +++8+w+!+++6, +U+e++a++n, +++B+++C++Y\}, +++D+\textless ++M+3.}

\smallskip
{\bf 257-169} {\tt 1234, 5678, DEFG, HIJK, LMNO, PQRS, TUVW, XYZa, bcde, fghi, jklm, nopq, rstu, vwxy, z!"\#, \$\%\&', ()*-, /:;\textless , =\textgreater ?@, +3+4+5+6, 1+B+C+D, +F8+G+A, +H9F+D, 5+I+J+K, +8+L+M+N, +O8+PF, 1+QD+K, +7+H+R+S, +F+T+L+U, +W+X+Y+D, +bQX+c, +dPe+e, +f+g+hh, HZdg, ST+if, +j+kc+l, +mRc+n, +b+r+i+e, +fMTd, IP+s+t, +u+ve+w, Iach, JXef, +g+sb+c, +yo+z+!, +"+\#+\$+\%, +\&+'+(+), +*+-+/+:, +=+\#+\textgreater +?, +y+@+[+\textbackslash , +*+]+\textasciicircum +\_, m+`+z+/, +\&+\{+\textbar +\}, +-+\textasciitilde ++1+\%, o++2++3+), j+'+\textasciitilde +[, ++4+`++2+\textbackslash , +\#+@+\}+:, j++2++5+:, ++8+z++9+?, lo+\}++A, m++B+@+), +=n+\textasciitilde ++C, +*++D+\textbar +\textless , +'++E+\textasciicircum +!, uz++F/, v++J;++K, "++L@++M, *++N:++O, x(++P++M, t++Q++R++S, )++T\textless ++U, ++V(++J++R, u++T++W++M, w++X:++Y, t)@++K, s++N++Z++a, ++d\%;++M, u\$++J++O, ++k++l++m++n, ++o++p`++q, ++r\textasciicircum ++s++q, ++k++t\textbar +5, ++f++w++p++m, ++x++l++y+1, ++f++z++!+6, ++u++o++"+1, ++\%++w\_+3, 3+fk++u, +7H++\&++r, 1+=w++k, 2+d++'++\#, +O+xu++(, 4I++4r, ++*Kx++u, +7+bs++k, ++-jr++r, +W++8++'++/, +F+d+*++\%, +OI+y++f, ++*+o++\&++', +O++;jw, +Fku++\textless , +7++4++d++=, 4++\&y++\%, ++*++:++4++(, ++-+f+y++V, +7+dmv, +FHw++/, 3+m++d++f, +x+*x++r, Ik++'++k, +ZP'\textbackslash , +8L++B*, Op\%++w, S++D(++\textgreater , +p++?"], 6+g++D\&, +T++?\$++o, +j+-!\textasciicircum , +H+u(++t, +Q+po++z, +H++B\%++z, 8n-], 5+p+`', ++[+\$++N++\textbackslash , +IZ++X++p, a+\textbar ++P++l, ++[a++J++], +P+s++\textasciicircum ++p, +r+z++L++\_, 9V+\textbar ++L, +B+v+\$++`, +k+\textasciitilde ++X++\textbackslash , 9+]++F++], ++[W+z++!, +L+k+\textbar ++\{, +I+r++6++e, +R++7;++\}, G++\textasciitilde +/?, e+\textgreater =++", +Cb++9++W, Ed:++y, +M+;=++s, +J+i++5++g, D+q+\}++\}, ++\textasciitilde +\textasciicircum \textless \{, Gc;++g, D+++1++3\textgreater , +J+[\textless ++", +i++9/\textbar , E+i++7=, +++1+;;\_, b++5?+++3, +q+[++Z++y, +N+l+?+++4, +A+c++U\}, +e+?+++6+1, +c++A++I+6, +Sg++Y+5, +A+e+:+3, +N++C++a\}, h+!++U++n, +l+\%++S+4, +D+++7++C+4, +++8+w+!+++6, +U+e++a++n, +++B+++C++Y\}, +++D+\textless ++M+3.}

\smallskip
{\bf 257-169} {\tt 1234, 5678, 9ABC, HIJK, LMNO, PQRS, TUVW, XYZa, bcde, fghi, jklm, rstu, vwxy, z!"\#, \$\%\&', ()*-, /:;\textless , =\textgreater ?@, [\textbackslash ]\textasciicircum , \_`\{\textbar , \_3+7+6, +8+9C+A, \}+B+C+D, +E+F+G\textbar , +H+I+5+J, +K+LG+D, +M`9+N, 17C\textbar , +M+O\textasciitilde +J, +PM+Q+R, I+S+T+U, +VL+W+X, +Y+Z+a+b, +c+d+e+f, +gN+h+i, +j+S+e+k, +V+lR+m, +n+T+a+o, K+p+q+r, +Y+s+t+u, +wP+W+f, +x+y+T+W, +P+w+v+m, J+s+z+o, +l+Q+a+i, L+d+!+r, +SQ+u+", +YR+!+U, HOP+a, +jL+\$+o, K+w+t+i, N+T+\%+m, +\&Y+'+(, Ufx+), WXk+*, Wiw+-, +/Zu", +:o+*+(, +;hnt, +\textless jv", U+=u+\textgreater , +\textless ior, +/e+'x, +?+@+[+\textbackslash , +]-\textless +\textasciicircum , +\}+\textasciitilde ++1++2, ';++3++4, ++5/++6++7, *++8\textgreater ++9, \$+\textasciitilde \textgreater ++A, \%++B++8++C, +?++1++6++D, +`++G++2+\textasciicircum , +\_+@++8++3, \&+`++L++7, '++5=+\textasciicircum , \$:++M+\textbar , \&+@?++R, +\textasciitilde \textless +\{++C, ++W++X++Y++Z, ++a[++b++c, ++d++e++f++c, ++i++j[++k, ++d++n++o++p, ++W++q++r++c, ++s++a++t++u, ++x++U++y++c, ++!++"++m++\#, \}+;+?++i, +H+\&++H++d, +8IV++i, +8+n+\&+], ++\%V+\}++\&, ++'+/\&++(, +VW++H++s, ++)T+\_++x, +H+YV\$, +K+g\&++i, I+\textless '++*, +B+yh++e, 3MY++q, +L+Zb++a, 1e)++-, +INc++F, ++/d+@++j, ++:++/Y-, `+S*++q, +9X++;++a, 1OZ++\textless , 1++/++F++a, +IY++5++@, 2d*++e, +se++;++q, +CQk\textasciicircum , +F+Q+'++r, 8R++[:, ++\textbackslash +tl++G, So+[++n, ++]j;++\textasciicircum , ++\_+Tk++J, +5+d/++n, +O+p+=++g, ++\textbackslash Q++8++\textasciicircum , 6S+'\textless , 7+to++U, 8k++L++n, +Fn;[, Rp++J++g, +5+'++J++\textasciicircum , \{++]p++L, +d++[++8[, Pq++G++r, ++\}+hs++Z, +G++\textasciitilde ++3++m, G+zt++Q, +Ww++2++t, +vr+\{+++1, A+v++Q++Z, Br=++y, ++\}+++3++2++f, +G+Wt+++4, +hv++3++k, 9+e++2++o, ++\}+ay\textgreater , B+zw++m, \textasciitilde +qu++b, Cs@++t, +ux++M++z, 9t\textgreater ++k, A+!v++f, +G+vx@, +\%+++3?+++4, +1+U"+++7, +++8+o+-++\#, +A\#++D++u, +++9+b+++A++A, +++B+)++C+++C, +J+r+\textgreater ++h, +6+r+-++R, +N+U+++D+\textbar , +++E+X!+++F, +A+f+++A+++G, +++8z+\textasciicircum ++h, +1+\textgreater ++C+++G, +N!++A++p, +J+f"+++I, +k\#+\textbackslash ++\#.}

\smallskip
{\bf 283-188} {\tt 1234, 5678, 9ABC, DEFG, HIJK, LMNO, PQRS, TUVW, XYZG, abKF, cdef, gfSJ, hijW, klmn, opqn, rsje, tuvd, wxyZ, z!OI, "\#!V, \$\%\&', ('\#N, )*-S, /:;C, \textless =\textgreater 8, ?@[\textbackslash , ]\textasciicircum \_`, \{\textbar \textbackslash -, \}\textasciitilde \&7, +1+2zB, +3+4+5y, +6+7+8v, +9+A+Bm, +8+2\textasciitilde l, +C+D\%G, +EqjM, +BpJE, +F+G+HY, +I+J+K+L, +M+N+L+A, +O+P+N+5, +Q+R+Ss, +T+UR6, +V+W+XY, +Y+Z+a+9, +b+c+d+a, +e+f+gx, +h+i+jb, +k+l-d, +mmVK, qfZV, +n+o+p+q, +r+s\}!, +t+uwH, +v+w+xp, +y+z+Xk, +!+"+\#c, +\$+\%+\#+g, +\&+'+ur, +(+)+PU, +*+-+/I, +:+;+\textless +=, +\textgreater +j+Mu, +?+@+Kw, +[+Hxk, +\textbackslash +]+\textasciicircum i, +\_+`+G`, +\{+\textbar +\}w, +\textasciitilde ++1++2k, ++3+z+O', ++4++5+S;, ++6+Zjg, ++7++8+yG, ++9++A+44, ++B++C+qA, ++D++E+W+J, ++F++G+)+1, ++H++I+\}o, ++I+R\textbar h, ++J+/+ad, ++K+@+'7, +-+'\_B, ++L++M++A+', ++N++O++P+', \textbackslash aZO, ++Q+t+G:, ++R+p+FJ, ++S++6+\textgreater Q, ++T++U+p+V, ++V++W++X+\textgreater , ++Y+\textgreater +\%+7, ++Z+m+Dp, ++P+o+Hh, ++a++C+UN, ++b++5nY, ++a+\textasciicircum +aq, ++c+gif, +B!bY, ++d++e+"v, ++f++g++J3, ++h++i+CP, ++j++g++MT, ++k++l++L\textgreater , ++m++G+Y+E, ++n++o++p++S, ++q++r+[+=, ++s+x+lM, ++b++X+\#+8, ++t++Z+K\textasciicircum , ++u+H\_s, ++v++a+l=, ++w+p+g\textless , ++x++y+D\textgreater , ++z++!++"[, ++\#++\$++S", ++\%+]P3, ++\&++'+3U, +\textless +s\textasciicircum R, ++(++)+9S, +?+\&+fn, ++R++Q+Xy, ++*++-+dC, ++/++l;4, ++:++c+4\textasciicircum , ++O+m+i8, ++W++3+!+3, ++\%++V+\$+6, ++;++\textless +B2, ++\textless ++F+k+j, +e\}@/, ++9++4aW, ++H+`+wL, ++p++L++E+;, ++=++G++C+5, ++\textgreater +\%eb, ++M++2]Y, ++:++H+a', +?+v+aO, ++Vtlb, ++1+\textbar +g+1, ++?++@+Kr, +u+Q?i, ++[]rQ, ++'++D:k, ++\textbackslash ++-+G5, ++X+c+H3, ++r++a+n9, ++8+\#+d@, ++)zpX, ++U\{\&u, ++E+q+G", ++]++;++/a, ++e+\textgreater \textasciitilde f, +2uiD, ++\textasciicircum +r+C\textless , ++\_+b[9, ++`++\textgreater ++"+6, ++\{++\textbar ++U+k, ++\}++\textasciitilde ++*1, +++1++\textasciitilde +';, +++2++a+w\%, +++3+++4+K\textbackslash , ++o++D+2L, ++\textbar ++N\textbar v, ++=++p++N\#, +++4++[*E, ++\textbackslash ++T+T=, ++@++i\{`, ++@++r++B?, +g+E:T, ++K]fF, ++\textbackslash ++"++Yi, +!ib9, ++(++bE1, ++1!o1, ++\textasciitilde +=+6w, ++;+p@5, +++4++T+\#7, +++5++@++\&m, +++6++7+ky, +++7++\textbackslash +h\textasciitilde , +++8+7\#X, ++b+Azf, +++9++P*e, +++A+\textbar +3t, +++B++b\%K, +++CuC8, +++D\textless -r.}
\end{sloppypar}  
}

\subsection{\label{app:A-6}4-dim 148-265}

\small{The master set is given in the repository.}

\smallskip 
\small{{\bf 49-28} {\tt 4132,2gNf,fhYL,LFKJ,JBml,lDPe,ecdV,VTU8,8567,7QRS,SZab,bCk4,,,5*9AB*,5*C*D*E,F*GHI,MN*OP*,W6*XY*,\break Z*9GN*,Z*EL*V*,c*C*J*N*,h*D*Gi,3*a*J*i,3*EHO,jd*K*O,4*AL*P*,l*f*Hk*,na*IP*,nEg*m*.}

\small{{\bf 49-29} {\tt 3142,2McC,CDE5,5FHG,GZaX,XWYP,PQSR,Rn6K,KILJ,JlT8,8fNh,hBd3,,,5*6*78*,5*9AB*,I*M*N*O,P*T*UV,Z*D*N*U,\break b9J*c*,b7d*U,2*8*eS*,f*AgQ*,f*F*iY*,3*6*OV,3*D*J*Q*,j9eV,jE*K*h*,4*9N*R*,4*H*kW*,l*AOm.}

{\bf 52-29} {\tt 4231,1675,5hUK,KIJL,LaeZ,ZXYW,WNRA,A9B8,8CED,Dmnk,kfVl,lGo4,,,8*FG*H,I*MN*O,PQR*S,PTU*V*,a*C*bc,\break a*G*dU*,2*f*J*c,2*HgZ*,3*FL*V*,6*e*ij,k*HK*j,7*D*L*o*,7*HbT,pFiq,ph*go*,4*C*K*q,4*e*m*T.}

{\bf 53-32} {\tt qorp,pKML,LGPX,X34S,SQRT,TOkD,DCWZ,ZYga,aIJd,dchb,bAHB,BNi7,75e6,612q,,,lmnr*,i*jk*n,e*fg*h*,X*d*k*q*,\break UVW*c*,N*O*P*r*,H*J*W*f,EFG*e*,9H*Y*j,8M*e*k*,4*Fjp*,3*Ea*n,6*L*R*n,2*K*S*i*,7*VX*m,B*EQ*q*,2*D*Fm,I*S*e*r*.}

  {\bf 43-32} {\tt rpqs,sLMK,K5Ej,j7Jc,c4Cm,mAGk,kVWU,U6In,n1FQ,QPdR,ROSX,XYei,ighf,fDbr,,,m*n*os*,j*k*lr*,b*c*d*e*,\break Zalo,S*Tj*n*,NO*c*h*,I*J*M*Y*,G*HL*T,F*ac*q*,E*F*Hg*,BC*Tp*,9A*S*g*,8S*W*q*,7*G*Q*Z,7*V*g*s*,C*I*g*l,23O*r*,\break 8E*Y*m*.}

{\bf 55-33} {\tt 4132,2CZT,Tk9M,Mjfp,pBKU,UqlN,Nt6h,hgbH,HGIF,FPQR,RsAV,Vr7O,OonY,YSXW,WEc4,,,56*7*8,59*A*B*,5C*DE*,\break F*JK*L,F*M*N*O*,S*T*U*V*,ab*c*T*,adI*e,3*DG*f*,g*dZ*W*,ij*Q*X*,iB*O*e,k*l*R*e,mn*P*T*,r*9*Q*h*,r*l*JW*,\break s*n*K*h*,t*j*LV*.}

{\bf 55-34} {\tt 3241,1576,69pn,nmKC,COak,kHfl,ltgs,sJeq,qcrB,BNWh,hGji,i8bo,oITQ,QRS3,,,2*8*9*A,2*B*C*D,2*EFG*,\break 2*H*I*J*,3*K*LM,3*N*O*P,4*T*UV,4*W*XY,4*Za*b*,5*Fc*a*,5*de*f*,5*g*S*V,m*EPb*,m*Ac*X,6*B*Mb*,6*G*O*X,q*9*PY,\break 7*ELr*,t*8*O*r*,t*dR*U.}

{\bf 56-33} {\tt ghqi,iMYN,NFlG,GCHp,pVmW,WbsS,SUnT,TDaQ,QOPR,RKLe,edfc,cAjB,B9Zt,tJoI,IXkg,,,rs*t*u,n*o*p*q*,\break j*k*l*m*,Z*a*b*f*,X*Y*p*u,U*i*m*t*,T*h*l*u,H*J*L*h*,Ed*k*n*,789*X*,56L*m*,6EP*u,4V*g*r,3B*O*V*,2M*T*V*,\break 2P*j*q*,5F*q*s*,1F*U*X*.}

{\bf 56-34} {\tt 4132,2Dgc,cWbH,HGIF,FJLK,Kdsr,rZpA,A59B,BteN,NMPO,OQST,T8li,ihnm,mCq4,,,5*678*,5*C*D*E,Q*6J*R,Q*EUV,\break W*9*XR,W*YZ*a,2*d*e*R,2*A*fO*,2*h*I*V,i*d*jP*,i*A*J*k,3*S*Xk,3*B*K*l*,3*EH*n*,o9*L*P*,4*7jp*,4*YJ*N*,\break r*8*XN*,t*S*L*p*,t*ug*q*.}

{\bf 60-35} {\tt 2341,19A8,8pqQ,QEsr,rDPV,Vkno,ovCS,Sxyd,d6Wa,aZbB,Befc,c5YG,GutR,RgLJ,JHI2,,,1*5*6*7,2*B*C*D*,\break 2*E*FG*,3*KL*M,3*NOP*,4*Q*R*S*,4*TUV*,4*W*XY*,hH*Mi,jI*KT,k*B*NQ*,k*Flm,r*d*t*m,9*E*o*i,9*d*NT,v*G*Nw,\break x*E*OU,x*D*q*w,A*B*t*U,A*FP*S*.}

{\bf 61-36} {\tt 3241,1576,6UOV,VxR9,9Itw,wFrZ,ZYab,bghf,fJei,inol,ljdk,kCup,pcq8,8HQK,KLM3,,,1*8*9*A,1*BC*D,1*EF*G,\break 2*H*I*J*,4*NO*P,4*Q*R*S,5*TK*N,7*WXP,Y*c*d*e*,j*J*a*m,Bo*r*Q*,BI*L*s,Ek*t*Q*,EJ*M*q*,C*J*vR*,F*H*Xu*,\break 9*o*M*u*,Gx*K*y,Ak*L*z,DH*V*z,Dc*t*y.}

{\bf 61-37} {\tt 3241,1576,6VeH,H8WZ,ZpEQ,QAcJ,JuUa,aqR9,9ITt,tCLo,onkl,lGmi,ifhg,gBrP,PNO3,,,1*8*9*A*,1*B*C*D,2*E*FG*,\break 2*H*I*J*,2*KL*M,3*Q*R*S,3*T*U*V*,3*W*XY,4*Z*a*b,4*c*de*,f*jT*k*,p*q*V*r*,5*sT*Z*,9*E*Ye*,A*FV*b,vwYb,vn*h*x,\break 7*I*Sb,DMN*c*,yKO*z,yg*o*c*,yjm*x.}

\smallskip
\begin{sloppypar}  
{\bf 115-78} {\tt 1234, 1567, 189A, 1BCD, EFGH, EIJK, ELMN, EOPQ, ERST, UVWX, UYZa, bcde, bfgh, bijk, blmn, opqr, ostu, vwtk, vxqy, vMz!, vK"m, \#\$Y!, \%L\&', \%J(!, 2)*j, 2O-l, /:;u, /R\textless l, =\textgreater ?r, =G@j, =[\textbackslash u, ]\textgreater \textasciicircum k, ]QWm, \_`;k, \_H\{\textbar , \_\}\textless !, 3:@\textbar , \textasciitilde :+1i, \textasciitilde P+2n, 4SXn, +3FY+4, +3+5+6+7, 5\textgreater (+4, 5GZ+7, 5[\&+8, 5Q+9h, +AF\&+B, +A+5(+C, 7+5a+D, 7:z+E, 7Pte, +F\textgreater +6+G, +F[Y+E, +H`Z+G, +HH(+E, +H\}+9e, 8N\textasciicircum h, B+I\textless +4, B+JX+8, BJ\{+K, +LM@+K, Cw+2+B, Cx-+C, +MpX+B, +MI;+N, +MN\{d, 9+OV+C, 9L\textbackslash +N, 9J?f, +P+IW+G, +P+O+2+D, +P+J-+E, +PL*g, AxX+D, AM;e, DsV+E, DI@g, D\$\textbackslash e.}

\smallskip
{\bf 130-80} {\tt 1234, 1567, 89AB, 8CDE, 8FGH, 8IJK, 8LMN, 8OPQ, RSTU, RVWX, YZab, Ycde, Yfgh, Yijk, 2lmc, 2noi, pqrg, 3sth, 3uvw, xyzd, x!Wj, "l\#\$, "Gvk, \%q\&e, \%'mf, \%()k, 4*r\$, 4y-f, 4F/:, 4;Xk, 4!\textless =, \textgreater ?@g, \textgreater ISi, \textgreater P[\textbackslash , \textgreater N]\textasciicircum , 5A\_g, 5L`i, \{B\textbar \}, \{O\textasciitilde i, \{M+1\textbackslash , 6O]w, +29+3h, +2I+4w, +2NTj, +5LUw, +5Q+1j, +6L[:, +6J]k, +7+8\_\$, +7OT:, +7K`=, 7I+1:, +9?+A+B, +9Ev+C, +DBX+C, +E+8v+F, +EDV+G, +EOtb, +HI\&b, +IA\textless +G, +IQr+J, +KC+A+L, +MD)+N, +O9/+L, +OE\textless +P, +OP+Q+R, +ON-Z, +S*S+T, +S;@a, +Us\textasciitilde +T, +Uu+V+W, +U(+Xa, +YqS+G, +ZF+3b, +Z!\_+J, +alU+F, +bl[+L, +b+c+dZ, +e*+1+L, +ey\textasciitilde +P.}

\smallskip
{\bf 131-90} {\tt 1234, 1567, 189A, 1BCD, 1EFG, 1HIJ, 2KLM, 2NOP, 2QRS, 3TUV, 3WXY, 3Zab, 4cde, 4fgh, 4ijk, 4lmn, 4opq, rKsf, tuvl, tOwx, yz!", yM\#g, yu\$\%, \&K'", \&()\%, \&P*m, -/wm, :;\#j, :L!h, :\textless s=, :\textgreater \$n, ?@'h, ?O)n, [\textbackslash ]n, [Pv\textasciicircum , 5\_bk, 5`Tl, 8R\{x, 8\textbar V\}, \textasciitilde +1+2\%, \textasciitilde RT+3, \textasciitilde \textbar Xm, 9`Y\%, 6+4+5", 6+6+7i, 6QV\%, 6+8W+3, 7+1U+9, 7RYn, 7\textbar +A\textasciicircum , +B+C+5h, +B\_+7=, BS!+D, E+4+Eo, E+F)+G, +H+Ivo, +H+J]+G, F+I\$+K, F+J+Lp, F+1!+M, FRs+N, +O+C+E+P, +O`'+M, +OS+Qd, C+4]+K, C+6v+P, +R+8!e, D+S)q, D+1+Q+T, DR'e, G+C]q, G\_v+U, GS+V+W, +X\textbackslash +Y+Z, +XPb+D, H\textless Y+G, H\textgreater +a+Z, +bu+7c, +bOa+Z, Iz+2+K, I@T+P, IMXp, IO+c+N, +dKY+K, +d\textbackslash Z+N, +e\textless \{p, +e/ad, +f\textless T+U, +f/+c+W, J(a+T.}

\smallskip
{\bf 146-97} {\tt 1234, 1567, 189A, 1BCD, 1EFG, 1HIJ, 1KLM, 2NOP, 2QRS, 2TUV, 3WXY, 3Zab, 3cde, 3fgh, 3ijk, 4lmn, 4opq, 4rst, 8ucv, 8Owx, 8yz!, BP"\#, 9\$\%\&, 9P'(, CN)*, C-c\&, C/w(, C:;\textless , 6S=\textless , D\textgreater \%?, DS@o, A/[?, A\textbackslash =], A:Zo, \textasciicircum \_k\#, \textasciicircum `\{p, \textasciicircum VY\textbar , E\}iv, E\textasciitilde +1\#, E+2+3p, E+4+5\textbar , +6+7jx, +6+8h\#, +6+9+A\textbar , F+7f\&, F+8+B(, FU+C+D, F+9\{\textless , +E\_g(, +E+F+5q, +G+2+H+D, +G+4X\textless , +I+Jk+K, +IT\{], +L+Mg+N, +L+O+3], G`+A+P, +Q+Ras, +Q+F\%l, +Q`'+S, +QVe+T, +U+J;s, +U\textasciitilde b+V, +W+8+X+V, +W+O[l, +Y+7@+Z, +YU\%+a, +b+F)+c, +b`c+a, +d\textasciitilde =+e, +d+2[+a, +f\}a+g, +f\textasciitilde @r, +h+7b+i, +h+8zr, +hU)n, +h+9c+j, +k+l=+g, +k+R+X+i, +m:k+T, HuW+n, Hyil, KQ+ol, I+p+Bm, L\textbackslash +1+c, LRi+a, +qO+H+Z, +qyh+c, +q+r+o+a, +qSjm, Mu\{+g, +s\$+5+i, +sPWr, +sQg+t, +s+u+1n, J-+A+i, JRhn.}

\smallskip
{\bf 147-96} {\tt 1234, 1567, 189A, 1BCD, 1EFG, 2HIJ, 2KLM, 2NOP, 2QRS, 3TUV, 3WXY, 3Zab, 3cde, 3fgh, 3ijk, 4lmn, 4opq, 4rst, uvwx, yzc!, yL"\#, y\$\%l, \&Pb', (M)*, (-/:, (;Z\textless , =\textgreater c?, =v\%\textless , =@am, [\$b:, [\textbackslash ]m, \textasciicircum z)\_, \textasciicircum Le`, \textasciicircum O/n, \{\textbar \}`, \{P\%\textasciitilde , +1N]+2, +1vbn, +1@+3\textasciitilde , +4+5Xx, +4+6+7', 5H+8!, 5+9g\#, +A+Bh+C, 6+D+E+F, 6+Bj*, +G+H+8?, +GQY:, +IHh+F, +I+J+Km, +LHj\_, +L+9+E`, +L+Mf+N, +O+B+8+N, +O+PU+2, +O+QYn, 7Ji+N, 7+6W\textasciitilde , 8+Hw+R, 8J/t, 8Q+S+T, +U+9a+R, +U+M+Vt, +UR"+W, +X+P+Y+T, +X+Qd+W, +XS+Z+a, +b+D/+c, +b+Pe+d, +bS)+e, +f+5c+g, 9+9+3r, 9R+Y+g, +hHws, +h+MZ+i, +h+j)+k, +h+J+S+l, A+Q\}+m, ASc+l, +nQd+k, +n+5+Z+m, +oKTo, +ovj+W, B+pYt, E;+q+W, EPh+a, F\textgreater +rr, +szV+c, +s+p+Kp, +s\$h+d, +sOi+g, +s\textbackslash +q+e, +t+uUs, +t\textbar Yq, +tM+r+i, D+vW+i.}
\end{sloppypar}

}

\subsection{\label{app:A-7}6-dim 236-1216}

\small{The master set is given in the repository.}

\smallskip
\small{{\bf 34-16} {\tt XTUVWY,YDEIRF,F34AN9,9587J6,67J2CX,,,QR*SW*X*Y*,KLMN*OP,GHI*J*OP,J*MN*PSV*,BC*F*U*X*Y*,A*J*SV*X*Y*,\break C*E*I*J*OY*,2*4*8*F*HM,1D*LQX*Y*,1BKT*X*Y*,3*5*E*GJ*O.}

{\bf 35-16} {\tt LYMNTO,OTCDSI,ISGJKH,H67VWQ,QW1PR5,54BUYL,,,U*V*W*XY*Z,P*Q*R*S*T*Z,EFN*O*S*Y*,D*J*K*R*S*X,89AB*M*Y*,\break 7*EFG*S*U*,239AFY*,6*C*N*P*V*W*,1*34*EU*Y*,28U*W*Y*Z.}

{\bf 36-18} {\tt RSTQaU,UFGHZP,P13KL2,2L5CEX,X7OWYB,B9ASTR,,,VW*X*Y*Z*a*,MNO*P*T*U*,IJK*L*S*a*,C*DE*H*L*Y*,\break 7*8A*B*S*T*,69*E*Q*X*Y*,8B*DH*K*Y*,3*45*JU*Z*,4DG*IU*Z*,6C*O*VX*Y*,8A*NVX*Y*,1*F*K*L*MP*.}

{\bf 37-16} {\tt aWXYZb,bMNOPV,VQTUSR,RSDIJH,H89ABC,C567Ka,,,I*J*K*La*b*,FGH*LO*P*,D*ET*U*Z*b*,EFGJ*N*Y*,347*B*H*I*,\break 246*B*H*X*,13A*B*H*L,129*K*W*a*,K*M*Q*W*X*b*,5*8*J*Y*Z*a*.}

{\bf 37-17} {\tt WXbYZa,a56BUA,ABU789,934CLF,FEGHRT,TDNXbW,,,U*VY*Z*a*b*,PQR*ST*V,MN*OST*V,IJKL*T*b*,C*D*G*H*OT*,\break 24*6*U*Y*Z*,E*JKQT*U*,123*IKL*,15*8*IJL*,7*IMPT*V,7*C*F*L*T*V.}

{\bf 37-18} {\tt aYZbXW,WXEFGL,LIJKVH,HVCRSA,A13MNT,T2PQU4,4578Oa,,,Q*R*S*T*U*V*,P*V*Y*Z*a*b*,M*N*O*X*a*b*,\break G*J*K*O*V*b*,DE*F*U*W*X*,BC*I*W*X*Z*,8*9A*T*W*b*,67*9A*S*T*,BDP*Q*V*Y*,2*3*5*A*N*T*,1*6A*N*R*T*.}

{\bf 37-19} {\tt YabWXZ,ZHIJLK,KLABVP,P48MNQ,QDEFGC,C7RTUS,SU6abY,,,R*S*T*U*V*b*,M*N*OP*Q*a*,B*F*G*Q*V*b*,9E*OP*Q*Y*,\break 7*8*F*G*N*Q*,6*9E*OX*Y*,A*I*J*T*U*b*,5G*H*V*W*b*,234*N*P*Q*,136*9D*X*,126*S*U*V*,5F*R*U*V*b*.}

{\bf 39-17} {\tt UdTWXV,VRcJNI,IJNMbH,HLPQS1,1S26BA,A4589E,EDFaZG,GZKYdU,,,Y*Z*a*b*c*d*,OP*Q*R*S*X*,K*L*M*N*S*W*,\break B*CF*G*Y*a*,6*78*9*A*C,2*35*7A*d*,1*4*D*G*N*a*,1*G*OT*a*d*,3N*S*a*b*c*.}

{\bf 39-18} {\tt SWTXVU,UV6CR7,7R34Qd,d12BZP,PZEacJ,JMNLOK,KLOIb5,58DYWS,,,Y*Z*a*b*c*d*,P*Q*R*W*X*d*,GHI*N*O*c*,\break D*E*FM*O*c*,9AB*C*U*V*,8*FO*T*V*c*,2*3*4*R*a*d*,6*AO*Y*b*d*,1*9S*U*V*X*,GHQ*R*U*W*.}

{\bf 44-18} {\tt MNcPQO,OPQ9iL,LiVghW,WHIJbK,KbEFGa,aUBCDA,ABCD52,236YZ4,4YZ1XM,,,defg*h*i*,X*Y*Z*a*b*c*,RSTU*V*W*,\break 8D*I*J*V*b*,B*C*F*G*L*f,7E*H*U*b*e,5*6*9*N*Y*Z*,1*3*deg*h*,78A*U*b*d.}

{\bf 44-19} {\tt XYacbZ,Z123g6,6g78VU,UVRSTW,WEFKLG,GBCJhD,D4AIiN,NiOPQM,M5YacX,,,defg*h*i*,HI*J*K*L*W*,9A*F*L*Q*W*,\break 5*8*O*P*S*T*,3*7*N*Q*ef,4*HL*a*b*c*,9E*W*Y*Z*c*,1*2*L*R*b*g*,L*W*X*c*dh*,B*C*M*O*P*d.}

{\bf 48-18} {\tt bcdefg,g8YZhA,A7BCTH,HIJKUL,LFGWQP,PQOVNM,MNRSE6,6E459b,,,h*ijklm,XY*Z*af*g*,R*S*T*U*V*W*,W*ae*g*lm,\break B*C*DE*jk,9*A*c*d*g*i,7*A*G*O*W*g*,4*5*6*8*DX,6*DF*I*J*K*,123H*Q*W*.}

{\bf 48-19} {\tt XYagmZ,Z29ckj,jBMNOL,LMNOfP,PIKTeJ,J5DEd6,637bVU,UVQRlW,W14YaX,,,hij*k*l*m*,b*c*d*e*f*g*,ST*U*V*W*l*,\break CD*E*FGH,B*FGHK*k*,7*89*Aj*k*,3*4*Aj*k*m*,6*R*a*b*c*g*,1*2*8Q*a*l*,5*CI*Sc*g*.}

{\bf 51-19} {\tt bajdec,cdeYZX,X78RSL,LGJKIH,HI6QhV,VTUWio,o123nD,D9ACpF,FpEfgb,,,klmn*o*p*,f*g*h*i*j*p*,R*S*V*W*h*o*,\break MNOPQ*W*,BC*D*Z*a*p*,8*A*D*E*Y*p*,7*9*BL*T*U*,56*D*J*K*i*,47*G*OPi*,45MNh*p*.}

{\bf 54-20} {\tt ecdfmg,gMNOPs,s5BCrR,RQVWXU,UVWXST,T6D897,7894Ak,k123iF,FEGHLK,KLIJle,,,nopqr*s*,hi*jk*l*m*,YZabf*g*,\break A*B*C*D*br*,G*H*J*d*m*q,I*e*f*jm*r*,4*5*9*Q*hi*,2*3*e*g*hq,1*E*bc*r*s*,1*6*9*F*S*e*.}

{\bf 57-21} {\tt lmnovp,pUVZat,tJKLYq,q12349,95678u,uMNPSc,cbdejk,kOQTgr,rBDsml,,,q*r*s*t*u*v*,fg*hij*k*,WXY*Z*a*u*,\break RS*T*hik*,P*Q*d*e*k*s*,HIK*L*V*o*,EFGWXo*,CD*HIJ*n*,9*AB*CS*U*,AEFGo*t*,M*N*Rn*t*u*,O*b*fk*n*v*.}

{\bf 59-22} {\tt wstuvx,xUVWXD,D9AkCB,BC5IJn,nmoqrp,pqregf,f8KQaZ,ZaYbdc,cMSTON,NOLPiw,,,hi*jk*lx*,U*V*W*X*g*o*,\break P*Q*RS*T*d*,K*L*b*f*lw*,EFGHI*J*,678*L*Rj,5*b*d*m*q*r*,67L*M*Y*h,9*A*K*e*r*v*,Y*b*c*k*q*x*,1234k*w*,\break 5*Hln*v*w*.}

{\bf 60-23} {\tt tuvwyx,xyKMTL,LCDFSR,RSPQsr,rsopqn,n34AEN,NOcdeb,bcdefg,gWYZaX,XaGHIi,ihklmj,jklmVU,UJuvwt,,,\break T*V*Y*Z*a*f*,E*F*M*T*U*q*,A*BI*J*O*h*,BD*K*Q*o*p*,6789J*V*,H*K*T*U*V*a*,5C*H*P*T*U*,3*4*C*G*P*W*,12BD*F*V*,\break 3*4*5A*N*a*.}

{\bf 61-23} {\tt klmnpo,opghij,jMOPVN,N45Krq,q3RSvA,A678y9,912uzW,WxYZaX,XYZasf,ftcdeb,bcdeUH,HFGLlk,,,u*v*wx*y*z*,\break q*r*s*t*y*z*,QR*S*TU*V*,K*L*O*P*V*n*,IJV*h*i*m*,BCDETz*,A*H*V*wy*z*,3*CDEr*z*,A*F*G*M*g*w,4*5*9*IJu*,\break 1*2*9*BQz*.}

}

\subsection{\label{app:A-8}8-dim 120-2024}

\small{The master set is given in the repository.}
\smallskip

\small{{\bf 37-11} {\tt 789A56CB,BCDEFIHG,GHWXYVRP,PROQ3487,,,123*4*5*6*7*8*,JKLMNI*A*C*,STUV*Q*R*MN,ZaY*ULF*28*,ZaX*TKE*17*,\break bJD*I*9*A*B*C*,bW*SO*R*I*A*B*.}

{\bf 38-11} {\tt IMHKLG8J,JaX9A35Z,Z5YPBC4T,TUQRE7VS,SVcbWOMI,,,123*4*5*67*8*,9*A*B*C*DE*FG*,NO*P*Q*R*M*F8*,\break W*X*U*V*K*L*D6,b*a*Y*V*O*I*25*,c*NO*H*I*M*18*.}

{\bf 39-11} {\tt 26134578,8OPJDFGC,CDFG9ABE,EcdbUQHZ,ZaXYTM62,,,H*IJ*KF*G*7*8*,LM*NKE*G*5*6*,Q*RST*P*N4*6*,\break U*VWX*Y*I3*8*,b*WS9*A*B*E*1*,c*d*a*VRO*LF*.}

{\bf 40-11} {\tt FGHIDEKJ,JKLMPQON,NOdeX26Z,Z6YTU4ba,abcRS38W,W8VBC7GF,,,12*3*4*56*7*8*,9AB*C*D*E*F*G*,\break R*S*T*U*P*Q*H*I*,X*Y*9AF*G*56*,d*e*c*V*L*M*18*.}

{\bf 41-11} {\tt 27135684,48WXYUMV,VZQNcdPb,bcdPJKCB,BC9DEFGA,AefaSI72,,,HI*J*K*LM*F*G*,N*OP*E*5*6*7*8*,Q*RS*TU*OP*D*,\break Z*a*X*Y*TL3*7*,e*f*W*RH9*1*8*.}

{\bf 42-11} {\tt 9ADEFGCB,BCdYU25a,a5bZV4LK,KLIJMNOH,HcPQfg6e,efg6T1A9,,,1*2*34*5*6*78,P*Q*RSO*F*G*8,T*U*V*WXN*E*7,\break Y*Z*WXSM*D*6*,c*d*b*RI*J*35*.}

{\bf 43-11} {\tt 23451876,67YWXRLZ,ZghJdefc,cdefABC9,9ABCEFGD,DabK3452,,,HIJ*K*L*MNG*,OPQR*SNF*8*,TUVW*X*SME*,\break h*f*Y*VPQI1*,g*f*a*b*TUOH.}

{\bf 49-13} {\tt 12346785,56789ABC,CTVWXLMU,UgeDijkh,hijkabcZ,ZabcYdGS,SNOPQR21,,,D*EFG*HI4*8*,JKL*M*HI3*7*,\break e*fd*X*R*EFB*,g*fV*W*P*Q*JK,lmnI9*A*7*8*,lmnY*T*N*O*I.}

{\bf 52-13} {\tt DE9ABCGF,FGRSTUQP,PQ12IJKH,HIJKLMON,NOjklhim,mopgde8n,nq56WXYV,VWXYZaED,,,1*2*345*6*78*,\break bcd*e*T*U*B*C*,fg*h*i*Z*a*9*A*,q*o*p*l*fbc7,j*k*R*S*L*M*34.}

{\bf 52-16} {\tt 56123487,789ABCDE,EFGJKLIH,HIpfgOih,hijkobRn,nobReaSX,XSTUVW65,,,MNO*PQR*S*L*,YZa*b*X*R*J*K*,\break cde*b*PQI*D*,f*g*h*i*j*k*V*W*,lmO*H*B*C*5*6*,p*o*e*a*b*I*D*E*,lmf*g*j*k*9*A*,qmYZU*F*G*A*,qlcdT*MNA*.}

\begin{sloppypar}
\smallskip
{\bf 100-48} {\tt 12345678, 9ABCDEF8, GHIJKLMF, NOPQRSM7, TUVWXYZ6, abcdefKL, ghijkfZS, lmnopqSE, rstuvkeZ, wxyeXYLF, z!vqcdS6, "\#\$!ujR6, \%\&'(eYPQ, )*ikQJ45, -/!yvkWD, :;\textless =WD37, \textgreater  tbdeQL2, ?=ybeWZC, @[\textbackslash ]\textless pqK, \textasciicircum \textgreater  *qiQJE, \_\$ujZC68, `\{\textbar \}xdLB, \textasciitilde \}];=\$zI, +1\textless (opVZ5, +2+3](nQM2, \textbackslash sHKMCF7, +4+5\textbar \#rhRA, +5)mnokUS, +6sniaKM7, +3\textbackslash =slqdT, \textasciicircum \textbackslash fUYKF7, +7'"mgOBE, +1\{*"OBE1, \{\}\textasciicircum \textbackslash /uvE, \textbackslash ]\textgreater  \textless wvqP, +8\_:;\textless wo3, +5\textbar ;sna17, +7rsvVZOM, \_:\&moM38, `\{\textbar ?WZBC, \textless =qhiZSA, +9+4\textbackslash ]=AC4, +6xYP4678, +A+4:=\#\$iT, +6[)oYI48, \textbar (!xsdRB, +7+4+5\$nL26, +7;\textless =hbWZ.}

\smallskip
{\bf 110-54} {\tt 12345678, 9ABCDEFG, HIJKLMN8, OPQRSNFG, TUVWXYSN, ZabYSEG7, cdefbXSM, ghijefD6, klmnijf6, opqrstgh, uvwxfW57, yz!xtVWL, "\#\$\%\&'nC, ()*-dMBF, /:cUXAC4, ;\textless ndfXRS, =\textgreater  WQKLN8, ?@[!xjfb, \textbackslash ]\textasciicircum 'xN67, \_@[rsAB3, `\{\textbar ;\textless nRL, \}\textasciitilde \{\textbar [bPM, +1+2\_:*-zw, +3+4+1+2zwmW, \textbar *-qtlMF, +5+6+7+8J9G8, +9+7+8\&jG48, +A+5+6\textasciicircum )zQ7, +B+C+A[iYQB, +4@['xsbB, +Dzprkmij, +E+D`\textless \%v92, \textasciitilde ythVWXY, ?pqmIF15, +E+8+2\textasciitilde vhVW, +AwnjXOKN, +F+E\textbar cOAF3, +2\}zpmiWQ, +F\}\textbar niaOA, +G+F+9\textgreater  y!18, [\textgreater  y!r9C8, +H+6+3?=qm1, +7]('psmJ, +I+H+6\}yXO9, +A\textbar oqstln, +J\textasciitilde ]\textasciicircum \$'D6, +D\$zoprt9, ]=-'xVON, +I+HjXOK34, +K+4[\textless ywxs, \textbackslash pqthWXK, +K+D+4(\#\$eM, +I`/"y!WL, \textbar \textgreater  seabL5.}

\smallskip
{\bf 120-58} {\tt 12345678, 9ABCDEF8, GHIJKLMF, NOPQRSTE, UVWXYSTM, ZabcQRLD, defghijk, lmncJKLF, opqr4567, stuvwYCF, xyznjkKB, !"riXIAF, \#\$\%yzWPT, \&'(yzORD, )*-/:ghP, ;\textless =\textgreater  wqhH, ?=\textgreater  "zO38, @[\textbackslash ]GHJK, \textasciicircum \_vbUV38, `\{]\textless (whH, \textbar (OQRLDE, \}*-/:xu2, \textasciitilde \textgreater  yvqrVO, +1?:hXMF7, +2+3+4\textless \textgreater  va3, +5+6+7+4?:yg, +8+9+A\textbackslash /ufG, \textbar ;\%twWYC, +B+C+D+3\{'V9, +E['(wmgJ, +F`-/xn26, +7*:yrgZP, +G+H+I+J+4!pI, +K+J+DxkVM7, +L@]?XRS6, +A+3@\textgreater  *sph, +M+6)gkaGL, +5+2\textasciitilde /vJAC, +C\{\_\textless zofT, +N+B+5+6+2+4?", +K+D\%teWY5, @ycURTK4, +I+2+3+4\&\$uv, +I+J=\&zOD8, +O+I+2*eIAD, +3'\$!sphF, +O)oqOLDE, +K+D\textgreater  VWYHM, +B+C+D)P9E1, +H+B!ZXY12, +P+7\textasciitilde tqeZ9, +H\%QGIJL5, \}deikO23, +Q+O[\textbackslash 'sfD, +E+9+A+6(peO, \textasciicircum [/\#vbP3, +R+S+T+N+G['", +U+8+2(lUE3.}

\end{sloppypar}

}

\subsection{\label{app:A-9}16-dim 80-265}

\small{The master set is given in the repository.}
\smallskip

\begin{sloppypar}
  
\smallskip  
\small{{\bf 72-12} {\tt 123456789ABCDEFG, HIJKLMNOPQRSTEFG, UVWXYZabcdefghST, ijklmnopfghOPQRT, qrstuvnopdeMNPQR, wxyz!uvpceNRABCD, "\#\$!stvpZabcdeNQ, \%\&'(yzrlmYhP89CD, )*-\#\$vXbcdKL567G, *-'(qjkWhIJP347G, )\%\&"wxiUVbHT12CD, tuopHJLMNOQRST7F.}

{\bf 72-13} {\tt 123456789ABCDEFG, HIJKLMNOPQRSTUVW, XYZabcdefgVWDEFG, hijklmnefgUW9ABC, opqrstuvcdST78FG, wxyz!tuvmnRT56BC, "\#\$\%rsabNOPQ34FG, \&'(!qvjklXYZfgMW, )*-(\$\%!qulZgKLPQ, )*-\#\%z!quiJLPQ2C, *-("\$yhlZgIKPQ1B, )-'\%x!pqulZfHLQT, )*\&(\$w!oquvlZgKP.}

\smallskip  
{\bf 73-13} {\tt 123456789ABCDEFG, HIJKLMNOPQRSTUVW, XYZabcdefghijklm, nopqrshijklmTUVW, tuvwrsNOPQRSVWFG, xyzwqsbcdefglmDE, !"\#zfgKLMQRSUWCE, \$\%\#vegIJLMOPRSBG, \&'()yupsHQUV789A, *\%"xtoZadjkm56EG, -/()\$!Yacgik349A, -/*\&'nbmJMPS129A, XabcdefgijlmJLOS.}

\smallskip  
{\bf 74-19} {\tt 123456789ABCDEFG, HIJKLMNOPQRSDEFG, TUVWXYZaPQRS9ABC, bcdefghijkZaNORS, lmnopghijkZaNOQS, qrstuefWXY678ABC, tupdVXYaMOQR78BC, vwxyz!udjkVYMR8C, "z!snocfUXYZPSBC, \#\$\%\&y!rtmoTY58AB, tulmnopXYaOQ78BC, '(\%\&xyz!lmnoiULS, y!rmoTUXYZPS5ABC, xyz!lmnoUXYZPSBC, )*-/wscehkJK34FG, /"qspUVXYZMPQSBC, :-(\$\&vbcdefgkaOR, :)*'\#\&x!mnHI12FG, y!rmoghijkTZNS5A.}
  
\smallskip  
{\bf 77-14} {\tt 123456789ABCDEFG, HIJKLMNOPQRSTUVW, XYZabcdefghijklm, nopqrstujklmTUVW, vwxyz!"\#9ABCDEFG, \$\%\&'"\#rstuhiRSFG, ()\&'!\#pqtufgPQEG, *-/:delmNOVW5678, ;\textless =/:)\%'oqsu3478, \textless =*-bcgiLMQS1278, ;(\$'yznqstaiMRCD, wxyzZekmKOUWABCD, xz!"YeklKNTW9BDG, xy!"XegkKLPW9ADG.}

\smallskip  
{\bf 80-23} {\tt 123456789ABCDEFG, HIJKLMNOPQRSTUVW, XYZaNOPQRSTUVWFG, bcdeHIJKLMRSTUVW, fghijaMPQUVWBCDE, klmnojLMOQTW789A, pqrsojbcdeZaSTVW, tuvwfghi3456BCDE, xyz!vwrshiJK56DE, "\#\$\%\&'()z!ndeYAG, *-/:'()qsmceIK29, ;:pslYIJNPSV12AF, \textless =\textgreater /xylnojbcLM8A, ;*-:()qsIKNPSV12, ?-\&)jZLNOPQSTUVW, =\textgreater \$\%xyz!lnojLM8A, \textgreater \%y!XYZaOPRSUWFG, @?=*-\#\$\&()xzNQTV, ?*/"\#)tuhibe56BC, \textgreater :\$x!klnXY189AFG, \textless *-/"'()qsbcdeIK, "\$\%'z!deXYZaRUFG, @?*-\#\&()bcdeSTVW.}
}      
\end{sloppypar}

\subsection{\label{app:A-10}32-dim 160-661}

\small{The master set is given in the repository.}
\smallskip

\begin{sloppypar}

\small{{\bf 135-21} {\tt 123456789ABCDEFGHIJKLMNOPQRSTUVW, XYZabcdefghijklmnopqrstuvwRSTUVW, xyz!"\#\$\%\&'()*-/:;\textless =\textgreater ?@[\textbackslash ]\textasciicircum vwPQVW, \_`\{\textbar \}\textasciitilde +1+2+3+4]\textasciicircum mnopqrstuJKLMNOPQUVW, +5+6\&'()*-/:;\textless =\textgreater ?@[\textbackslash ABCDEFGHIMNOPQ,\break +7+8+9+A+B+1+2+3+4efghijklpqrstu6789GHIST, +C+D+E+F+G+H+A+B\textasciitilde +4?@[\textbackslash kl2345DEFGHIKLOQTU, +I+J+K+L+M+N+O+P+Q+E+F+G+H+8+9+B+5+6cdjl1589BCFIST, +R+S+T+U+V+W+X+Y+O+P+Q+H+8+9+A\}bdiklw13457CILUW, +Z+a+N+H+9+A+B`\{\textbar \}+1+2+3*-/:;\textless =\textgreater \textasciicircum bikow67JW, +K+L+M+G+8\_ablw12345789BCFHIKLORSTUVW,\break  +J+M+7+8+A+1+2+3+4Zaiklnorstu34689GILNPQS, +Z+a+I+L+M+N+E+F+H+9+B`\{\textbar +1+2+3*-/:;\textless =\textgreater \textasciicircum jo689J, +V+W+X+Y+7+8+9+B;\textless =\textgreater XYcdfghijlpqrstu79GH, +J+M+7+9+B\_+1+2+3+4]\textasciicircum mrstuv34679HIJLNPQRV, +Z+a+M+C+D+7+8+9+A\_`\{\textbar \textasciitilde +1+2+3+4]\textasciicircum rstu4HIJLNPQ, +Z+a`\{\textbar +1+2+3\&'()*-/:;\textless =\textgreater ?@[\textbackslash ADEGMNPQ, +b+M+Q+Bxyz!"\#\$\%Ybdehijklmvw3HRSTUVW, +c+d+K+L+M+N+P+Q+D+G+H+8+9+B`\{\textbar \textasciitilde +3/:=\textgreater blw4789UW, +R+S+T+U+V+W+X+Y+1+2+3+4*-/:;\textless =\textgreater Zafgnopqrstu, +e+f+g+h+i+j+b+J+K+N+F+G+B+6ZeghijknqtuvBFHKST.}

\smallskip
{\bf 144-11} {\tt 123456789ABCDEFGHIJKLMNOPQRSTUVW, 1XYZa5bcdef9AgBChiGjklmnopqNrstu, 1v3mwxpyzfcdFH!"l\#N\$B\%\&'4()seA*-, /!1XY5b:;c\#df\textless A=\textgreater C?h\&klmn@\$s[\textbackslash ]\textasciicircum , \_;`f\{:\textbar d\}\textasciitilde +1+2!+3+4k+5+6+7+8?h\&\textasciicircum +9+A+B\textgreater ]+C+D+E,\break  \_4(5+9+F+B+G+H+I8gB+JC+K\%+LF+3i+Mm+N+5Ns+O+8V+2+D, Zj+P+Qxz+R+SPQ+T+U"+Vaq+WOE+X+Y+ZI+aKM*+b+c+d2+e, +f+Z+Q+V+g+H+I\textasciitilde +K+7+h+dE\}+3+i+j+MJK+kL+l+5\textbar OQR+mT+2+D, jZ+Pw@`yz+RrD\textbackslash F+n+o+W7+N+6+X\textasciicircum 4)o+p+O*+q]-+r+s, /!+f+n6c\textless =+b+C+c+hE+Uh\&+okl+aK+k+A[OQS+mU]W\textasciicircum , +E+r+s+4Z+p+F+g+G\{+T+q+S+J+L+e+3+1+ij+j+Yv'+l+5zt*+2+Du.}

\smallskip
{\bf 144-13} {\tt 123456789ABCDEFGHIJKLMNOPQRSTUVW, XYZabcdefghijklmnopqrstuPQRSTUVW, vwxyz!"\#\$\%\&'()*-/:;\textless rstuHIJKLMNO, =\textgreater ?@[\textbackslash ]\textasciicircum \_`\{\textbar \}\textasciitilde +1+2/:;\textless nopqDEFGLMNO, +3+4+5+6+7+8+9+A+B+C+D+E\}\textasciitilde +1+2()*-jklm9ABCTUVW, +F+G+H+I+J+K+L+M+N+O+P+Q+R+S+T+U\_`\{\textbar \$\%\&'5678LMNO, +V+W+X+Y+Z+a+b+c+N+O+P+Q+R+S+T+U+B+C+D+Efghi1234TUVW, +d+e+f+g+h+i+j+k+Z+a+b+c+J+K+L+M+R+S+T+U+9+A]\textasciicircum +2"\#-dequ,\break  +l+m+n+o+h+i+j+k+X+Y+b+c+H+I+L+M+P+Q+T+U+7+8+9+A\textbackslash \textasciicircum +1+2!\#*-,\break  +p+q+l+m+n+o+j+k+X+Y+b+c+H+I+L+M+P+Q+T+U+9+A[\textasciicircum +1z\#*bcpt, +r+s+q+n+o+f+g+i+5+6+8+9+A?@[\textbackslash \textasciicircum \textasciitilde +1+2xy\#)*Zacpqu, +p+q+l+m+d+e+j+k+V+W+X+Y+b+c+F+G+H+I+L+M+P+Q+R+S+T+U\textgreater wXYpt,\break  +r+s+q+l+m+n+o+i+b+c+L+M+P+Q+3+4+8+9+A=[\textbackslash \textasciicircum \textasciitilde +1+2v\#)*cp.}

\smallskip
{\bf 146-14} {\tt 123456789ABCDEFGHIJKLMNOPQRSTUVW, XYZabcdefghijklmnopqrstuPQRSTUVW, vwxyz!"\#\$\%\&'()*-/:;\textless =\textgreater ?@HIJKLMNO, [\textbackslash ]\textasciicircum \_`\{\textbar \}\textasciitilde +1+2+3+4+5+6+7+8+9+A;\textless =\textgreater ?@DEFGNO, +B+C+D+E+F+G+H+I+J+K+5+6+7+8+9+A*-/:rstu9ABCTUVW, +L+M+N+O+P+Q+R+S+T+U+V+I+J+K+2+3+4+8+9+A()/:jklmnopq, +W+X+Y+Z+T+U+V+H+K\}\textasciitilde +1+3+4+6+7+9+A\&')-:@hipq8GMO, +a+b+c+d+e+P+Q+R+S+F+G+I+J\{\textbar +2+5+8\#\$\%(*/fgno7FKL, +f+g+h+i+j+k+l+m+e+Z+R+S+V+F+G+H+J+K\textbar +1+2+3+4+5+6+7+8+A\%'/:, +d+S+G+I\textbackslash ]\textasciicircum \_`\textbar \textasciitilde +1+2+4+5+7+8+A"\$')-:\textless =\textgreater ?EGJL, +n+o+p+q+r+s+j+k+l+m+b+c+X+Y+L+M+N+O\_`z!\textgreater ?bcdenopq, +t+u+p+q+r+s+h+i+l+m+B+C+D+E]\textasciicircum xy\textgreater ?XYZa3456TUVW, +t+u+n+o+f+g+l+m+a+e+W+Z+J+K[\textbackslash +2+4+5+7vw\$@12FGJKMN, +e+Z+R+V+F+H+J+K[\{\textbar +1+2+4+5+7+8+A"\$\%(*/;@DFJLNO.}

\smallskip
{\bf 160-29} {\tt 123456789ABCDEFGHIJKLMNOPQRSTUVW, XYZabcdefghijklmnopqrstuvwRSTUVW, xyz!"\#\$\%\&'()*klmnopqrstuvwOPQUVW, -/:;\textless =\textgreater ?@[\textbackslash ]\textasciicircum \_`\{\textbar \}\textasciitilde +1+2+3+4+5\#\$\%\&'()*, +6+7+8+9+A+B+C+D+E+F+G+3+4+5!"()*ghijtuvwMNPQW, +H+I+J+K+F+G+2+5yz\%\&'()*efijlmnopqrstuvw, +L+M+N+O+P+Q+J+K+F+G\textasciitilde +1+2+3+4+5\%\&'()*defhijpqrs, +R+S+T+U+V+W+X+Y+Z+a+b+c+H+I+G\}!"\$\&'*cflmnovwKL, +d+e+f+g+h+i+j+k+O+P+Q+I+F\_`\{\textbar \%'()bejknoGHIJT, +l+m+k+W+X+Y+Z+a+b+c+D+E+2+5yz!"bcmuvwCDEFIJLS, +j+V+c+L+M+N+O+P+Q+C\textasciicircum \}\textasciitilde +1+3+4!"\#\$XYZacdghpqrs, +n+o+p+m+h+i+j+U+a+b+K+8+9+A+B+D+E+2+5yz\&ahiuBEFORT, +o+p+h+i+V+c+M+N+P+Q+H+I+8+9+A+B\}+1+2+4yz!"\#'Zcpqrs, +q+l+k+c+K+7+A+B+D+E]\{\textbar +3+4+5z!"*cfhtuvwDFJNQ, +r+s+n+m+j+S+T+Y+Z+H+I+6+7\textasciicircum \textasciitilde +1+3y\$'YalmBKLMNQRT, +t+s+m+g+Z+J+6+7+8+9+A+B+C+E+F\textasciicircum +3+4+5"\#agjtwAHKMNQ, +u+v+t+s+d+e+f+R-/:;\textless =\textgreater ?@[\textbackslash ]\textasciicircum \_`\{\textbar \textasciitilde +3+5\$\&Ya, +w+x+l+j+k+L+O+J+K+C+F+GXdefghij56789CDGIJMN, +y+z+!+"+\#+\$+u+v+n+e+f+i+U+Z+b+6+9+A+B+D\textless =\textgreater ?@[\textbackslash `34EL, +\%+\&+'+\#+\$+x+X+6+9+B+G:;[\textbackslash +2+5!cfhsuw23489FNO, +L+N+Q+K:;\textgreater ?@\textbackslash \textasciicircum \_\textbar \}+2+5\#\$\&'YZacdioqrtvw, +"+r+R+Y+c+H+K+6+7+D+G\textasciitilde +3+4z!*ZcfhilmtuvFMNOQ, +\&+'+y+z+!+\$+w+x+v+f+g+L+M+N+O+P+C=x()egrs14789HV, +(+\%+\&+'+y+z+!+"+\#+w+x+r+n+l+g+k+T+U+W+X+Zxbk689GHQSW, +(+\%+!+q+o+p+l+i+j+k+S+V+I/;=]\_`\{\$bnop89CJRST, +u+s+p+d+f+8+9+A+C+F-/:;\textless =\textgreater ?@[\textbackslash ]\_`\{\textbar \%)gjov, +(+\%+\&+'+!+w+o+l+i+j+V+I/;\textless =@\textbackslash ]`\{\$bopr289CJR, +j+a+c+I+J+C+E+F-/:;\textless ?[\textbackslash \textasciitilde +1+3+4y"XYZacegjmu, +)+t+s+R+T+Y+Z+H+6+7+E+G\textasciicircum +1+3+4+5"\#*YflmtwKLMNQU.}
}

\end{sloppypar}

}


\begin{thebibliography}{70}
\expandafter\ifx\csname natexlab\endcsname\relax\def\natexlab#1{#1}\fi
\expandafter\ifx\csname bibnamefont\endcsname\relax
  \def\bibnamefont#1{#1}\fi
\expandafter\ifx\csname bibfnamefont\endcsname\relax
  \def\bibfnamefont#1{#1}\fi
\expandafter\ifx\csname citenamefont\endcsname\relax
  \def\citenamefont#1{#1}\fi
\expandafter\ifx\csname url\endcsname\relax
  \def\url#1{\texttt{#1}}\fi
\expandafter\ifx\csname urlprefix\endcsname\relax\def\urlprefix{URL }\fi
\providecommand{\bibinfo}[2]{#2}
\providecommand{\eprint}[2][]{\url{#2}}

\bibitem[{\citenamefont{Hein et~al.}(2004)\citenamefont{Hein, Eisert, and
  Briegel}}]{heinbriegel04}
\bibinfo{author}{\bibfnamefont{M.}~\bibnamefont{Hein}},
  \bibinfo{author}{\bibfnamefont{J.}~\bibnamefont{Eisert}}, \bibnamefont{and}
  \bibinfo{author}{\bibfnamefont{H.~J.} \bibnamefont{Briegel}},
  \bibinfo{journal}{{\it Phys. Rev. A}} \textbf{\bibinfo{volume}{{\bf 69}}},
  \bibinfo{pages}{062311} (\bibinfo{year}{2004}).

\bibitem[{\citenamefont{Cabello and Moreno}(2010)}]{cabello-moreno-09}
\bibinfo{author}{\bibfnamefont{A.}~\bibnamefont{Cabello}} \bibnamefont{and}
  \bibinfo{author}{\bibfnamefont{P.}~\bibnamefont{Moreno}},
  \bibinfo{journal}{{\it Phys. Rev. A}} \textbf{\bibinfo{volume}{{\bf 81}}},
  \bibinfo{pages}{042110} (\bibinfo{year}{2010}).

\bibitem[{\citenamefont{Cabello}(2010)}]{cabello-10}
\bibinfo{author}{\bibfnamefont{A.}~\bibnamefont{Cabello}},
  \bibinfo{journal}{{\it Phys. Rev. Lett.}} \textbf{\bibinfo{volume}{{\bf
  104}}}, \bibinfo{pages}{2210401} (\bibinfo{year}{2010}).

\bibitem[{\citenamefont{Bartlett}(2014)}]{bartlett-nature-14}
\bibinfo{author}{\bibfnamefont{S.~D.} \bibnamefont{Bartlett}},
  \bibinfo{journal}{{\it Nature}} \textbf{\bibinfo{volume}{{\bf 510}}},
  \bibinfo{pages}{345} (\bibinfo{year}{2014}).

\bibitem[{\citenamefont{Howard et~al.}(2014)\citenamefont{Howard, Wallman,
  Veitech, and Emerson}}]{magic-14}
\bibinfo{author}{\bibfnamefont{M.}~\bibnamefont{Howard}},
  \bibinfo{author}{\bibfnamefont{J.}~\bibnamefont{Wallman}},
  \bibinfo{author}{\bibfnamefont{V.}~\bibnamefont{Veitech}}, \bibnamefont{and}
  \bibinfo{author}{\bibfnamefont{J.}~\bibnamefont{Emerson}},
  \bibinfo{journal}{{\it Nature}} \textbf{\bibinfo{volume}{{\bf 510}}},
  \bibinfo{pages}{351} (\bibinfo{year}{2014}).

\bibitem[{\citenamefont{Delfosse et~al.}(2015)\citenamefont{Delfosse, Guerin,
  Bian, and Raussendorf}}]{delfosse-raussendorf-15}
\bibinfo{author}{\bibfnamefont{N.}~\bibnamefont{Delfosse}},
  \bibinfo{author}{\bibfnamefont{P.~A.} \bibnamefont{Guerin}},
  \bibinfo{author}{\bibfnamefont{J.}~\bibnamefont{Bian}}, \bibnamefont{and}
  \bibinfo{author}{\bibfnamefont{R.}~\bibnamefont{Raussendorf}},
  \bibinfo{journal}{{\it Phys. Rev. X}} \textbf{\bibinfo{volume}{{\bf 5}}},
  \bibinfo{pages}{021003} (\bibinfo{year}{2015}).

\bibitem[{\citenamefont{Raussendorf}(2013)}]{raussendorf-13}
\bibinfo{author}{\bibfnamefont{R.}~\bibnamefont{Raussendorf}},
  \bibinfo{journal}{{\it Phys. Rev. A}} \textbf{\bibinfo{volume}{{\bf 88}}},
  \bibinfo{pages}{022322} (\bibinfo{year}{2013}).

\bibitem[{\citenamefont{Waegell and Aravind}(2011)}]{waeg-aravind-jpa-11}
\bibinfo{author}{\bibfnamefont{M.}~\bibnamefont{Waegell}} \bibnamefont{and}
  \bibinfo{author}{\bibfnamefont{P.~K.} \bibnamefont{Aravind}},
  \bibinfo{journal}{{\it J. Phys. A}} \textbf{\bibinfo{volume}{{\bf 44}}},
  \bibinfo{pages}{505303} (\bibinfo{year}{2011}).

\bibitem[{\citenamefont{Pavi{\v c}i{\'c}
  et~al.}(2010{\natexlab{a}})\citenamefont{Pavi{\v c}i{\'c}, Mc{K}ay, Megill,
  and Fresl}}]{bdm-ndm-mp-fresl-jmp-10}
\bibinfo{author}{\bibfnamefont{M.}~\bibnamefont{Pavi{\v c}i{\'c}}},
  \bibinfo{author}{\bibfnamefont{B.~D.} \bibnamefont{Mc{K}ay}},
  \bibinfo{author}{\bibfnamefont{N.~D.} \bibnamefont{Megill}},
  \bibnamefont{and} \bibinfo{author}{\bibfnamefont{K.}~\bibnamefont{Fresl}},
  \bibinfo{journal}{{\it J. Math. Phys.}} \textbf{\bibinfo{volume}{{\bf 51}}},
  \bibinfo{pages}{102103} (\bibinfo{year}{2010}{\natexlab{a}}).

\bibitem[{\citenamefont{Megill and Pavi{\v c}i{\'c}}(2011)}]{mp-7oa}
\bibinfo{author}{\bibfnamefont{N.~D.} \bibnamefont{Megill}} \bibnamefont{and}
  \bibinfo{author}{\bibfnamefont{M.}~\bibnamefont{Pavi{\v c}i{\'c}}},
  \bibinfo{journal}{{\it Ann. {H}enri {P}oinc.}} pp.
  \bibinfo{pages}{1417--1429} (\bibinfo{year}{2011}).

\bibitem[{\citenamefont{Cabello et~al.}(2011)\citenamefont{Cabello,
  {D'A}mbrosio, Nagali, and Sciarrino}}]{cabello-dambrosio-11}
\bibinfo{author}{\bibfnamefont{A.}~\bibnamefont{Cabello}},
  \bibinfo{author}{\bibfnamefont{V.}~\bibnamefont{{D'A}mbrosio}},
  \bibinfo{author}{\bibfnamefont{E.}~\bibnamefont{Nagali}}, \bibnamefont{and}
  \bibinfo{author}{\bibfnamefont{F.}~\bibnamefont{Sciarrino}},
  \bibinfo{journal}{{\it Phys. Rev. A}} \textbf{\bibinfo{volume}{{\bf 84}}},
  \bibinfo{pages}{030302(R)} (\bibinfo{year}{2011}).

\bibitem[{\citenamefont{Nagata}(2005)}]{nagata-05}
\bibinfo{author}{\bibfnamefont{K.}~\bibnamefont{Nagata}},
  \bibinfo{journal}{{\it Phys. Rev. A}} \textbf{\bibinfo{volume}{{\bf 72}}},
  \bibinfo{pages}{012325–} (\bibinfo{year}{2005}).

\bibitem[{\citenamefont{Simon et~al.}(2000)\citenamefont{Simon, Weinfurter,
  ${\dot{\rm Z}}$ukowski, and Zeilinger}}]{simon-zeil00}
\bibinfo{author}{\bibfnamefont{C.}~\bibnamefont{Simon}},
  \bibinfo{author}{\bibfnamefont{H.}~\bibnamefont{Weinfurter}},
  \bibinfo{author}{\bibfnamefont{M.}~\bibnamefont{${\dot{\rm Z}}$ukowski}},
  \bibnamefont{and}
  \bibinfo{author}{\bibfnamefont{A.}~\bibnamefont{Zeilinger}},
  \bibinfo{journal}{{\it Phys. Rev. Lett.}} \textbf{\bibinfo{volume}{{\bf
  85}}}, \bibinfo{pages}{1783} (\bibinfo{year}{2000}).

\bibitem[{\citenamefont{Michler et~al.}(2000)\citenamefont{Michler, Weinfurter,
  and ${\dot{\rm Z}}$ukowski}}]{michler-zeil-00}
\bibinfo{author}{\bibfnamefont{M.}~\bibnamefont{Michler}},
  \bibinfo{author}{\bibfnamefont{H.}~\bibnamefont{Weinfurter}},
  \bibnamefont{and} \bibinfo{author}{\bibfnamefont{M.}~\bibnamefont{${\dot{\rm
  Z}}$ukowski}}, \bibinfo{journal}{{\it Phys. Rev. Lett.}}
  \textbf{\bibinfo{volume}{{\bf 84}}}, \bibinfo{pages}{5457}
  (\bibinfo{year}{2000}).

\bibitem[{\citenamefont{Amselem et~al.}(2009)\citenamefont{Amselem,
  R{\aa}dmark, Bourennane, and Cabello}}]{amselem-cabello-09}
\bibinfo{author}{\bibfnamefont{E.}~\bibnamefont{Amselem}},
  \bibinfo{author}{\bibfnamefont{M.}~\bibnamefont{R{\aa}dmark}},
  \bibinfo{author}{\bibfnamefont{M.}~\bibnamefont{Bourennane}},
  \bibnamefont{and} \bibinfo{author}{\bibfnamefont{A.}~\bibnamefont{Cabello}},
  \bibinfo{journal}{{\it Phys. Rev. Lett.}} \textbf{\bibinfo{volume}{{\bf
  103}}}, \bibinfo{pages}{160405} (\bibinfo{year}{2009}).

\bibitem[{\citenamefont{Liu et~al.}(2009)\citenamefont{Liu, Huang, Gong, Sun,
  Zhang, Li, and Guo}}]{liu-09}
\bibinfo{author}{\bibfnamefont{B.~H.} \bibnamefont{Liu}},
  \bibinfo{author}{\bibfnamefont{Y.~F.} \bibnamefont{Huang}},
  \bibinfo{author}{\bibfnamefont{Y.~X.} \bibnamefont{Gong}},
  \bibinfo{author}{\bibfnamefont{F.~W.} \bibnamefont{Sun}},
  \bibinfo{author}{\bibfnamefont{Y.~S.} \bibnamefont{Zhang}},
  \bibinfo{author}{\bibfnamefont{C.~F.} \bibnamefont{Li}}, \bibnamefont{and}
  \bibinfo{author}{\bibfnamefont{G.~C.} \bibnamefont{Guo}},
  \bibinfo{journal}{{\it Phys. Rev. A}} \textbf{\bibinfo{volume}{{\bf 80}}},
  \bibinfo{pages}{044101} (\bibinfo{year}{2009}).

\bibitem[{\citenamefont{{D'A}mbrosio et~al.}(2013)\citenamefont{{D'A}mbrosio,
  Herbauts, Amselem, Nagali, Bourennane, Sciarrino, and
  Cabello}}]{d-ambrosio-cabello-13}
\bibinfo{author}{\bibfnamefont{V.}~\bibnamefont{{D'A}mbrosio}},
  \bibinfo{author}{\bibfnamefont{I.}~\bibnamefont{Herbauts}},
  \bibinfo{author}{\bibfnamefont{E.}~\bibnamefont{Amselem}},
  \bibinfo{author}{\bibfnamefont{E.}~\bibnamefont{Nagali}},
  \bibinfo{author}{\bibfnamefont{M.}~\bibnamefont{Bourennane}},
  \bibinfo{author}{\bibfnamefont{F.}~\bibnamefont{Sciarrino}},
  \bibnamefont{and} \bibinfo{author}{\bibfnamefont{A.}~\bibnamefont{Cabello}},
  \bibinfo{journal}{{\it Phys. Rev. X}} \textbf{\bibinfo{volume}{{\bf 3}}},
  \bibinfo{pages}{011012} (\bibinfo{year}{2013}).

\bibitem[{\citenamefont{Huang et~al.}(2003)\citenamefont{Huang, Li, Zhang, Pan,
  and Guo}}]{ks-exp-03}
\bibinfo{author}{\bibfnamefont{Y.-F.} \bibnamefont{Huang}},
  \bibinfo{author}{\bibfnamefont{C.-F.} \bibnamefont{Li}},
  \bibinfo{author}{\bibfnamefont{Y.-S.} \bibnamefont{Zhang}},
  \bibinfo{author}{\bibfnamefont{J.-W.} \bibnamefont{Pan}}, \bibnamefont{and}
  \bibinfo{author}{\bibfnamefont{G.-C.} \bibnamefont{Guo}},
  \bibinfo{journal}{{\it Phys. Rev. Lett.}} \textbf{\bibinfo{volume}{{\bf
  90}}}, \bibinfo{pages}{250401} (\bibinfo{year}{2003}).

\bibitem[{\citenamefont{Hasegawa et~al.}(2006)\citenamefont{Hasegawa, Loidl,
  Badurek, Baron, and Rauch}}]{h-rauch06}
\bibinfo{author}{\bibfnamefont{Y.}~\bibnamefont{Hasegawa}},
  \bibinfo{author}{\bibfnamefont{R.}~\bibnamefont{Loidl}},
  \bibinfo{author}{\bibfnamefont{G.}~\bibnamefont{Badurek}},
  \bibinfo{author}{\bibfnamefont{M.}~\bibnamefont{Baron}}, \bibnamefont{and}
  \bibinfo{author}{\bibfnamefont{H.}~\bibnamefont{Rauch}},
  \bibinfo{journal}{{\it Phys. Rev. Lett.}} \textbf{\bibinfo{volume}{{\bf
  97}}}, \bibinfo{pages}{230401} (\bibinfo{year}{2006}).

\bibitem[{\citenamefont{Cabello et~al.}(2008)\citenamefont{Cabello, Filipp,
  Rauch, and Hasegawa}}]{cabello-fillip-rauch-08}
\bibinfo{author}{\bibfnamefont{A.}~\bibnamefont{Cabello}},
  \bibinfo{author}{\bibfnamefont{S.}~\bibnamefont{Filipp}},
  \bibinfo{author}{\bibfnamefont{H.}~\bibnamefont{Rauch}}, \bibnamefont{and}
  \bibinfo{author}{\bibfnamefont{Y.}~\bibnamefont{Hasegawa}},
  \bibinfo{journal}{{\it Phys. Rev. Lett.}} \textbf{\bibinfo{volume}{{\bf
  100}}}, \bibinfo{pages}{130404} (\bibinfo{year}{2008}).

\bibitem[{\citenamefont{Bartosik et~al.}(2009)\citenamefont{Bartosik, Klep,
  Schmitzer, Sponar, Cabello, Rauch, and Hasegawa}}]{b-rauch-09}
\bibinfo{author}{\bibfnamefont{H.}~\bibnamefont{Bartosik}},
  \bibinfo{author}{\bibfnamefont{J.}~\bibnamefont{Klep}},
  \bibinfo{author}{\bibfnamefont{C.}~\bibnamefont{Schmitzer}},
  \bibinfo{author}{\bibfnamefont{S.}~\bibnamefont{Sponar}},
  \bibinfo{author}{\bibfnamefont{A.}~\bibnamefont{Cabello}},
  \bibinfo{author}{\bibfnamefont{H.}~\bibnamefont{Rauch}}, \bibnamefont{and}
  \bibinfo{author}{\bibfnamefont{Y.}~\bibnamefont{Hasegawa}},
  \bibinfo{journal}{{\it Phys. Rev. Lett.}} \textbf{\bibinfo{volume}{{\bf
  103}}}, \bibinfo{pages}{040403} (\bibinfo{year}{2009}).

\bibitem[{\citenamefont{Kirchmair et~al.}(2009)\citenamefont{Kirchmair,
  Z{\"a}hringer, Gerritsma, Kleinmann, G{\"u}hne, Cabello, Blatt, and
  Roos}}]{k-cabello-blatt-09}
\bibinfo{author}{\bibfnamefont{G.}~\bibnamefont{Kirchmair}},
  \bibinfo{author}{\bibfnamefont{F.}~\bibnamefont{Z{\"a}hringer}},
  \bibinfo{author}{\bibfnamefont{R.}~\bibnamefont{Gerritsma}},
  \bibinfo{author}{\bibfnamefont{M.}~\bibnamefont{Kleinmann}},
  \bibinfo{author}{\bibfnamefont{O.}~\bibnamefont{G{\"u}hne}},
  \bibinfo{author}{\bibfnamefont{A.}~\bibnamefont{Cabello}},
  \bibinfo{author}{\bibfnamefont{R.}~\bibnamefont{Blatt}}, \bibnamefont{and}
  \bibinfo{author}{\bibfnamefont{C.~F.} \bibnamefont{Roos}},
  \bibinfo{journal}{{\it Nature}} \textbf{\bibinfo{volume}{{\bf 460}}},
  \bibinfo{pages}{494} (\bibinfo{year}{2009}).

\bibitem[{\citenamefont{Moussa et~al.}(2010)\citenamefont{Moussa, Ryan, Cory,
  and Laflamme}}]{moussa-09}
\bibinfo{author}{\bibfnamefont{O.}~\bibnamefont{Moussa}},
  \bibinfo{author}{\bibfnamefont{C.~A.} \bibnamefont{Ryan}},
  \bibinfo{author}{\bibfnamefont{D.~G.} \bibnamefont{Cory}}, \bibnamefont{and}
  \bibinfo{author}{\bibfnamefont{R.}~\bibnamefont{Laflamme}},
  \bibinfo{journal}{{\it Phys. Rev. Lett.}} \textbf{\bibinfo{volume}{{\bf
  104}}}, \bibinfo{pages}{160501} (\bibinfo{year}{2010}).

\bibitem[{\citenamefont{Lison{\v e}k
  et~al.}(2014{\natexlab{a}})\citenamefont{Lison{\v e}k, Badzi{\c a}g,
  Portillo, and Cabello}}]{lisonek-14}
\bibinfo{author}{\bibfnamefont{P.}~\bibnamefont{Lison{\v e}k}},
  \bibinfo{author}{\bibfnamefont{P.}~\bibnamefont{Badzi{\c a}g}},
  \bibinfo{author}{\bibfnamefont{J.~R.} \bibnamefont{Portillo}},
  \bibnamefont{and} \bibinfo{author}{\bibfnamefont{A.}~\bibnamefont{Cabello}},
  \bibinfo{journal}{{Phys. Rev. A}} \textbf{\bibinfo{volume}{{\bf 89}}},
  \bibinfo{pages}{042101} (\bibinfo{year}{2014}{\natexlab{a}}).

\bibitem[{\citenamefont{Ca{\~n}as
  et~al.}(2014{\natexlab{a}})\citenamefont{Ca{\~n}as, Arias, Etcheverry,
  G{\'o}mez, Cabello, Xavier, and Lima}}]{canas-cabello-14}
\bibinfo{author}{\bibfnamefont{G.}~\bibnamefont{Ca{\~n}as}},
  \bibinfo{author}{\bibfnamefont{M.}~\bibnamefont{Arias}},
  \bibinfo{author}{\bibfnamefont{S.}~\bibnamefont{Etcheverry}},
  \bibinfo{author}{\bibfnamefont{E.~S.} \bibnamefont{G{\'o}mez}},
  \bibinfo{author}{\bibfnamefont{A.}~\bibnamefont{Cabello}},
  \bibinfo{author}{\bibfnamefont{G.~B.} \bibnamefont{Xavier}},
  \bibnamefont{and} \bibinfo{author}{\bibfnamefont{G.}~\bibnamefont{Lima}},
  \bibinfo{journal}{{\it Phys. Rev. Lett.}} \textbf{\bibinfo{volume}{{\bf
  113}}}, \bibinfo{pages}{090404} (\bibinfo{year}{2014}{\natexlab{a}}).

\bibitem[{\citenamefont{Ca{\~n}as
  et~al.}(2014{\natexlab{b}})\citenamefont{Ca{\~n}as, Etcheverry, G{\'o}mez,
  Xavier, Lima, and Cabello}}]{canas-cabello-8d-14}
\bibinfo{author}{\bibfnamefont{G.}~\bibnamefont{Ca{\~n}as}},
  \bibinfo{author}{\bibfnamefont{S.}~\bibnamefont{Etcheverry}},
  \bibinfo{author}{\bibfnamefont{E.~S.} \bibnamefont{G{\'o}mez}},
  \bibinfo{author}{\bibfnamefont{G.~B.} \bibnamefont{Xavier}},
  \bibinfo{author}{\bibfnamefont{G.}~\bibnamefont{Lima}}, \bibnamefont{and}
  \bibinfo{author}{\bibfnamefont{A.}~\bibnamefont{Cabello}},
  \bibinfo{journal}{{\it Phys. Rev. A}} \textbf{\bibinfo{volume}{{\bf 90}}},
  \bibinfo{pages}{012119} (\bibinfo{year}{2014}{\natexlab{b}}).

\bibitem[{\citenamefont{Megill et~al.}(2011)\citenamefont{Megill, Fresl,
  Waegell, Aravind, and Pavi{\v c}i{\'c}}}]{mfwap-s-11}
\bibinfo{author}{\bibfnamefont{N.~D.} \bibnamefont{Megill}},
  \bibinfo{author}{\bibfnamefont{K.}~\bibnamefont{Fresl}},
  \bibinfo{author}{\bibfnamefont{M.}~\bibnamefont{Waegell}},
  \bibinfo{author}{\bibfnamefont{P.~K.} \bibnamefont{Aravind}},
  \bibnamefont{and} \bibinfo{author}{\bibfnamefont{M.}~\bibnamefont{Pavi{\v
  c}i{\'c}}}, \bibinfo{journal}{{\it Phys. Lett. A}}
  \textbf{\bibinfo{volume}{{\bf 375}}}, \bibinfo{pages}{3419–}
  (\bibinfo{year}{2011}), \bibinfo{note}{{\it Supplementary Material}}.

\bibitem[{\citenamefont{Gleason}(1957)}]{gleason}
\bibinfo{author}{\bibfnamefont{A.~M.} \bibnamefont{Gleason}},
  \bibinfo{journal}{{\it J. Math. Mech.}} \textbf{\bibinfo{volume}{{\bf 6}}},
  \bibinfo{pages}{885} (\bibinfo{year}{1957}).

\bibitem[{\citenamefont{Kochen and Specker}(1967)}]{koch-speck}
\bibinfo{author}{\bibfnamefont{S.}~\bibnamefont{Kochen}} \bibnamefont{and}
  \bibinfo{author}{\bibfnamefont{E.~P.} \bibnamefont{Specker}},
  \bibinfo{journal}{{\it J. Math. Mech.}} \textbf{\bibinfo{volume}{{\bf 17}}},
  \bibinfo{pages}{59} (\bibinfo{year}{1967}).

\bibitem[{\citenamefont{Zimba and Penrose}(1993)}]{zimba-penrose}
\bibinfo{author}{\bibfnamefont{J.}~\bibnamefont{Zimba}} \bibnamefont{and}
  \bibinfo{author}{\bibfnamefont{R.}~\bibnamefont{Penrose}},
  \bibinfo{journal}{{\it Stud. Hist. Phil. Sci.}} \textbf{\bibinfo{volume}{{\bf
  24}}}, \bibinfo{pages}{697} (\bibinfo{year}{1993}).

\bibitem[{\citenamefont{Pavi{\v c}i{\'c}
  et~al.}(2010{\natexlab{b}})\citenamefont{Pavi{\v c}i{\'c}, Megill, and
  Merlet}}]{pmm-2-10}
\bibinfo{author}{\bibfnamefont{M.}~\bibnamefont{Pavi{\v c}i{\'c}}},
  \bibinfo{author}{\bibfnamefont{N.~D.} \bibnamefont{Megill}},
  \bibnamefont{and} \bibinfo{author}{\bibfnamefont{J.-P.}
  \bibnamefont{Merlet}}, \bibinfo{journal}{{\it Phys. Lett. A}}
  \textbf{\bibinfo{volume}{{\bf 374}}}, \bibinfo{pages}{2122}
  (\bibinfo{year}{2010}{\natexlab{b}}).

\bibitem[{\citenamefont{Swift and Wright}(1980)}]{anti-shimony}
\bibinfo{author}{\bibfnamefont{A.~R.} \bibnamefont{Swift}} \bibnamefont{and}
  \bibinfo{author}{\bibfnamefont{R.}~\bibnamefont{Wright}},
  \bibinfo{journal}{{\it J. Math. Phys.}} \textbf{\bibinfo{volume}{{\bf 21}}},
  \bibinfo{pages}{77} (\bibinfo{year}{1980}).

\bibitem[{\citenamefont{Pavi{\v c}i{\'c}
  et~al.}(2005{\natexlab{a}})\citenamefont{Pavi{\v c}i{\'c}, Merlet, Mc{K}ay,
  and Megill}}]{pmmm04a-arXiv}
\bibinfo{author}{\bibfnamefont{M.}~\bibnamefont{Pavi{\v c}i{\'c}}},
  \bibinfo{author}{\bibfnamefont{J.-P.} \bibnamefont{Merlet}},
  \bibinfo{author}{\bibfnamefont{B.~D.} \bibnamefont{Mc{K}ay}},
  \bibnamefont{and} \bibinfo{author}{\bibfnamefont{N.~D.}
  \bibnamefont{Megill}}, \bibinfo{journal}{{\it J. Phys. A}}
  \textbf{\bibinfo{volume}{{\bf 38}}}, \bibinfo{pages}{1577}
  (\bibinfo{year}{2005}{\natexlab{a}}), \bibinfo{note}{{\it
  {A}r{X}iv:quant-ph/0409014}}.

\bibitem[{\citenamefont{Ruuge}(2012)}]{ruuge12}
\bibinfo{author}{\bibfnamefont{A.~E.} \bibnamefont{Ruuge}},
  \bibinfo{journal}{{\it J. Phys. A}} \textbf{\bibinfo{volume}{{\bf 45}}},
  \bibinfo{pages}{465304} (\bibinfo{year}{2012}).

\bibitem[{\citenamefont{Mc{K}ay et~al.}(2000)\citenamefont{Mc{K}ay, Megill, and
  Pavi{\v c}i{\'c}}}]{bdm-ndm-mp-1}
\bibinfo{author}{\bibfnamefont{B.~D.} \bibnamefont{Mc{K}ay}},
  \bibinfo{author}{\bibfnamefont{N.~D.} \bibnamefont{Megill}},
  \bibnamefont{and} \bibinfo{author}{\bibfnamefont{M.}~\bibnamefont{Pavi{\v
  c}i{\'c}}}, \bibinfo{journal}{{\it Int. J. Theor. Phys.}}
  \textbf{\bibinfo{volume}{{\bf 39}}}, \bibinfo{pages}{2381}
  (\bibinfo{year}{2000}).

\bibitem[{\citenamefont{Pavi{\v c}i{\'c}
  et~al.}(2005{\natexlab{b}})\citenamefont{Pavi{\v c}i{\'c}, Merlet, Mc{K}ay,
  and Megill}}]{pmmm05a}
\bibinfo{author}{\bibfnamefont{M.}~\bibnamefont{Pavi{\v c}i{\'c}}},
  \bibinfo{author}{\bibfnamefont{J.-P.} \bibnamefont{Merlet}},
  \bibinfo{author}{\bibfnamefont{B.~D.} \bibnamefont{Mc{K}ay}},
  \bibnamefont{and} \bibinfo{author}{\bibfnamefont{N.~D.}
  \bibnamefont{Megill}}, \bibinfo{journal}{{\it J. Phys. A}}
  \textbf{\bibinfo{volume}{{\bf 38}}}, \bibinfo{pages}{1577}
  (\bibinfo{year}{2005}{\natexlab{b}}).

\bibitem[{\citenamefont{Pavi{\v c}i{\'c} et~al.}(2011)\citenamefont{Pavi{\v
  c}i{\'c}, Megill, Aravind, and Waegell}}]{mp-nm-pka-mw-11}
\bibinfo{author}{\bibfnamefont{M.}~\bibnamefont{Pavi{\v c}i{\'c}}},
  \bibinfo{author}{\bibfnamefont{N.~D.} \bibnamefont{Megill}},
  \bibinfo{author}{\bibfnamefont{P.~K.} \bibnamefont{Aravind}},
  \bibnamefont{and} \bibinfo{author}{\bibfnamefont{M.}~\bibnamefont{Waegell}},
  \bibinfo{journal}{{\it J. Math. Phys.}} \textbf{\bibinfo{volume}{{\bf 52}}},
  \bibinfo{pages}{022104} (\bibinfo{year}{2011}).

\bibitem[{\citenamefont{Lison{\v e}k
  et~al.}(2014{\natexlab{b}})\citenamefont{Lison{\v e}k, Raussendorf, and
  Singh}}]{lisonek-raus-14}
\bibinfo{author}{\bibfnamefont{P.}~\bibnamefont{Lison{\v e}k}},
  \bibinfo{author}{\bibfnamefont{R.}~\bibnamefont{Raussendorf}},
  \bibnamefont{and} \bibinfo{author}{\bibfnamefont{V.}~\bibnamefont{Singh}}
  (\bibinfo{year}{2014}{\natexlab{b}}), \bibinfo{note}{{\tt ar{X}iv:1401.3035v1
  [quant-ph]}}.

\bibitem[{\citenamefont{Pavi{\v c}i{\'c}}(2002)}]{mporl02}
\bibinfo{author}{\bibfnamefont{M.}~\bibnamefont{Pavi{\v c}i{\'c}}}, in
  \emph{\bibinfo{booktitle}{{SCI} 2002/{ISAS} 2002 {P}roceedings, {T}he 6th
  {W}orld {M}ulticonference on {S}ystemics, {C}ybernetics, and {I}nformatics}},
  edited by \bibinfo{editor}{\bibfnamefont{N.}~\bibnamefont{Callaos}},
  \bibinfo{editor}{\bibfnamefont{Y.}~\bibnamefont{He}}, \bibnamefont{and}
  \bibinfo{editor}{\bibfnamefont{J.~A.} \bibnamefont{Perez-{P}eraza}}
  (\bibinfo{publisher}{SCI}, \bibinfo{address}{Orlando, Florida},
  \bibinfo{year}{2002}), vol. \bibinfo{volume}{{\bf XVII}, {SCI} in {P}hysics,
  {A}stronomy and {C}hemistry}, pp. \bibinfo{pages}{65--70}.

\bibitem[{\citenamefont{Pavi{\v c}i{\'c} et~al.}(2004)\citenamefont{Pavi{\v
  c}i{\'c}, Merlet, and Megill}}]{pmmm04b}
\bibinfo{author}{\bibfnamefont{M.}~\bibnamefont{Pavi{\v c}i{\'c}}},
  \bibinfo{author}{\bibfnamefont{J.-P.} \bibnamefont{Merlet}},
  \bibnamefont{and} \bibinfo{author}{\bibfnamefont{N.~D.}
  \bibnamefont{Megill}}, \bibinfo{journal}{{\it The {F}rench {N}ational
  {I}nstitute for {R}esearch in {C}omputer {S}cience and {C}ontrol {R}esearch
  {R}eports}} \textbf{\bibinfo{volume}{{\bf {RR-5388}}}}
  (\bibinfo{year}{2004}).

\bibitem[{\citenamefont{Pavi{\v c}i{\'c}}(2005)}]{pavicic-book-05}
\bibinfo{author}{\bibfnamefont{M.}~\bibnamefont{Pavi{\v c}i{\'c}}},
  \emph{\bibinfo{title}{Quantum Computation and Quantum Communication: {T}heory
  and Experiments}} (\bibinfo{publisher}{Springer}, \bibinfo{address}{New
  York}, \bibinfo{year}{2005}).

\bibitem[{\citenamefont{Peres}(1991)}]{peres}
\bibinfo{author}{\bibfnamefont{A.}~\bibnamefont{Peres}}, \bibinfo{journal}{{\it
  J. Phys. A}} \textbf{\bibinfo{volume}{{\bf 24}}}, \bibinfo{pages}{L175}
  (\bibinfo{year}{1991}).

\bibitem[{\citenamefont{Kernaghan and Peres}(1995)}]{kern-peres}
\bibinfo{author}{\bibfnamefont{M.}~\bibnamefont{Kernaghan}} \bibnamefont{and}
  \bibinfo{author}{\bibfnamefont{A.}~\bibnamefont{Peres}},
  \bibinfo{journal}{{\it Phys. Lett. A}} \textbf{\bibinfo{volume}{{\bf 198}}},
  \bibinfo{pages}{1} (\bibinfo{year}{1995}).

\bibitem[{\citenamefont{Kernaghan}(1994)}]{kern}
\bibinfo{author}{\bibfnamefont{M.}~\bibnamefont{Kernaghan}},
  \bibinfo{journal}{{\it J. Phys. A}} \textbf{\bibinfo{volume}{{\bf 27}}},
  \bibinfo{pages}{L829} (\bibinfo{year}{1994}).

\bibitem[{\citenamefont{Bub}(1996)}]{bub}
\bibinfo{author}{\bibfnamefont{J.}~\bibnamefont{Bub}}, \bibinfo{journal}{{\it
  Found. Phys.}} \textbf{\bibinfo{volume}{{\bf 26}}}, \bibinfo{pages}{787}
  (\bibinfo{year}{1996}).

\bibitem[{\citenamefont{Cabello et~al.}(1996)\citenamefont{Cabello, Estebaranz,
  and {Garc{\'\i}a-Alcaine}}}]{cabell-est-96a}
\bibinfo{author}{\bibfnamefont{A.}~\bibnamefont{Cabello}},
  \bibinfo{author}{\bibfnamefont{J.~M.} \bibnamefont{Estebaranz}},
  \bibnamefont{and}
  \bibinfo{author}{\bibfnamefont{G.}~\bibnamefont{{Garc{\'\i}a-Alcaine}}},
  \bibinfo{journal}{{\it Phys. Lett. A}} \textbf{\bibinfo{volume}{{\bf 212}}},
  \bibinfo{pages}{183} (\bibinfo{year}{1996}).

\bibitem[{\citenamefont{Aravind and Lee-{E}lkin}(1998)}]{aravind-600}
\bibinfo{author}{\bibfnamefont{P.~K.} \bibnamefont{Aravind}} \bibnamefont{and}
  \bibinfo{author}{\bibfnamefont{F.}~\bibnamefont{Lee-{E}lkin}},
  \bibinfo{journal}{{\it J. Phys. A}} \textbf{\bibinfo{volume}{{\bf 31}}},
  \bibinfo{pages}{9829} (\bibinfo{year}{1998}).

\bibitem[{\citenamefont{Peres}(1993)}]{peres-book}
\bibinfo{author}{\bibfnamefont{A.}~\bibnamefont{Peres}},
  \emph{\bibinfo{title}{Quantum Theory: {C}oncepts and Methods}}
  (\bibinfo{publisher}{Kluwer}, \bibinfo{address}{Dordrecht},
  \bibinfo{year}{1993}).

\bibitem[{\citenamefont{Waegell and Aravind}(2010)}]{aravind10}
\bibinfo{author}{\bibfnamefont{M.}~\bibnamefont{Waegell}} \bibnamefont{and}
  \bibinfo{author}{\bibfnamefont{P.~K.} \bibnamefont{Aravind}},
  \bibinfo{journal}{{\it J. Phys. A}} \textbf{\bibinfo{volume}{{\bf 43}}},
  \bibinfo{pages}{105304} (\bibinfo{year}{2010}).

\bibitem[{\citenamefont{Waegell et~al.}(2011)\citenamefont{Waegell, Aravind,
  Megill, and Pavi{\v c}i{\'c}}}]{waeg-aravind-megill-pavicic-11}
\bibinfo{author}{\bibfnamefont{M.}~\bibnamefont{Waegell}},
  \bibinfo{author}{\bibfnamefont{P.~K.} \bibnamefont{Aravind}},
  \bibinfo{author}{\bibfnamefont{N.~D.} \bibnamefont{Megill}},
  \bibnamefont{and} \bibinfo{author}{\bibfnamefont{M.}~\bibnamefont{Pavi{\v
  c}i{\'c}}}, \bibinfo{journal}{{\it Found. Phys.}}
  \textbf{\bibinfo{volume}{{\bf 41}}}, \bibinfo{pages}{883}
  (\bibinfo{year}{2011}).

\bibitem[{\citenamefont{Pavi{\v c}i{\'c}}(2013)}]{pavicic-book-13}
\bibinfo{author}{\bibfnamefont{M.}~\bibnamefont{Pavi{\v c}i{\'c}}},
  \emph{\bibinfo{title}{Companion to Quantum Computation and Communication}}
  (\bibinfo{publisher}{Wiley-VCH}, \bibinfo{address}{Weinheim},
  \bibinfo{year}{2013}).

\bibitem[{\citenamefont{Waegell and Aravind}(2014)}]{waeg-aravind-fp-14}
\bibinfo{author}{\bibfnamefont{M.}~\bibnamefont{Waegell}} \bibnamefont{and}
  \bibinfo{author}{\bibfnamefont{P.~K.} \bibnamefont{Aravind}},
  \bibinfo{journal}{{\it Found. Phys.}} \textbf{\bibinfo{volume}{{\bf 44}}},
  \bibinfo{pages}{1085} (\bibinfo{year}{2014}).

\bibitem[{\citenamefont{Waegell and Aravind}(2017)}]{waeg-aravind-17-arXiv}
\bibinfo{author}{\bibfnamefont{M.}~\bibnamefont{Waegell}} \bibnamefont{and}
  \bibinfo{author}{\bibfnamefont{P.~K.} \bibnamefont{Aravind}},
  \emph{\bibinfo{title}{The {P}enrose dodecahedron and the {W}itting polytope
  are identical in $\mathbb{CP}^{3}$}}, \bibinfo{howpublished}{{\it
  {A}r{X}iv:1701.06512}} (\bibinfo{year}{2017}).

\bibitem[{\citenamefont{Aravind and
  Waegell}(2013)}]{aravind-waegell-6dim-private}
\bibinfo{author}{\bibfnamefont{P.~K.} \bibnamefont{Aravind}} \bibnamefont{and}
  \bibinfo{author}{\bibfnamefont{M.}~\bibnamefont{Waegell}},
  \emph{\bibinfo{title}{6-dim {KS} master set, {P}rivate communication}}
  (\bibinfo{year}{2013}).

\bibitem[{\citenamefont{Ruuge and {van Oystaeyen}}(2005)}]{ruuge05}
\bibinfo{author}{\bibfnamefont{A.~E.} \bibnamefont{Ruuge}} \bibnamefont{and}
  \bibinfo{author}{\bibfnamefont{F.}~\bibnamefont{{van Oystaeyen}}},
  \bibinfo{journal}{{\it J. Math. Phys.}} \textbf{\bibinfo{volume}{{\bf 46}}},
  \bibinfo{pages}{052109} (\bibinfo{year}{2005}).

\bibitem[{\citenamefont{Planat}(2012)}]{planat-12}
\bibinfo{author}{\bibfnamefont{M.}~\bibnamefont{Planat}},
  \bibinfo{journal}{{\it Eur. Phys. J. Plus}} \textbf{\bibinfo{volume}{{\bf
  127}}}, \bibinfo{pages}{86} (\bibinfo{year}{2012}).

\bibitem[{\citenamefont{Waegell and Aravind}(2015)}]{waeg-aravind-jpa-15}
\bibinfo{author}{\bibfnamefont{M.}~\bibnamefont{Waegell}} \bibnamefont{and}
  \bibinfo{author}{\bibfnamefont{P.~K.} \bibnamefont{Aravind}},
  \bibinfo{journal}{{\it J. Phys. A}} \textbf{\bibinfo{volume}{{\bf 48}}},
  \bibinfo{pages}{225301} (\bibinfo{year}{2015}).

\bibitem[{\citenamefont{Waegell and Aravind}(2012)}]{waegel-aravind-12}
\bibinfo{author}{\bibfnamefont{M.}~\bibnamefont{Waegell}} \bibnamefont{and}
  \bibinfo{author}{\bibfnamefont{P.~K.} \bibnamefont{Aravind}},
  \bibinfo{journal}{{\it J. Phys. A}} \textbf{\bibinfo{volume}{{\bf 45}}},
  \bibinfo{pages}{405301} (\bibinfo{year}{2012}).

\bibitem[{\citenamefont{Harvey and
  Chryssanthacopoulos}(2012)}]{harv-cryss-aravind-12a}
\bibinfo{author}{\bibfnamefont{C.}~\bibnamefont{Harvey}} \bibnamefont{and}
  \bibinfo{author}{\bibfnamefont{J.}~\bibnamefont{Chryssanthacopoulos}},
  \bibinfo{type}{Tech. Rep.} \bibinfo{number}{PH-PKA-JC08},
  \bibinfo{institution}{Worcester Polytechnic Institute}
  (\bibinfo{year}{2012}), \bibinfo{note}{{\tt
  https://web.wpi.edu/Pubs/E-project/Available/E-project-042108-171725/unrestricted/MQPReport.pdf}}.

\bibitem[{\citenamefont{Aravind}(2002)}]{aravind-02}
\bibinfo{author}{\bibfnamefont{P.~K.} \bibnamefont{Aravind}},
  \bibinfo{journal}{{\it Found. Phys. Lett.}} \textbf{\bibinfo{volume}{{\bf
  15}}}, \bibinfo{pages}{397} (\bibinfo{year}{2002}).

\bibitem[{\citenamefont{{DiV}incenzo and Peres}(1997)}]{divinc-peres}
\bibinfo{author}{\bibfnamefont{D.~P.} \bibnamefont{{DiV}incenzo}}
  \bibnamefont{and} \bibinfo{author}{\bibfnamefont{A.}~\bibnamefont{Peres}},
  \bibinfo{journal}{{\it Phys. Rev. A}} \textbf{\bibinfo{volume}{{\bf 55}}},
  \bibinfo{pages}{4089} (\bibinfo{year}{1997}).

\bibitem[{\citenamefont{Planat and Saniga}(2012)}]{planat-saniga-12}
\bibinfo{author}{\bibfnamefont{M.}~\bibnamefont{Planat}} \bibnamefont{and}
  \bibinfo{author}{\bibfnamefont{M.}~\bibnamefont{Saniga}},
  \bibinfo{journal}{{\it Phys. Lett. A}} \textbf{\bibinfo{volume}{{\bf 376}}},
  \bibinfo{pages}{3485} (\bibinfo{year}{2012}).

\bibitem[{\citenamefont{Pavi{\v c}i{\'c}
  et~al.}(2005{\natexlab{c}})\citenamefont{Pavi{\v c}i{\'c}, Merlet, Mc{K}ay,
  and Megill}}]{pmmm05a-corr}
\bibinfo{author}{\bibfnamefont{M.}~\bibnamefont{Pavi{\v c}i{\'c}}},
  \bibinfo{author}{\bibfnamefont{J.-P.} \bibnamefont{Merlet}},
  \bibinfo{author}{\bibfnamefont{B.~D.} \bibnamefont{Mc{K}ay}},
  \bibnamefont{and} \bibinfo{author}{\bibfnamefont{N.~D.}
  \bibnamefont{Megill}}, \bibinfo{journal}{{\it J. Phys. A}}
  \textbf{\bibinfo{volume}{{\bf 38}}}, \bibinfo{pages}{1577}
  (\bibinfo{year}{2005}{\natexlab{c}}).

\bibitem[{\citenamefont{Larsson}(2002)}]{larsson}
\bibinfo{author}{\bibfnamefont{J.-{\AA}.} \bibnamefont{Larsson}},
  \bibinfo{journal}{{\it Europhys. Lett.}} \textbf{\bibinfo{volume}{{\bf 58}}},
  \bibinfo{pages}{799} (\bibinfo{year}{2002}).

\bibitem[{\citenamefont{Pavi{\v c}i{\'c}
  et~al.}(2005{\natexlab{d}})\citenamefont{Pavi{\v c}i{\'c}, Merlet, Mc{K}ay,
  and Megill}}]{pmmm04c}
\bibinfo{author}{\bibfnamefont{M.}~\bibnamefont{Pavi{\v c}i{\'c}}},
  \bibinfo{author}{\bibfnamefont{J.-P.} \bibnamefont{Merlet}},
  \bibinfo{author}{\bibfnamefont{B.~D.} \bibnamefont{Mc{K}ay}},
  \bibnamefont{and} \bibinfo{author}{\bibfnamefont{N.~D.}
  \bibnamefont{Megill}}, \bibinfo{journal}{{\it J. Phys. A}}
  \textbf{\bibinfo{volume}{{\bf 38}}}, \bibinfo{pages}{1577}
  (\bibinfo{year}{2005}{\natexlab{d}}), \bibinfo{note}{and {\bf 38}, 3709
  (2005) (corrigendum)}.

\bibitem[{\citenamefont{Held}(2009)}]{held-09}
\bibinfo{author}{\bibfnamefont{C.}~\bibnamefont{Held}}, in
  \emph{\bibinfo{booktitle}{Compendium of Quantum Physics}}, edited by
  \bibinfo{editor}{\bibfnamefont{D.}~\bibnamefont{Greenberger}},
  \bibinfo{editor}{\bibfnamefont{K.}~\bibnamefont{Hentschel}},
  \bibnamefont{and} \bibinfo{editor}{\bibfnamefont{F.}~\bibnamefont{Weinert}}
  (\bibinfo{publisher}{Springer}, \bibinfo{address}{New-York},
  \bibinfo{year}{2009}), pp. \bibinfo{pages}{331--335}.

\bibitem[{\citenamefont{Yu and Oh}(2012)}]{yu-oh-12}
\bibinfo{author}{\bibfnamefont{S.}~\bibnamefont{Yu}} \bibnamefont{and}
  \bibinfo{author}{\bibfnamefont{C.~H.} \bibnamefont{Oh}},
  \bibinfo{journal}{{\it Phys. Rev. Lett.}} \textbf{\bibinfo{volume}{{\bf
  108}}}, \bibinfo{pages}{030402} (\bibinfo{year}{2012}).

\bibitem[{\citenamefont{Kochen}(2015)}]{kochen-15}
\bibinfo{author}{\bibfnamefont{S.}~\bibnamefont{Kochen}},
  \emph{\bibinfo{title}{Private communication}} (\bibinfo{year}{2015}).

\bibitem[{\citenamefont{Gould and
  Aravind}(2010)}]{peres-penrose-gould-aravind09}
\bibinfo{author}{\bibfnamefont{E.}~\bibnamefont{Gould}} \bibnamefont{and}
  \bibinfo{author}{\bibfnamefont{P.~K.} \bibnamefont{Aravind}},
  \bibinfo{journal}{{\it Found. Phys.}} \textbf{\bibinfo{volume}{{\bf 40}}},
  \bibinfo{pages}{1096} (\bibinfo{year}{2010}).

\bibitem[{\citenamefont{Waegell and Aravind}(2013)}]{waeg-aravind-pra-13}
\bibinfo{author}{\bibfnamefont{M.}~\bibnamefont{Waegell}} \bibnamefont{and}
  \bibinfo{author}{\bibfnamefont{P.~K.} \bibnamefont{Aravind}},
  \bibinfo{journal}{{\it Phys. Rev. A}} \textbf{\bibinfo{volume}{{\bf 88}}},
  \bibinfo{pages}{012102} (\bibinfo{year}{2013}).

\end{thebibliography}
\end{document}